\DeclarePairedDelimiter\abs{\lvert}{\rvert}%
\DeclarePairedDelimiter\norm{\lVert}{\rVert}%
\newcommand{\coherenceSymbols}{\tau_c}
\newcommand{\ulCoherenceSymbols}{C_u}
\newcommand{\dlCoherenceSymbols}{C_d}
\newcommand{\ulSnr}{\mu}
\newcommand{\dlSnr}{\lambda}
\newcommand{\lmmseChannel}{\hat{\mathbf{h}}}
\newcommand{\leastSqrChannel}{\hat{\mathbf{h}}^{LS}}
\newcommand{\eWiseChannel}{\hat{\mathbf{h}}^{\mathrm{el}}}
\newcommand{\yForLmmse}{\mathbf{Y}^{(p)}_{j}}
\newcommand{\pilotForLmmse}{\mathbf{p}}
\newcommand{\yForRhat}{\mathbf{Y}^{(r)}_{j}}
\newcommand{\pilotForRhat}{\boldsymbol{\phi}}
\newcommand{\rawRestimate}{\ddot{\mathbf{R}}}
\newcommand{\regRestimate}{\hat{\mathbf{R}}}
\newcommand{\elementwiseR}{\mathbf{S}}
\newcommand{\elementwiseRestimate}{\ddot{\elementwiseR}}
\newcommand{\regElementwiseRestimate}{\hat{\elementwiseR}}
\newcommand{\pilotLengthForR}{N_R}
\newcommand{\biasFactorForR}{\alpha_R}
\newcommand{\regRactual}{\bar{\mathbf{R}}}
\newcommand{\regElementwiseRActual}{\bar{\elementwiseR}}
\newcommand{\biasRMatrix}{\mathbf{R}_b}
\newcommand{\pilotLengthThreshold}{\bar{N}_R}
\newcommand{\rawQestimate}{\hat{\mathbf{Q}}}
\newcommand{\elementwiseQ}{\mathbf{P}}
\newcommand{\elementwiseQestimate}{\hat{\elementwiseQ}}
\newcommand{\pilotLengthForQ}{N_Q}
\newcommand{\biasFactorForQ}{\alpha_Q}
\newcommand{\regElementwiseQestimate}{\hat{\elementwiseQ}}
\newcommand{\unRegElementwiseQestimate}{\ddot{\elementwiseQ}}
\newcommand{\biasQMatrix}{\elementwiseQ_b}
\newcommand{\expectationMatrixFirstOrder}{\mathbf{E}}
\newcommand{\expectationMatrixSecondOrder}{\mathbf{G}}
\newcommand{\contamination}{\mathbf{g}}
\newcommand{\singleObservationR}{\breve{\mathbf{R}}}
\newcommand{\elementWiseSingleObservationR}{\breve{\elementwiseR}}
\newcommand{\randPhase}{\theta}
\newcommand{\Rsum}{\mathbf{R}_{s}}
\newcommand{\dlRsum}{\mathbf{R}^{(dl)}_{s}}
\newcommand{\elementWiseRsum}{\mathbf{S}_{s}}
\newcommand{\arbSymMatrix}{\mathbf{C}}
\newcommand{\arbSqrMatrix}{\mathbf{A}}
\newcommand{\arbDiaMatrix}{\mathbf{D}}
\newcommand{\wishartQ}{\tilde{\mathbf{Q}}}
\newcommand{\elementWiseWishartQ}{\tilde{\elementwiseQ}}
\newcommand{\nCubeConstant}{\kappa_1}
\newcommand{\nSqareConstant}{\kappa_2}
\newcommand{\elementWiseNSqrConst}{\kappa_3}
\newcommand{\elementWiseNCubeConst}{\kappa_4}
\newcommand{\genericConstantOne}{\tau}
\newcommand{\arbWishart}{\mathbf{X}}
\newcommand{\arbDiagonalWishart}{\mathbf{Y}}
\newcommand{\genericWEst}{\hat{\mathbf{W}}}
\newcommand{\unRegWEst}{\hat{\mathbf{W}}}
\newcommand{\elementwiseWest}{\hat{\mathbf{W}}}
\newcommand{\unRegWActual}{\bar{\mathbf{W}}}
\newcommand{\regWActual}{\bar{\mathbf{W}}}
\newcommand{\elementwiseWActual}{\bar{\mathbf{W}}}
\newcommand{\Rsumgen}{\bar{\mathbf{R}}_{s}}
\newcommand{\Ssumgen}{\bar{\mathbf{S}}_{s}}
\newcommand{\trace}{\mathrm{tr}}
\newcommand{\diag}{\mathrm{diag}}
\DeclareMathOperator*{\argmin}{\arg\min}
\newtheorem{lemmas}{Lemma}
\newtheorem{theorems}{Theorem}
\newtheorem{remark}{Remark}
\begin{document}
	\title{On the Spectral Efficiency of Massive MIMO Systems with Imperfect Spatial Covariance Information}
	
	%
	
	\author{Atchutaram~K.~Kocharlakota,~\IEEEmembership{Student Member,~IEEE,}
		Karthik~Upadhya,~\IEEEmembership{Member,~IEEE,}
		and~Sergiy~A.~Vorobyov,~\IEEEmembership{Fellow,~IEEE}.
		\thanks{The first and last authors are and the second author was with  the  Department  of  Signal  Processing  and  Acoustics, Aalto  University,  FI-00076 Aalto, Finland
			(e-mail: kameswara.kocharlakota@aalto.fi; karthik.upadhya@gmail.com; svor@ieee.org)}
	}
	
	
	\maketitle
	
	\begin{abstract}
		This paper studies the impact of imperfect channel covariance information on the uplink (UL) and downlink (DL) spectral efficiencies (SEs) of a time-division duplexed (TDD) massive multiple-input multiple-output (MIMO) system. We derive closed-form expressions for the UL and DL average SEs by considering linear minimum mean squared (LMMSE)-type  and element-wise LMMSE-type channel estimation that represent LMMSE and element-wise LMMSE with estimated covariance matrices, respectively. The closed-form expressions of these average SEs are functions of the number of observations used for estimating the spatial covariance matrices of individual and contaminated channels of a target user, and thus enable us to select these key parameters to achieve the desired SE. We present a theoretical analysis of SE behavior for different values of these parameters, followed by simulations, which also demonstrate and validate this behavior. Specifically, we present the SEs computed using estimated covariance matrices and show the accurate agreement between the theoretical and simulated SEs as functions of the number of observations for estimating the covariance matrices of individual and contaminated channels of a user. We also compare these SEs across channel estimation techniques using analytical and simulation studies.
	\end{abstract}
	
	\begin{IEEEkeywords}
		Spectral efficiency, massive multiple-input multiple-output (MIMO), covariance estimation,
		channel estimation, pilot contamination.
	\end{IEEEkeywords}

	\IEEEpeerreviewmaketitle

	\section{Introduction}
	\IEEEPARstart{A } multi-user massive multiple-input multiple-output (MIMO) system comprises multiple cells, each having a base station (BS) with a large number of antennas (hundreds) to serve multiple users (tens) within the cell. It is considered to be one of the key technologies for the fifth-generation (5G) cellular systems due to the considerable improvement in spectral efficiency (SE) through spatial multiplexing \cite{5595728,6736761,6798744,6824752,6375940} achieved with low computational complexity \cite{5595728,6415389,6415388}. However, acquiring channel state information (CSI) at the base station (BS) is essential to realize the benefits of a massive MIMO system.
	
	In a time-division duplexing (TDD) massive MIMO system, CSI is acquired through uplink (UL) pilots. In time-variant channels, the channels in two different coherence blocks, which is a collection of symbols within a coherence time and bandwidth, are uncorrelated. Consequently, the channel has to be estimated in each coherence block. The number of orthogonal pilots available for channel estimation in a coherence block is limited by the number of available symbols in the coherence block that are not reserved for UL data and DL data, and as a result, UL pilot sequences need to be reused by users across the cells, causing the pilot contamination problem \cite{5595728, 5898372,7996671}. 
	
	With a matched filter receiver combiner, the interference caused to a target user by the users sharing the same pilot is shown to impose a ceiling on the throughput \cite{5595728} as the number of antennas at the BS grows to infinity. This ceiling is due to both the coherent beamforming gain as well as the coherent interference from pilot contamination that increases proportionately with the number of antennas. Several pilot decontamination techniques have been studied to overcome this problem \cite{5898372,7294693,6756975,6415397,7472301,7865983}.
	
	Despite the presence of pilot contamination, under the assumption that the covariance matrices of interfering users are asymptotically linearly independent to each other, the sum rate of the massive MIMO system has been recently proven to be unbounded \cite{8094949}. However, the authors assume that contamination-free covariance matrices of individual users are available at the BS, while, in practice, these covariance matrices also have to be estimated at the BS. Therefore, it is essential to study the performance of a massive MIMO system for a more realistic case where the covariance matrices are estimated. Nonetheless, covariance matrix estimation is a non-trivial task because the channel estimates from which the covariance matrix estimates are obtained are themselves contaminated. Naively utilizing the contaminated channel estimates in a sample covariance estimator will result in the target user covariance matrix estimate containing the covariance matrices of the interference users.
	
	Methods for estimating the individual covariance matrices in the presence of pilot contamination have been recently studied in \cite{caire2017massive,neumann2018covariance,bjornson2016imperfect,upadhya2018covariance}. In all these works, the authors assume that the channel covariance matrices are constant across multiple coherence blocks, and then, the observations from a few of these coherence blocks are used to estimate the covariance matrices. In \cite{caire2017massive}, the authors first estimate the angle-delay power spread function from the contaminated channel estimates of multiple coherence blocks, then use this function for supervised/unsupervised clustering of the multipath components belonging to the target user. Finally, they use the clusters to estimate the spatial covariance matrix of the target user. In \cite{neumann2018covariance}, the authors develop a method where the pilot allocation is changed in each coherence block. The channel estimates obtained from these blocks are then used to obtain a maximum-likelihood estimate of the contamination-free covariance matrix. Work \cite{bjornson2016imperfect} presents two methods which avoid contamination in the covariance matrices by utilizing dedicated orthogonal pilots for each user for estimating its individual spatial covariance matrix. In \cite{upadhya2018covariance}, a new pilot structure and a covariance matrix estimation method are developed that offer higher throughput and lower mean squared error (MSE) of the channel estimates than earlier methods. Although \cite{upadhya2018covariance} requires additional pilots for estimating the individual covariance matrices of each user, it does not assume any additional structures on the covariance matrices of the users, and it does not require backhaul communication between the neighboring cells, unlike \cite{caire2017massive} and \cite{neumann2018covariance}, respectively. Moreover, since the additional pilots in \cite{upadhya2018covariance} are not dedicated to each user as in \cite{bjornson2016imperfect}, the number of additional pilots in \cite{upadhya2018covariance} does not grow with the total number of users in the entire system. Therefore, in this work, we consider the covariance estimation method of \cite{upadhya2018covariance} to study the performance in the massive MIMO system.
	
	Utilizing the estimated covariance matrices for channel estimation results in a trade-off in the SE, since increasing the number of additional pilots to estimate the covariance matrices will not only improve the quality of the covariance estimate (and hence, the channel estimate) but also increase the estimation overhead. Consequently, the number of additional pilots for estimating the covariance matrices becomes a key trade-off parameter for the optimal performance. Therefore, closed-form expressions that relate the SE for UL and DL channels with the number of additional covariance pilots prove to be of key importance. Closed-form expressions for UL ergodic achievable SE in single and multi-cell massive MIMO systems with various linear receive-combiners designed using the minimum mean-squared error (MMSE) channel estimate have been derived in \cite{ngo2013energy} and \cite{ngo2013multicell}, respectively. Similar expressions for the achievable SE in the DL have been derived in \cite{5898372}. However, the closed-form expressions in the aforementioned articles have been derived under the assumption of imperfect CSI and perfect covariance information. Closed-form expressions for the spectral efficiency expressions for the case of estimated covariance matrices have not yet been derived, to the best of our knowledge.
	
	In this paper, we derive closed-form expressions for the average UL, and DL SEs in a massive MIMO system with LMMSE-type/element-wise LMMSE-type channel estimation that uses estimated covariance matrices, obtained using the method in  \cite{upadhya2018covariance}, in LMMSE/element-wise LMMSE channel estimate \footnote{Some preliminary results are also reported in \cite{UsICASSP19}.}. Note that, in this paper, we use LMMSE-type/element-wise LMMSE-type to denote the channel estimation with estimated covariance matrices, and use LMMSE/element-wise LMMSE to denote channel estimation with true covariance matrices.

	The following are the contributions of this paper.
	\begin{itemize}
		\item We derive closed-form expressions for the average UL and DL spectral efficiencies when the LMMSE-type and element-wise LMMSE-type channel estimates are used in a matched filter combiner.
		\item We also derive expressions for the average UL and DL SE when the regularized covariance matrix estimates are used in the element-wise LMMSE-type channel estimates.
		\item Using theoretical and simulation studies on the derived SE expressions, we find out and demonstrate that the number of additional pilots needed for covariance estimation as a key trade-off parameter.
		\item We compare the performance of the element-wise LMMSE-type channel estimate with the LMMSE-type channel estimate. To the best of our knowledge, this is the first work that quantitatively compares the average UL/DL SE obtained with LMMSE-type and element-wise LMMSE-type estimates.
	\end{itemize}
	
	The paper is organized as follows. In Section~II, we describe the system model along with a detailed explanation on the channel estimation and covariance matrices estimation methods. Section~III reports our main derivations in order to obtain closed-form expressions for the UL and DL SEs for three different combinations of channel estimation techniques. We present a detailed theoretical discussion on the derived closed-form expressions in Section IV, where we analyze the behavior of SE as a function of pilot overhead for covariance estimation. Section~V provides the simulation results and their comparison with the main results obtained in Section~III. We conclude this work in Section~VI. Technical proofs of lemmas and theorems in the paper appear in appendices at the end of the paper.
	
	\textit{Notation}: We use boldface capital letters for matrices, and boldface lowercase letters for vectors. The superscripts $(\cdot)^*$, $(\cdot)^{\intercal}$, and $(\cdot)^H$ denote element-wise conjugate, transpose, and Hermitian transpose operations, respectively. Moreover, $\mathcal{CN}(\mathbf{m}, \mathbf{R})$ denotes (circularly symmetric) complex Gaussian random vector with mean vector $\mathbf{m}$ and covariance matrix $\mathbf{R}$, while $\mathcal{W}(N, \mathbf{R})$ denotes Wishart random matrix with $N$ degrees of freedom and $\mathbf{R}$ is the covariance matrix that corresponds to underlying Gaussian random vectors. In addition, $\mathcal{U}[x_1,x_2]$ stands for the uniform distribution between $x_1$ and $x_2$. The element in $i^{th}$ row and $j^{th}$ column of the matrix $\mathbf{A}$ is denoted as $[\mathbf{A}]_{ij}$, $\mathbf{I}$ stands for an identity matrix (of appropriate size), $\diag(\mathbf{A})$ is a diagonal matrix whose diagonal elements are same as the diagonal elements of the matrix $\mathbf{A}$. We use $\trace(\cdot)$ to denote trace of a matrix, $\norm{\cdot}$ to denote $l_2$ norm of a vector or a matrix, i.e., Frobenius norm, and $\mathbb{E}\{\cdot\}$ stands for the mathematical expectation. Finally, the symbol $\delta_{ij}$ is the Kronecker delta such that $\delta_{ij}$ = 1 if $i = j$, and 0 otherwise. 
	\vspace{-6mm}
	
	\section{System Model}
	We consider a massive MIMO system with $L$ cells, each having a BS with $M$ antennas and serving $K$ single-antenna users. The channel between user $(l, k)$ ($k^{th}$ user in $l^{th}$ cell) and BS $j$ is denoted as $\mathbf{h}_{jlk} \in \mathbb{C}^M$ and is assumed to be distributed as $\mathcal{CN}(\mathbf{0},\mathbf{R}_{jlk})$, where ${\mathbf{R}_{jlk}\triangleq \mathbb{E}\{\mathbf{h}_{jlk}\mathbf{h}_{jlk}^H\}}$ is the spatial covariance matrix. We assume the block-fading model where the channel is assumed to be constant over the coherence bandwidth $B_c$ and coherence time $T_c$. In other words, the channel is assumed to be constant over a coherence block containing $\coherenceSymbols = B_c T_c$ symbols. 
	
	We consider TDD transmission and each coherence block is divided into slots for UL pilots, UL and DL data. The number of data symbols in the UL and DL time slot is denoted as $\ulCoherenceSymbols$ and $\dlCoherenceSymbols$, respectively. The channel is assumed to be reciprocal, i.e., the DL channel between BS $j$ and user $(l, k)$ can be written as $\mathbf{h}^*_{jlk}$, and consequently, the channel estimated in the UL is used in designing the DL precoding matrix. This is represented in Fig.~\ref{fig:pilotpositioning}(\subref{fig:coherenceblock_ChEst}).
	
	We consider two types of UL pilots, namely, (i) pilots for estimating the channel (also referred to as ChEst pilots) and (ii) pilots for estimating the covariance matrix (referred to as CovEst pilots). Both ChEst pilots and CovEst pilots are assumed to be of length $P$ symbols.
	
	The spatial covariance matrices are assumed to be constant over a considerably longer time-interval and bandwidth than a single coherence block \cite{caire2017massive,bjornson2016imperfect,neumann2018covariance,upadhya2018covariance}. Specifically, we assume that the covariance matrices are coherent over the time-interval $T_s$ and system bandwidth $B_s$, which implies that they can be assumed to be constant over $\tau_s = B_sT_s/B_cT_c = B_sT_s/\tau_c$ coherence blocks (usually several tens of thousands of blocks in practice). This time-frequency grid over which the second-order statistics of the channel are assumed to be constant is illustrated in Fig.~\ref{fig:pilotpositioning}(\subref{fig:fullgrid}).
	
	Each of the $\tau_s$ coherence blocks contain ChEst pilots for channel estimation, whereas only $N_R$ out of the $\tau_s$ coherence blocks contain CovEst pilots in addition to the ChEst pilots (as can be seen in Fig.~\ref{fig:pilotpositioning}(\subref{fig:fullgrid})). The coherence blocks that contain the CovEst pilots are depicted in Fig.~\ref{fig:pilotpositioning}(\subref{fig:coherenceblock_CovEst}).
	
	\begin{figure}
		\begin{subfigure}{.3\textwidth}
			\centering
			\includegraphics[width=.8\linewidth,trim={0cm 0cm 0cm 2cm}]{./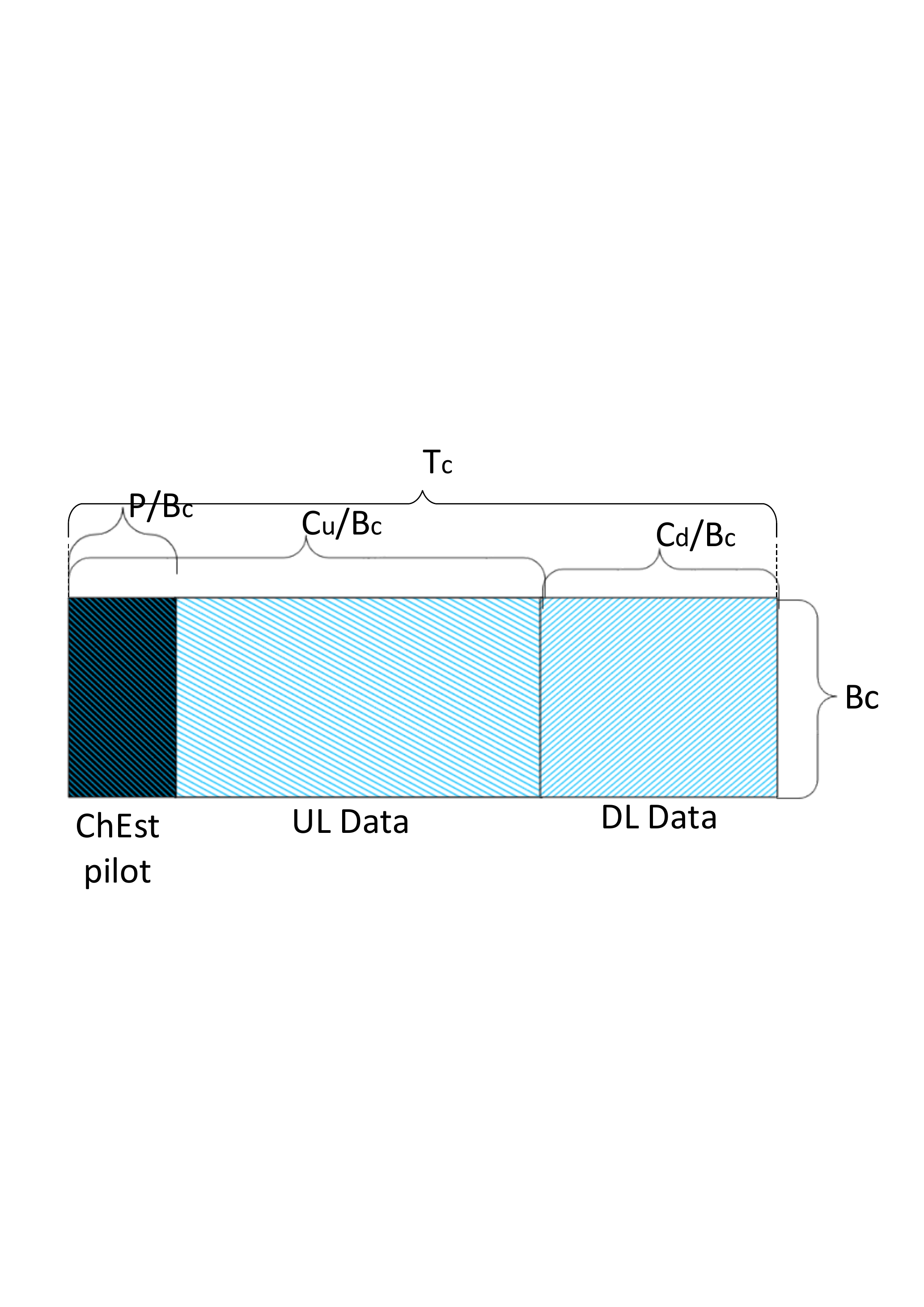}
			\caption{Coherence block with only ChEst pilots.}
			\label{fig:coherenceblock_ChEst}
		\end{subfigure}
		\hspace{0.3cm}
		\begin{subfigure}{.3\textwidth}
			\centering
			\includegraphics[width=.8\linewidth,trim={0cm 0cm 0cm 2cm}]{./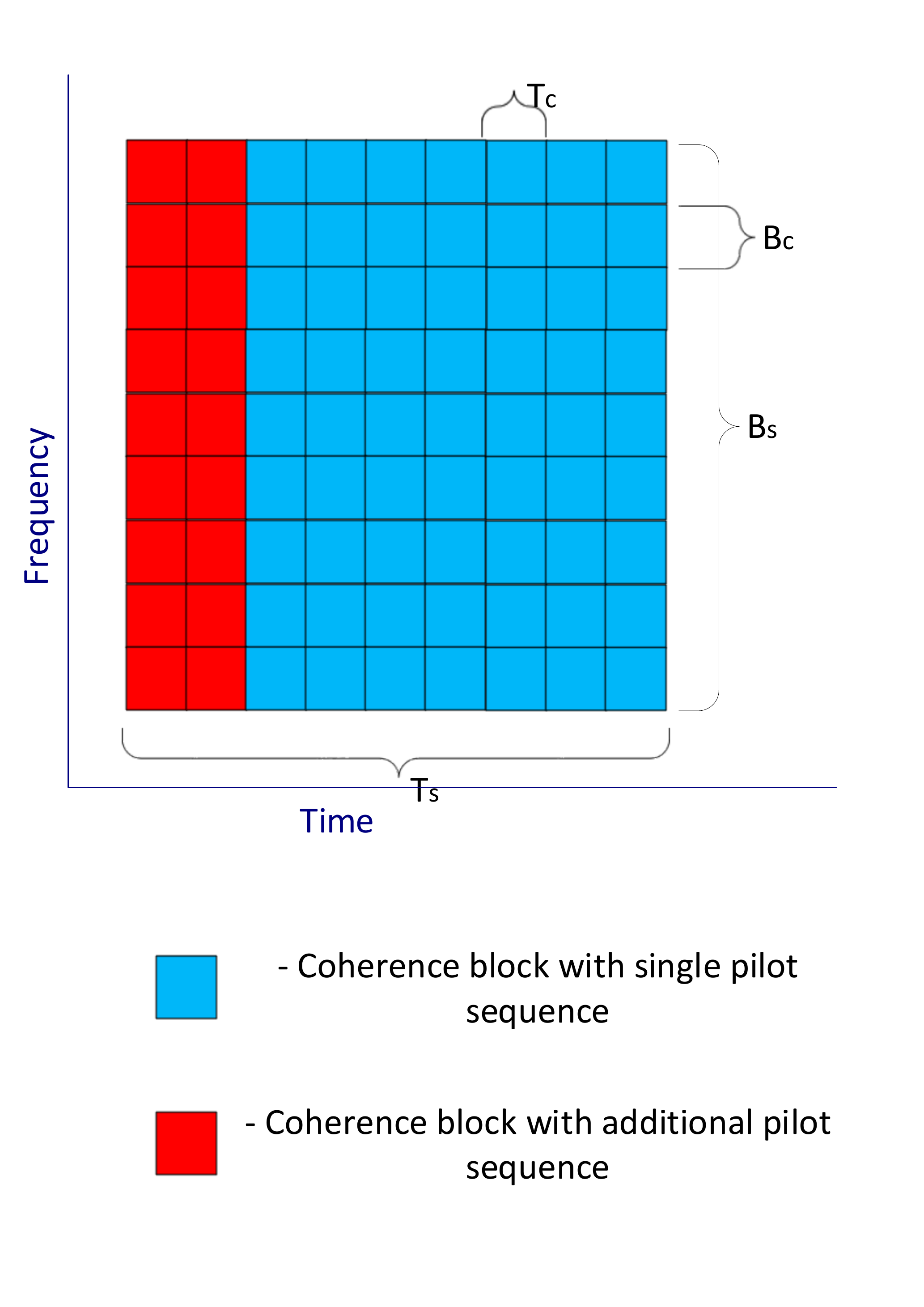}
			\caption{Grid of coherence blocks with coherent covariance matrices.}
			\label{fig:fullgrid}
		\end{subfigure}%
		\hspace{0.3cm}
		\begin{subfigure}{.3\textwidth}
			\centering
			\includegraphics[width=.8\linewidth,trim={0cm 0cm 0cm 2cm}]{./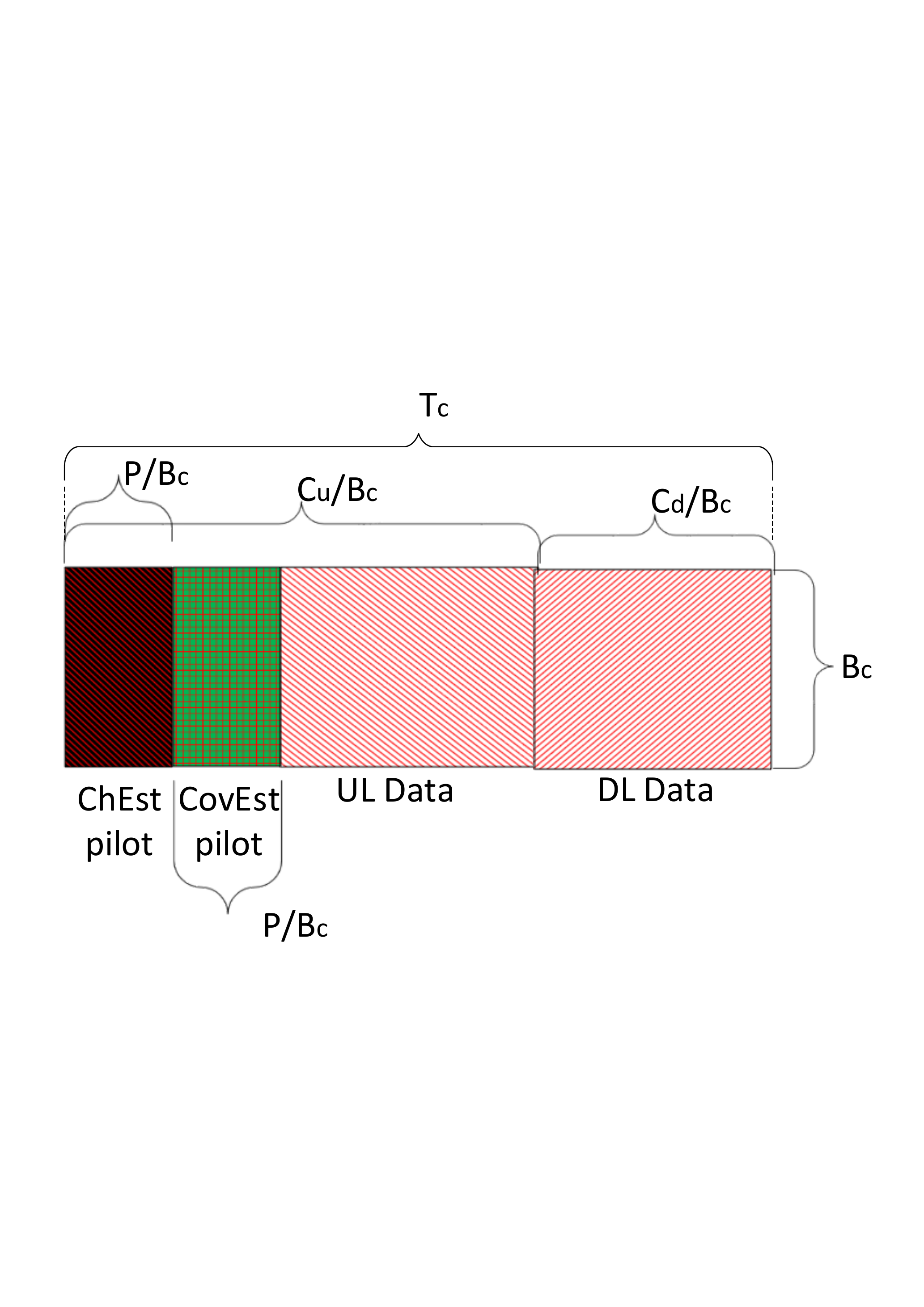}
			\caption{Coherence block with additional CovEst pilots.}
			\label{fig:coherenceblock_CovEst}
		\end{subfigure}
		\caption{Time frequency grid and pilot positioning. \vspace{-6mm}}
		\label{fig:pilotpositioning}
	\end{figure}
	
	The UL received signal, $\mathbf{Y}_j[n] \in \mathbb{C}^{M \times \ulCoherenceSymbols}$, in the $n^{th}$ coherence block at BS $j$ can be written as \vspace{-3mm}
	\begin{align}
	\mathbf{Y}_j[n] = \sum_{l=1}^{L}\sum_{k=1}^{K} \sqrt{\ulSnr}\mathbf{h}_{jlk} \mathbf{x}^{\intercal}_{lk}[n] + \mathbf{N}_j[n]
	\label{eqn:receivedSignalData}
	\end{align}
	where $\mathbf{x}_{lk} \in \mathbb{C}^{\ulCoherenceSymbols}$ is the signal transmitted by user $(l, k)$, $\mathbf{N}_j \in \mathbb{C}^{M\times \ulCoherenceSymbols}$ is the additive white Gaussian noise at the BS, and $\ulSnr$ is the UL transmit power. The transmitted data $ \mathbf{x}_{lk} $ is assumed to be distributed as $\mathbf{x}_{lk} \sim \mathcal{CN}(\mathbf{0},\mathbf{I})$ whereas the elements of $\mathbf{N}_j$ are assumed to be identically and independently distributed (i.i.d) as $ \mathcal{CN}(0,1)$.
	
	In the DL, the received signal $\mathbf{z}_{ju}[n] \in \mathbb{C}^{\dlCoherenceSymbols}$ at user $(j,u)$ in the $n^{th}$ coherence block can be written as \vspace{-3mm}
	\begin{align*}
	\mathbf{z}_{ju}[n] = \sum_{l=1}^{L} \sum_{k=1}^{K} \sqrt{\dlSnr} (\mathbf{h}^H_{jlu} \mathbf{b}_{lk} )\mathbf{d}_{lk}[n] + \mathbf{e}[n]
	\end{align*}
	where $\mathbf{d}_{lk} \in \mathbb{C}^{\dlCoherenceSymbols}$ is the payload data from BS $l$ to its user $(l,k)$, $\mathbf{b}_{lk} \in \mathbb{C}^{M}$ is the corresponding precoding vector normalized such that the average transmitted power is $\lambda$, i.e., $\mathbb{E}\{\norm{\mathbf{b}_{lk}}^2\}= 1$, and ${\mathbf{e} \in \mathbb{C}^{\dlCoherenceSymbols}}$ is the additive white Gaussian noise distributed as $\mathcal{CN}(\mathbf{0}, \mathbf{I})$.
	
	
	In the following subsections, we explain the pilot structure in detail and describe the methods used for channel and covariance matrix estimation. \vspace{-3mm}
	\subsection{Channel Estimation}
	A dedicated set of $P$ ($\geq K$) symbols is allocated to UL pilots for channel estimation in each coherence block, as shown in Figs.~\ref{fig:pilotpositioning}(\subref{fig:coherenceblock_ChEst}) and \ref{fig:pilotpositioning}(\subref{fig:coherenceblock_CovEst}). In other words, let $\pilotForLmmse_k \in \mathbb{C}^{P}$ denote the ChEst pilot sequence used by the $k^{th}$ user in any of the $L$ cells. Then, for another user $m$ in the same cell, we have $\pilotForLmmse^{H}_k \pilotForLmmse_m = P\delta_{km}$. We assume a pilot-reuse factor of $1$, implying that the same $P$ pilots are reused in each cell and each user is randomly allocated one of these pilots for channel estimation \footnote{Deriving the results in Section \ref{sec:mainResults} for arbitrary pilot-reuse factors greater than $1$ is straightforward.}.
	
	The pilot transmissions in all cells are assumed to be synchronized. Then, the received signal at BS $j$ during pilot transmissions in the $n^{th}$ coherence block (denoted as $\yForLmmse[n]$) can be written as \vspace{-3mm}
	\begin{align}
	\yForLmmse[n] = \sum_{l=1}^{L}\sum_{k=1}^{K} \sqrt{\ulSnr}\mathbf{h}_{jlk} \pilotForLmmse^{\intercal}_k + \mathbf{N}^{(p)}_j[n] \label{Yp}
	\end{align}
	where $\mathbf{N}^{(p)}_j[n] \in \mathbb{C}^{M\times P}$ is the noise during pilot transmission. 
	
	We consider LMMSE and element-wise LMMSE techniques for estimating $\mathbf{h}_{jlk}$ from the observed signal $\yForLmmse$ given in \eqref{Yp}. In what follows, we first discuss these estimation techniques when the channel covariance information is available at the BS, and subsequently, we discuss the practical case where this information is estimated at the BS.
	\subsubsection{LMMSE Channel Estimation}
	\label{subsubsec:lmmse}
	From \eqref{Yp}, the least-squares (LS) channel estimate of user $(j,u)$ at BS $j$ in the $n^{th}$ coherent block (denoted as $\leastSqrChannel_{jju}[n]$) can be obtained by solving the optimization problem \vspace{-3mm}
	\begin{align*}
	\leastSqrChannel_{jju}[n] &= \argmin_{\mathbf{g}} \quad \norm{\yForLmmse[n] - \sqrt{\ulSnr}\mathbf{g}\pilotForLmmse^{\intercal}_{u}}^2
	\end{align*}
	the solution of which is given by \vspace{-3mm}
	\begin{align*}
	&\leastSqrChannel_{jju}[n] = \frac{1}{P\sqrt{\ulSnr}}\yForLmmse[n]\pilotForLmmse^{*}_{u}
	= \mathbf{h}_{jju}+\sum_{l\ne j} \mathbf{h}_{jlu} + \frac{1}{P\sqrt{\mu}}\mathbf{N}^{(p)}_j[n]\pilotForLmmse^{*}_{u}.
	\end{align*}
	As the aforementioned LS channel estimate serves as a sufficient statistic for $\mathbf{h}_{jju}$, the  LMMSE estimate of the channel of a target user $(j, u)$ at BS $j$ in the $n^{th}$ coherent block, $\lmmseChannel_{jju}^{LMMSE}[n] = \mathbf{W}\leastSqrChannel_{jju}[n]$, can be obtained by solving for $\mathbf{W}$ as follows \vspace{-3mm}
	\begin{align*}
	\mathbf{W}&=  \argmin_{\mathbf{G}} \quad \mathbb{E}\{\norm{\mathbf{h}_{jju} - \mathbf{G}\leastSqrChannel_{jju}[n]}^2\}.
	\end{align*}
	Here, the expectation is with respect to the additive noise, and the channel realizations of all the users in the system, in the $n^{th}$ coherent block. Finally, the resultant LMMSE channel estimate is given by \vspace{-3mm}
	\begin{align}
	&\lmmseChannel_{jju}^{LMMSE}[n]= \mathbf{R}_{jju}\mathbf{Q}^{-1}_{ju}\leastSqrChannel_{jju}[n] \label{lmmse}\\
	&\mathbf{Q}_{ju} \triangleq \mathbb{E}\{\leastSqrChannel_{jju}[n](\leastSqrChannel_{jju}[n])^{H}\} = \sum_{l=1}^{L}\mathbf{R}_{jlu} + \frac{1}{P\ulSnr}\mathbf{I} \;. \nonumber
	\end{align}
	
	\subsubsection{Element-wise LMMSE Channel Estimation}
	\label{subsubsec:elementwise-lmmse}
	Obtaining the LMMSE channel estimates involves inverting an $M\times M$ matrix, which is computationally expensive when $M$ is large. An alternative approach is to use the element-wise LMMSE estimate in which the correlation between channel coefficients across the antennas is neglected and only the diagonal elements of the covariance matrices are considered for channel estimation. This technique has the additional advantage that it requires a fewer number of samples/pilots for the covariance estimation that does not grow with $M$ \cite{8094949}. 
	
	The element-wise LMMSE estimate of the channel can be obtained as
	\begin{align}
	[\hat{\mathbf{h}}^{\mathrm{el-LMMSE}}_{jju}[n]]_{p} &= \frac{[\elementwiseR_{jju}]_{pp}}{[\elementwiseQ_{ju}]_{pp}}[\leastSqrChannel_{jju}[n]]_{p}, \quad p \in \{1,\dots, M\} \label{el_lmmse}
	\end{align}
	where $\elementwiseR_{jju} \triangleq \diag(\mathbf{R}_{jju})$ and $\elementwiseQ_{ju} \triangleq \diag(\mathbf{Q}_{ju})$. Note that the structure in \eqref{el_lmmse} directly follows from the structure \eqref{lmmse} by ignoring the non diagonal elements of the covariance matrices.
	
	\subsubsection{LMMSE-type and Element-wise LMMSE-type Channel Estimation With Imperfect Channel Covariance Matrices}

	Although the channel estimates in Sections \ref{subsubsec:lmmse} and \ref{subsubsec:elementwise-lmmse} assume that the covariance information is known, in practice it has to be estimated at the BS. Therefore, it is reasonable to replace these matrices with estimated covariance matrices. Then the LMMSE-type and element-wise LMMSE-type channel estimates given in \eqref{lmmse} and \eqref{el_lmmse} can be re-written as
	\begin{align}
	\lmmseChannel_{jju}[n] &= \regRestimate_{jju}\rawQestimate^{-1}_{ju}\leastSqrChannel_{jju}[n] \label{estimateCovaraianceLMMSE}\\
	[\eWiseChannel_{jju}[n]]_{p} &= \frac{[\regElementwiseRestimate_{jju}]_{pp}}{[\elementwiseQestimate_{ju}]_{pp}}[\leastSqrChannel_{jju}[n]]_{p}, \quad p \in \{1,\dots, M\} \label{el_estimateCovaraianceLMMSE}
	\end{align}
	where $\regRestimate_{jju}$, $\rawQestimate_{ju}$, $\regElementwiseRestimate_{jju}$, and $\elementwiseQestimate_{ju}$ are estimates of $\mathbf{R}_{jju}$, $\mathbf{Q}_{ju}$, $\elementwiseR_{jju}$, and $\elementwiseQ_{ju}$, respectively. 
	
	Note that while the channel estimates in \eqref{estimateCovaraianceLMMSE} and \eqref{el_estimateCovaraianceLMMSE} have the same structure as in \eqref{lmmse} and \eqref{el_lmmse}, they are formally not the LMMSE and element-wise LMMSE channel estimates because of the fact that they utilize the estimated covariance matrices. Consequently, we have removed the LMMSE superscript in \eqref{estimateCovaraianceLMMSE} and \eqref{el_estimateCovaraianceLMMSE} to make this distinction.
	
	In Section \ref{sec:mainResults}, we use \eqref{estimateCovaraianceLMMSE} and \eqref{el_estimateCovaraianceLMMSE} as the channel estimates. The following subsection is dedicated to describing the pilot structures and the techniques for estimating these matrices. \vspace{-3mm}
	
	\subsection{Covariance Matrix Estimation}
	In a multi-cell massive MIMO system, since the channel estimates are contaminated, estimating contamination-free spatial covariance matrices of individual users, i.e., $\mathbf{R}_{jlk}$ from these channel estimates is non-trivial. Naively using the channel estimates in a sample covariance estimator will result in the estimate of the covariance matrix of the target user being contaminated by the covariance matrices of users that share the same pilot with the target user. 
	
	Several methods addressing this problem have been proposed in recent literature \cite{caire2017massive,bjornson2016imperfect,neumann2018covariance,upadhya2018covariance}. However, among these methods, only the estimators in \cite{bjornson2016imperfect} and \cite{upadhya2018covariance} are in closed-form and consequently, lend themselves to analysis. Moreover, since \cite{upadhya2018covariance} is seen to outperform \cite{bjornson2016imperfect}, we select the estimator in \cite{upadhya2018covariance} for performance analysis when the estimate is used for both LMMSE-type and element-wise LMMSE-type channel estimation. 
	
	In this subsection, we briefly describe the pilot structure introduced in \cite{upadhya2018covariance} and the corresponding spatial covariance estimation method. 
	The objective is to compute a pair of $\regRestimate_{jju}$ and $\rawQestimate_{ju}$ (or $\regElementwiseRestimate_{jju}$ and $\elementwiseQestimate_{ju}$) for each set of $\tau_s$ contiguous coherence blocks (over which the second-order channel statistics can be assumed constant). 
	
	To obtain $\rawQestimate_{ju}$, since the matrix $\mathbf{Q}_{ju}$ is defined as the covariance matrix of the LS channel estimate $\lmmseChannel^{LS}_{jju}[n]$, we use the LS channel estimates from multiple coherence blocks in a sample covariance estimator. As described in Subsection II-A, these LS channel estimates are obtained from the ChEst pilot sequence $\pilotForLmmse_{k}$ that is transmitted by the $k^{th}$ user in all the cells in each coherence block (Fig.~\ref{fig:pilotpositioning}). We use a set of $\pilotLengthForQ$ ($\geq M$) number of LS estimates out of the available $\tau_s$ number of LS channel estimates for computing $\rawQestimate_{ju}$. Therefore, we have \vspace{-3mm}
	\begin{align*}
		\rawQestimate_{ju} = \frac{1}{\pilotLengthForQ}\sum_{n=1}^{\pilotLengthForQ}\lmmseChannel^{LS}_{jju}[n](\lmmseChannel^{LS}_{jju}[n])^{H}
	\end{align*}
	and $\mathbb{E}\{\rawQestimate_{ju}\} = \mathbf{Q}_{ju}$. Similarly, the unbiased estimate of $\elementwiseQ_{ju}$ is obtained using a sample covariance estimator as follows \vspace{-3mm}
	\begin{align*}
	[\elementwiseQestimate_{ju}]_{pp} = \frac{1}{\pilotLengthForQ}\sum_{n=1}^{\pilotLengthForQ}|[\lmmseChannel^{LS}_{jju}[n]]_p|^2, \quad {\forall p \in {1 \dots M}}.
	\end{align*}
	
	
	
	According to \cite{upadhya2018covariance}, to estimate $\regRestimate_{jju}$, each user transmits an additional pilot sequence of length $P$ symbols for $\pilotLengthForR$ out of the $\tau_s$ coherence blocks (represented as the red coherence blocks in Fig.~\ref{fig:pilotpositioning}). Specifically, the CovEst pilots, denoted as $\{\pilotForRhat_{lk}[n]\}_{n=1}^{\pilotLengthForR}$, are transmitted by the user $(l,k)$, with the pilot sequence in $n^{th}$ coherence block given as a phase-shifted version of the ChEst pilot, i.e., $\pilotForRhat_{lk}[n] = e^{j\randPhase_{ln}}\pilotForLmmse_{k}$. The phase-shifts $\{\randPhase_{ln}\}_{n=1}^{\pilotLengthForR}$ are (pseudo)-random and are generated such that $\{\randPhase_{ln}\}_{n=1}^{\pilotLengthForR}$ is independent of the channel vectors and satisfies $\mathbb{E}\{e^{j\randPhase_{ln}}\} = 0$ \cite{upadhya2018covariance}. A random sequence that satisfies both these properties is $\randPhase_{ln}\sim\mathcal{U}[0,2\pi]$. Furthermore, the random phase sequences are assumed to be i.i.d across cells.
	\vspace{-3mm}
	
	\begin{remark}
		In practice, the phase sequences $\{\randPhase_{ln}\}_{n=1}^{\pilotLengthForR}$ can be obtained using a pseudo-random sequence generator. Each user can then be assigned a sequence based on the cell to which it is associated. 
		
		We also assume that the BSs have knowledge of these sequences, which, in practice, can be accomplished by one of the following two approaches.
		\begin{itemize}
			\item The $L$ sequences are generated before-hand and stored at the user. The BS only conveys its index $\ell$ during initial access.
			\item The BS conveys the seed for the pseudo-random number generator during initial access.
		\end{itemize}
	\end{remark}
	
	Now, let $\yForRhat[n]$ be the received signal when the users transmit the CovEst pilots $\pilotForRhat_{ju}[n]$. Then, $\yForRhat[n]$ can be written as \vspace{-3mm}
	\begin{align}
	\yForRhat[n] = \sum_{l=1}^{L}\sum_{k=1}^{K} \sqrt{\ulSnr}\mathbf{h}_{jlk} \pilotForRhat^{\intercal}_{lk}[n] + \mathbf{N}^{(r)}_j[n] \label{Yr}
	\end{align}
	where $\mathbf{N}^{(r)}_j[n]$ is the additive noise at the BS that has the same statistics as $\mathbf{N}^{(p)}_j[n]$.
	Additionally, we denote LS channel estimates obtained from the pilots $\pilotForLmmse_{u}$ and $\pilotForRhat_{ju}$ as $\lmmseChannel^{(1)}_{jju}[n]$ and $\lmmseChannel^{(2)}_{jju}[n]$, respectively. Using \eqref{Yp} and \eqref{Yr}, $\lmmseChannel^{(1)}_{jju}[n]$ and $\lmmseChannel^{(2)}_{jju}[n]$ can be obtained by solving
	\begin{align*}
	\lmmseChannel^{(1)}_{jju}[n] &\triangleq \argmin_{\mathbf{g}} \quad \norm{\yForLmmse[n] - \sqrt{\ulSnr}\mathbf{g}\pilotForLmmse^{\intercal}_{u}}^2\\
	\lmmseChannel^{(2)}_{jju}[n] &\triangleq \argmin_{\mathbf{g}} \quad \norm{\yForRhat[n] - \sqrt{\ulSnr}\mathbf{g}\pilotForRhat^{\intercal}_{ju}[n]}^2.
	\end{align*}
	By using the fact that $\pilotForRhat_{lk}[n] = e^{j\randPhase_{ln}}\pilotForLmmse_{k}$, the LS estimates are then obtained as
	\begin{align}
	\lmmseChannel^{(1)}_{jju}[n] &= \lmmseChannel_{jju}^{LS}[n] = \mathbf{h}_{jju}+\sum_{l\ne j} \mathbf{h}_{jlu}
	+ \frac{1}{P\sqrt{\mu}}\mathbf{N}^{(p)}_j[n]\pilotForLmmse^{*}_{u} \label{eq: h1}\\
	\lmmseChannel^{(2)}_{jju}[n] &= \frac{1}{P\sqrt{\mu}} 
	\yForRhat[n]\pilotForRhat^{*}_{ju} = \frac{1}{P\sqrt{\mu}} 
	\yForRhat[n]e^{-j\randPhase_{jn}}\pilotForLmmse^{*}_{u} \nonumber\\
	&= \mathbf{h}_{jju}+\sum_{l\ne j} \mathbf{h}_{jlu}e^{j(\randPhase_{ln}-\randPhase_{jn})}
	+ \frac{1}{P\sqrt{\mu}}\mathbf{N}^{(r)}_{j}[n]\pilotForLmmse^{*}_{u}e^{-j\randPhase_{jn}}. \label{eq: h2}
	\end{align}
	
	In the following subsections, we describe both cases of complete and diagonal matrix estimation using the aforementioned LS channel estimates.
	
	\subsubsection{Estimation of $ \regRestimate_{jju} $}
	Note that the second and third terms in \eqref{eq: h1}, corresponding to the interference and noise, respectively, are independent of the second and third terms in \eqref{eq: h2}. This independence arises from the fact that $\randPhase_{ln}$ ($\sim\mathcal{U}[0,2\pi]$) is independent of $\randPhase_{jn}$ (for all $l \ne j$) and the channel realizations. Consequently, the cross-correlation of $ \lmmseChannel^{(1)}_{jju}[n] $ and $\lmmseChannel^{(2)}_{jju}[n]$, gives
	\begin{align*}
	&\mathbf{R}_{\lmmseChannel^{(1)}\lmmseChannel^{(2)}} = \mathbb{E}\{\lmmseChannel^{(1)}_{jju}[n](\lmmseChannel^{(2)}_{jju}[n])^H\}\\
	&= \mathbb{E}\Bigg\{\!\!\!\left\{\mathbf{h}_{jju}\! + \!\sum_{l\ne j} \!\mathbf{h}_{jlu}\! + \!\frac{1}{P\sqrt{\mu}}\mathbf{N}^{(p)}_j[n]\pilotForLmmse^{*}_{u}\right\}\!\!\left\{\mathbf{h}_{jju}\! + \!\sum_{l\ne j} \!\mathbf{h}_{jlu}e^{j(\randPhase_{ln}-\randPhase_{jn})}\! + \!\frac{1}{P\sqrt{\mu}}\mathbf{N}^{(r)}_{j}[n]\pilotForLmmse^{*}_{u}e^{-j\randPhase_{jn}}\right\}^H\!\Bigg\}\\
	&= \mathbf{R}_{jju}.
	\end{align*}
	Therfore, we can use the following unbiased Hermitian-symmetric sample cross-covariance matrix as an estimate for $\mathbf{R}_{jju}$ \cite{upadhya2018covariance}
	\begin{align}
	\rawRestimate_{jju} = \frac{1}{2\pilotLengthForR} \sum_{n=1}^{\pilotLengthForR} \left(\lmmseChannel^{(1)}_{jju}[n]\left(\lmmseChannel^{(2)}_{jju}[n]\right)^{H} + \lmmseChannel^{(2)}_{jju}[n]\left(\lmmseChannel^{(1)}_{jju}[n]\right)^{H}\right). \label{rawR}
	\end{align}
	As $\pilotLengthForR \to \infty$, one can show that the estimated covariance matrix converges in probability to the true covariance matrix, i.e., $\rawRestimate_{jju} \overset{P}{\underset{\pilotLengthForR \to \infty}{\longrightarrow}} \mathbf{R}_{jju}$.
	However, the unbiased covariance estimator given in \eqref{rawR} does not guarantee positive diagonal elements for finite $ \pilotLengthForR $. Therefore, we consider a regularized estimate for the covariance matrix given by
	\begin{align}
	\regRestimate_{jju} = \biasFactorForR \rawRestimate_{jju} + (1-\biasFactorForR)\biasRMatrix \label{regR}
	\end{align}
	where $\biasRMatrix$ is an arbitrary symmetric positive definite bias-matrix, and $\biasFactorForR$ is a design parameter. Additionally, it is useful to define $\regRactual_{jju}$ to denote the expected value of $\regRestimate_{jju}$ as 
	\begin{equation*}\regRactual_{jju} \triangleq  \mathbb{E}\{\regRestimate_{jju}\} = \biasFactorForR \mathbf{R}_{jju} + {(1-\biasFactorForR)}\biasRMatrix.
	\end{equation*}
	\subsubsection{Estimation of $ \elementwiseR_{jju} $}
	For element-wise LMMSE-type estimation, it is sufficient to estimate the diagonal matrices $\elementwiseR_{jju}$ and $\elementwiseQ_{ju}$. Therefore, we present an unbiased Hermitian-symmetric covariance estimate $\elementwiseRestimate_{jju}$ (similar to $\rawRestimate_{jju}$) as follows
	\begin{align}
	[\elementwiseRestimate_{jju}]_{pp} &= \frac{1}{2\pilotLengthForR}\sum_{n=1}^{\pilotLengthForR}[\lmmseChannel^{(1)}_{jju}[n]]_{p}[\lmmseChannel^{(2)}_{jju}[n]]^{*}_p 
	+\frac{1}{2\pilotLengthForR}\sum_{n=1}^{\pilotLengthForR} [\lmmseChannel^{(2)}_{jju}[n]]_{p}[\lmmseChannel^{(1)}_{jju}[n]]^{*}_p, \quad \forall p \in {1 \dots M} \label{rawS} \;.
	\end{align}
	A regularized estimate for $\elementwiseR_{jju}$ is given by
	\begin{align}
	\regElementwiseRestimate_{jju} = \biasFactorForR \elementwiseRestimate_{jju} + (1-\biasFactorForR)\diag(\biasRMatrix). \label{regS}
	\end{align}
	We define $\regElementwiseRActual_{jju}$ as the expected value of $\regElementwiseRestimate_{jju}$ given as $\regElementwiseRActual_{jju} \triangleq  \mathbb{E}\{\regElementwiseRestimate_{jju}\} = \biasFactorForR \elementwiseR_{jju} + {(1-\biasFactorForR)\diag(\biasRMatrix)}$ for future use.
	
	In summary, the BS needs to compute channel covariance matrices for each set of $\tau_s$ blocks in order to obtain the LMMSE-type/element-wise LMMSE-type channel estimates in each coherence block within the set. 
	The quality of the LMMSE-type/element-wise LMMSE-type channel estimate and hence, SE of the system depends on the quality of the channel and covariance estimates, which in turn depend on parameters $\pilotLengthForR$, and $\pilotLengthForQ$. Therefore, it is crucial to study the impact of these parameters on a user's SE using a closed-form expression of SE. In the following section, we derive the SE results for both UL and DL data under the described LMMSE-type and element-wise LMMSE-type channel estimation.
	\vspace{-3mm}
	
	\section{Main Results: UL and DL Spectral Efficiency}
	\label{sec:mainResults}
	\subsection{Uplink Spectral Efficiency}
	In this section, the average SE for the UL channel of a target user $(j, u)$ is derived when the channel estimates are used in a matched-filter combiner at the BS. For the matched-filter, the combining vector $\mathbf{v}_{ju}[n]$ can be written as  $\mathbf{v}_{ju}[n] = \lmmseChannel_{jju}[n] = \unRegWEst_{ju}\leastSqrChannel_{jju}[n]$, where 
	\begin{equation*}
	\unRegWEst_{ju} = \begin{cases}
	\regRestimate_{jju}\rawQestimate^{-1}_{ju}, \; &\text{LMMSE-type channel estimate} \\
	\regElementwiseRestimate_{jju}\elementwiseQestimate^{-1}_{ju},  \qquad &\text{element-wise LMMSE-type channel estimate}.
	\end{cases}
	\end{equation*}
	
	For the sake of mathematical tractability, we make the following assumptions
	\begin{itemize}
		\item $\regRestimate_{jju}$ ($\regElementwiseRestimate_{jju}$) and $\rawQestimate_{ju}$ ($\elementwiseQestimate_{ju}$) are each computed from a different nonoverlapping set of coherence blocks that does not include $n^{th}$ block. Consequently, the random variables $\regRestimate_{jju} / \regElementwiseRestimate_{jju}$, $\rawQestimate_{ju} / \elementwiseQestimate_{ju}$, and $\leastSqrChannel_{jju}[n]$ are mutually uncorrelated.
		\item For the LMMSE-type channel estimate, $\pilotLengthForQ$ is assumed greater than $M$, so that the distribution of $\rawQestimate^{-1}_{ju}$ is non-degenerate inverse Wishart.
	\end{itemize}
	The received combined signal is given by
	\begin{align}
	\mathbf{v}_{ju}^H\mathbf{y}_j &= \sqrt{\ulSnr}\mathbb{E}\{\mathbf{v}_{ju}^H\mathbf{h}_{jju}\} x_{ju} +\sqrt{\ulSnr}(\mathbf{v}_{ju}^H\mathbf{h}_{jju} - \mathbb{E}\{\mathbf{v}_{ju}^H\mathbf{h}_{jju}\}) x_{ju} + \sum_{k\ne u} \sqrt{\ulSnr}\mathbf{v}_{ju}^H\mathbf{h}_{jjk} x_{jk} \nonumber \\
	&+ \sum_{l \ne j}\sum_{k=1}^{K} \sqrt{\ulSnr}\mathbf{v}_{ju}^H\mathbf{h}_{jlk} x_{lk} + \mathbf{v}_{ju}^H\mathbf{n}_j \quad .
	\label{eqn:combinerOut}
	\end{align}
	In \eqref{eqn:combinerOut}, the first term corresponds to the signal component, the second term is a result of the uncertainty in the array gain, the third term corresponds to the non-coherent intra-cell interference, the fourth term corresponds to the the coherent interference from pilot contamination, and the last term corresponds to the additive noise component. Since the first term is uncorrelated with the subsequent terms, a lower bound on SE of the UL channel from user $(j,u)$ to BS $j$ can be obtained as \cite{bjornson2016imperfect}
	\begin{align*}
	SE_{ju}^{(\mathrm{ul})} = \left(1-\frac{P}{\ulCoherenceSymbols} - \frac{\pilotLengthForR P}{\ulCoherenceSymbols\tau_s}\right)\log_2\left(1+\gamma_{ju}^{(ul)}\right),&&[bits/s/Hz]
	\end{align*}
	where $\gamma_{ju}^{(\mathrm{ul})}$ is given by
	\begin{align*}
	\gamma_{ju}^{(\mathrm{ul})} &= \frac{|\mathbb{E}\{\mathbf{v}_{ju}^H\mathbf{h}_{jju}\}|^2}
	{\sum\limits_{l=1}^{L}\sum\limits_{k=1}^{K} \mathbb{E}\{|\mathbf{v}_{ju}^H\mathbf{h}_{jlk}|^2\} - |\mathbb{E}\{\mathbf{v}_{ju}^H\mathbf{h}_{jju}\}|^2 + \frac{1}{\ulSnr}\mathbb{E}\{\mathbf{v}_{ju}^H\mathbf{v}_{ju}\}}
	\end{align*}
	and the expectation $\mathbb{E}\{\cdot\}$ is over the channel realizations. In the pre-log factor, $P/\ulCoherenceSymbols$ accounts for ChEst pilots, and $\pilotLengthForR P/\ulCoherenceSymbols\tau_s$ accounts for CovEst pilots. However, since we assume that $\unRegWEst_{ju}$ and $\leastSqrChannel_{jju}[n]$ are mutually independent, we have $\mathbb{E}\{\cdot\} = \mathbb{E}_W\{\mathbb{E}_{h^{LS}}\{\cdot\}\}$, where $\mathbb{E}_W$ is the expectation over $\unRegWEst_{ju}$, and $\mathbb{E}_{h^{LS}}$ is the expectation over the LS estimate. The signal to noise ratio (SNR) expression can be further simplified to \cite{bjornson2016imperfect}
	\begin{align}
	&\gamma_{ju}^{(\mathrm{ul})} = \frac{|\mathbb{E}_W\{\trace(\genericWEst^H_{ju}\mathbf{R}_{jju})\}|^2}
	{\mathbb{E}_W\{\trace(\genericWEst_{ju}\mathbf{Q}_{ju}\genericWEst^H_{ju}\Rsum)\} +\sum\limits_{l=1}^{L}\mathbb{E}_W\{|\trace(\genericWEst^{H}_{ju}\mathbf{R}_{jlu})|^2\}-|\mathbb{E}_W\{\trace(\genericWEst^H_{ju}\mathbf{R}_{jju})\}|^2} \label{gamma}
	\end{align}
	where $ \Rsum \triangleq \sum\limits_{l=1}^{L}\sum\limits_{k=1}^{K}\mathbf{R}_{jlk} + \frac{1}{\ulSnr}\mathbf{I}$.
	
	\subsection{Uplink Spectral Efficiency when $\unRegWEst_{ju} = \regRestimate_{jju}\rawQestimate^{-1}_{ju}$}
	In this subsection, expressions for all the terms given in \eqref{gamma} are derived for the case when $\unRegWEst_{ju} = \regRestimate_{jju}\rawQestimate^{-1}_{ju}$. In what follows, $\mathbb{E}_R\{\cdot\}$ represents the expectation over  $\regRestimate_{jju}$, $\mathbb{E}_Q\{\cdot\}$ represents the  expectation over $\rawQestimate_{ju}$, and $\mathbb{E}_W\{\cdot\}$ represents the expectation over both $\regRestimate_{jju}$ and $\rawQestimate_{ju}$. It should be noted that, as already mentioned, we have assumed that $\regRestimate_{jju}$ and $\rawQestimate_{ju}$ are estimated from different pilot resources (coherence blocks) such that the estimates are independent to each other. Therefore, $\mathbb{E}_R\{\cdot\}$ and $\mathbb{E}_Q\{\cdot\}$ can be evaluated independently.
	
	Before analytically deriving the expectations for the terms in \eqref{gamma}, we present some useful lemmas.
	\vspace{-3mm}
	\begin{lemmas} \label{gaussianVectors}
		Given an arbitrary matrix $\arbSqrMatrix\in \mathbb{C}^{M\times M}$, and for any mutually independent M-dimensional random vector $\mathbf{h}$ distributed as $\mathcal{CN}(\mathbf{0},\mathbf{R})$, we have
		\begin{align}
		&\mathbb{E}\{\mathbf{h}\mathbf{h}^H\arbSqrMatrix\mathbf{h}\mathbf{h}^H\} = \mathbf{R}\arbSqrMatrix\mathbf{R} + \mathbf{R} \trace(\arbSqrMatrix\mathbf{R}) \label{hAh}\\
		&\mathbb{E}\{|\mathbf{h}^H\arbSqrMatrix\mathbf{h}|^2\} = |\trace(\arbSqrMatrix^H\mathbf{R})|^2 + \trace(\arbSqrMatrix\mathbf{R}\arbSqrMatrix^H\mathbf{R}).\label{abshAh}
		\end{align}
	\end{lemmas}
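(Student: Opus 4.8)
The plan is to reduce both identities to the fourth-order moment formula for a zero-mean circularly-symmetric complex Gaussian vector, and then to read off the matrix expressions by index bookkeeping. First I would record the Wick/Isserlis factorization: for $\mathbf{h}\sim\mathcal{CN}(\mathbf{0},\mathbf{R})$ and any indices $a,b,c,d$,
\begin{align*}
\mathbb{E}\{h_a \bar{h}_b h_c \bar{h}_d\} = [\mathbf{R}]_{ab}[\mathbf{R}]_{cd} + [\mathbf{R}]_{ad}[\mathbf{R}]_{cb},
\end{align*}
which is the Gaussian moment theorem (fourth moments are a sum over pairings) combined with circular symmetry: the pairing that couples the two unconjugated entries contributes $\mathbb{E}\{h_a h_c\}\,\mathbb{E}\{\bar{h}_b\bar{h}_d\}=0$ and therefore drops out. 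Equivalently, writing $\mathbf{h}=\mathbf{R}^{1/2}\mathbf{g}$ with $\mathbf{g}\sim\mathcal{CN}(\mathbf{0},\mathbf{I})$ reduces everything to the standard moments of i.i.d. unit-variance complex Gaussians; I would use whichever form is cleaner to present.

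For \eqref{hAh} I would expand the $(i,l)$ entry as a finite sum, $\bigl[\mathbf{h}\mathbf{h}^H\arbSqrMatrix\mathbf{h}\mathbf{h}^H\bigr]_{il} = \sum_{j,k}[\arbSqrMatrix]_{jk}\, h_i\bar{h}_j h_k \bar{h}_l$, move the expectation inside the sum, substitute the factorization above with $(a,b,c,d)=(i,j,k,l)$, and split the result into the two pieces $\sum_{j,k}[\mathbf{R}]_{ij}[\arbSqrMatrix]_{jk}[\mathbf{R}]_{kl} = [\mathbf{R}\arbSqrMatrix\mathbf{R}]_{il}$ and $[\mathbf{R}]_{il}\sum_{j,k}[\arbSqrMatrix]_{jk}[\mathbf{R}]_{kj} = [\mathbf{R}]_{il}\,\trace(\arbSqrMatrix\mathbf{R})$. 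Since this holds for every $(i,l)$, collecting the two pieces gives \eqref{hAh} as a matrix identity.

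Then \eqref{abshAh} follows from \eqref{hAh} with no second moment computation. Because $\mathbf{h}^H\arbSqrMatrix\mathbf{h}$ is a scalar, $\bigl|\mathbf{h}^H\arbSqrMatrix\mathbf{h}\bigr|^2 = (\mathbf{h}^H\arbSqrMatrix\mathbf{h})(\mathbf{h}^H\arbSqrMatrix^H\mathbf{h}) = \trace\bigl(\arbSqrMatrix\,\mathbf{h}\mathbf{h}^H\arbSqrMatrix^H\mathbf{h}\mathbf{h}^H\bigr)$ by cyclicity of the trace. Taking expectations, using linearity of the trace, and applying \eqref{hAh} with $\arbSqrMatrix$ replaced by $\arbSqrMatrix^H$ yields $\trace\bigl(\arbSqrMatrix(\mathbf{R}\arbSqrMatrix^H\mathbf{R} + \mathbf{R}\,\trace(\arbSqrMatrix^H\mathbf{R}))\bigr) = \trace(\arbSqrMatrix\mathbf{R}\arbSqrMatrix^H\mathbf{R}) + \trace(\arbSqrMatrix\mathbf{R})\,\trace(\arbSqrMatrix^H\mathbf{R})$. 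Finally I would use $\mathbf{R}=\mathbf{R}^H$ together with $\overline{\trace(\mathbf{B})}=\trace(\mathbf{B}^H)$ to get $\trace(\arbSqrMatrix\mathbf{R}) = \trace((\arbSqrMatrix^H\mathbf{R})^H) = \overline{\trace(\arbSqrMatrix^H\mathbf{R})}$, so the cross term equals $|\trace(\arbSqrMatrix^H\mathbf{R})|^2$, which is exactly \eqref{abshAh}.

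The computations are routine; the one place that needs care is the accounting of conjugates and Hermitian transposes. Concretely, the \emph{main obstacle} is making sure (i) the anomalous second-moment terms genuinely vanish — this is where circular symmetry, not merely Gaussianity, is essential — and (ii) the adjoint lands on the correct factor in the reduction of \eqref{abshAh} to \eqref{hAh}, so that $\arbSqrMatrix^H$ appears in the final expressions rather than $\arbSqrMatrix^*$ or $\arbSqrMatrix^{\intercal}$; everything else is index-level bookkeeping and trace cyclicity.
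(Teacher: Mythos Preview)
Your proposal is correct and essentially matches the paper's proof: the paper whitens $\mathbf{h}=\mathbf{B}\mathbf{g}$ with $\mathbf{g}\sim\mathcal{CN}(\mathbf{0},\mathbf{I})$ and computes the fourth moments of $\mathbf{g}$ elementwise (which you list as your equivalent alternative), and for \eqref{abshAh} it uses exactly your reduction $|\mathbf{h}^H\mathbf{A}\mathbf{h}|^2=\trace(\mathbf{A}\,\mathbf{h}\mathbf{h}^H\mathbf{A}^H\mathbf{h}\mathbf{h}^H)$ followed by \eqref{hAh}. The only cosmetic difference is that your primary presentation invokes the Isserlis/Wick factorization directly on $\mathbf{h}$ rather than after whitening; the substance is the same.
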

	\begin{proof}
		The proof is available in Appendix \ref{Appendix lemma 1}.
	\end{proof}
	\vspace{-3mm}
	\begin{lemmas} \label{Lemma:IWishproperties}
		Given a Hermitian matrix $\arbSymMatrix\in \mathbb{C}^{M\times M}$, an arbitrary matrix $\arbSqrMatrix\in \mathbb{C}^{M\times M}$,
		and a complex Wishart matrix, $\arbWishart \in \mathbb{C}^{M \times M}$, distributed as $\mathcal{W}(N,\mathbf{I})$, we have
		\begin{align}
		&\mathbb{E}\big\{[\arbWishart^{-1}]_{ij}\big\} = \frac{[\mathbf{I}]_{ij}}{N-M} \label{Wishart1}\\
		&\mathbb{E}\big\{[\arbWishart^{-1}]_{ij}[\arbWishart^{-1}]_{lk}\big\} = \frac{[\mathbf{I}]_{ij}[\mathbf{I}]_{lk} + \frac{1}{N-M}[\mathbf{I}]_{lj}[\mathbf{I}]_{ik}}{(N-M)^2-1} \label{Wishart2}\\
		&\mathbb{E}\{\trace(\arbWishart^{-2}\arbSymMatrix)\} = \frac{N}{(N-M)^3-(N-M)}\trace(\arbSymMatrix) \label{Wishart3}\\
		&\mathbb{E}\{|\trace(\arbWishart^{-1}\arbSqrMatrix)|^2\}
		=\frac{|\trace(\arbSqrMatrix)|^2 + \frac{1}{N-M}\trace(\arbSqrMatrix\arbSqrMatrix^{H})}{(N-M)^2-1}. \label{Wishart4}
		\end{align}
	\end{lemmas}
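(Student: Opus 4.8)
The plan is to first establish \eqref{Wishart1} and \eqref{Wishart2} as the first and second moments of the complex inverse Wishart matrix $\arbWishart^{-1}$, and then to deduce \eqref{Wishart3} and \eqref{Wishart4} from them by a purely mechanical index expansion.

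For \eqref{Wishart1}, note that since the scale matrix is $\mathbf{I}$, the law of $\arbWishart$ (hence of $\arbWishart^{-1}$) is invariant under $\arbWishart\mapsto\mathbf{U}\arbWishart\mathbf{U}^{H}$ for every unitary $\mathbf{U}$. Specializing $\mathbf{U}$ to diagonal unitaries and to permutation matrices shows that $\mathbb{E}\{[\arbWishart^{-1}]_{ij}\}=0$ when $i\ne j$ and that $\mathbb{E}\{[\arbWishart^{-1}]_{ii}\}$ does not depend on $i$, so $\mathbb{E}\{\arbWishart^{-1}\}=c\,\mathbf{I}$ with $c=\mathbb{E}\{[\arbWishart^{-1}]_{11}\}$. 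To compute $c$, I would use the block-inversion identity: $[\arbWishart^{-1}]_{11}$ is the reciprocal of the Schur complement of the trailing $(M-1)\times(M-1)$ block of $\arbWishart$, which for a complex Wishart is a scalar $\mathcal{W}(N-M+1,1)$ variable, i.e.\ a unit-rate Gamma variable of shape $N-M+1$. Its reciprocal has mean $1/(N-M)$ and second moment $1/\big((N-M)(N-M-1)\big)$; the first gives \eqref{Wishart1}.

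For \eqref{Wishart2}, the same invariance forces $\mathbb{E}\{[\arbWishart^{-1}]_{ij}[\arbWishart^{-1}]_{lk}\}$ to vanish outside the index patterns $(i=j,\ l=k)$ and $(i=k,\ j=l)$, and inside them it takes only the three values $\mathbb{E}\{[\arbWishart^{-1}]_{11}^{2}\}$, $\mathbb{E}\{[\arbWishart^{-1}]_{11}[\arbWishart^{-1}]_{22}\}$, $\mathbb{E}\{\abs{[\arbWishart^{-1}]_{12}}^{2}\}$. The first equals the Gamma-reciprocal second moment $1/\big((N-M)(N-M-1)\big)$ from the previous step. For the other two I would use that the leading $2\times2$ block of $\arbWishart^{-1}$ is itself a $2\times2$ complex inverse Wishart with $N-M+2$ degrees of freedom (again via a Schur-complement reduction), so its determinant $[\arbWishart^{-1}]_{11}[\arbWishart^{-1}]_{22}-\abs{[\arbWishart^{-1}]_{12}}^{2}$ is the reciprocal of a $2\times2$ complex Wishart determinant and has mean $1/\big((N-M)(N-M+1)\big)$; this is one relation between the two unknowns. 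A second, independent relation comes from applying the Gamma-reciprocal second moment to the rotated unit vector $(\mathbf{e}_{1}+\mathbf{e}_{2})/\sqrt{2}$ and discarding the cross terms by phase invariance, which yields $\mathbb{E}\{[\arbWishart^{-1}]_{11}[\arbWishart^{-1}]_{22}\}+\mathbb{E}\{\abs{[\arbWishart^{-1}]_{12}}^{2}\}=1/\big((N-M)(N-M-1)\big)$. Solving for the two unknowns and collecting the three values into Kronecker-delta form gives \eqref{Wishart2}. (Equivalently, \eqref{Wishart1}--\eqref{Wishart2} can simply be quoted as the standard first and second moments of the complex inverse Wishart distribution.)

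The remaining two identities then follow routinely. For \eqref{Wishart3}, expand $\trace(\arbWishart^{-2}\arbSymMatrix)=\sum_{i,j,l}[\arbWishart^{-1}]_{ij}[\arbWishart^{-1}]_{jl}[\arbSymMatrix]_{li}$, move the expectation inside, substitute \eqref{Wishart2}, and contract the Kronecker deltas; the two surviving sums collapse to $\trace(\arbSymMatrix)$ and $M\,\trace(\arbSymMatrix)$, which combine over the common denominator $(N-M)\big((N-M)^{2}-1\big)$ to give $\tfrac{N}{(N-M)^{3}-(N-M)}\trace(\arbSymMatrix)$. For \eqref{Wishart4}, use the Hermitian symmetry of $\arbWishart^{-1}$ to write $\abs{\trace(\arbWishart^{-1}\arbSqrMatrix)}^{2}=\sum_{i,j,l,k}[\arbSqrMatrix]_{ji}\,\overline{[\arbSqrMatrix]_{kl}}\,[\arbWishart^{-1}]_{ij}[\arbWishart^{-1}]_{kl}$, substitute \eqref{Wishart2}, and contract: the $\delta_{ij}\delta_{kl}$ term collapses to $\abs{\trace(\arbSqrMatrix)}^{2}$ and the $\delta_{kj}\delta_{il}$ term to $\trace(\arbSqrMatrix\arbSqrMatrix^{H})$, producing the stated expression. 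The main obstacle is the middle step --- separating $\mathbb{E}\{[\arbWishart^{-1}]_{11}[\arbWishart^{-1}]_{22}\}$ from $\mathbb{E}\{\abs{[\arbWishart^{-1}]_{12}}^{2}\}$ inside \eqref{Wishart2}; once \eqref{Wishart1} and \eqref{Wishart2} are in hand, \eqref{Wishart3} and \eqref{Wishart4} are only index bookkeeping.
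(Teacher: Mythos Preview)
Your proposal is correct and follows essentially the same route as the paper: the paper simply cites a reference for \eqref{Wishart1}--\eqref{Wishart2} (which you alternatively offer to quote as standard inverse-Wishart moments) and then deduces \eqref{Wishart3}--\eqref{Wishart4} from \eqref{Wishart2} by the same kind of index expansion you describe. The only cosmetic difference is that for \eqref{Wishart3} the paper first diagonalizes $\arbSymMatrix=\mathbf{U}\mathbf{\Lambda}\mathbf{U}^{H}$ and works with $\tilde{\arbWishart}=\mathbf{U}^{H}\arbWishart\mathbf{U}$ before summing, whereas you contract the Kronecker deltas directly; both collapse to the same $\trace(\arbSymMatrix)$ and $M\,\trace(\arbSymMatrix)$ contributions.
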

	\begin{proof}
		The proof is available in Appendix  \ref{Appendix lemma 2}.
	\end{proof}
	\vspace{-3mm}
	\begin{lemmas} \label{den1R}
		Given an arbitrary matrix $\arbSqrMatrix\in \mathbb{C}^{M\times M}$, we have
		\begin{align}
		&\mathbb{E}\{\rawRestimate_{jju}\arbSqrMatrix\rawRestimate_{jju}\} = \mathbf{R}_{jju}\arbSqrMatrix\mathbf{R}_{jju}
		+ \frac{1}{2\pilotLengthForR} \mathbf{Q}_{ju}\trace(\arbSqrMatrix\mathbf{Q}_{ju}) 
		+ \frac{1}{2\pilotLengthForR} \mathbf{R}_{jju}\trace(\arbSqrMatrix\mathbf{R}_{jju}) \label{RAR} \\
		&\mathbb{E}\{|\trace(\rawRestimate_{jju}\arbSqrMatrix)|^2\} = |\trace(\mathbf{R}_{jju}\arbSqrMatrix)|^2
		+ \frac{1}{2\pilotLengthForR}\trace(\arbSqrMatrix\mathbf{Q}_{ju}\arbSqrMatrix^H\mathbf{Q}_{ju}) 
		+ \frac{1}{2\pilotLengthForR}\trace(\arbSqrMatrix\mathbf{R}_{jju}\arbSqrMatrix^H\mathbf{R}_{jju}).\label{absTraceRA}
		\end{align}
	\end{lemmas}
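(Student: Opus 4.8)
The plan is to reduce both identities to a single-block (``one-observation'') fourth-moment computation and then to evaluate that computation by conditioning on the CovEst phase shifts.

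First, I would write $\rawRestimate_{jju}=\frac{1}{\pilotLengthForR}\sum_{n=1}^{\pilotLengthForR}\singleObservationR_{n}$, where $\singleObservationR_{n}\triangleq\frac{1}{2}(\lmmseChannel^{(1)}_{jju}[n](\lmmseChannel^{(2)}_{jju}[n])^{H}+\lmmseChannel^{(2)}_{jju}[n](\lmmseChannel^{(1)}_{jju}[n])^{H})$ is the symmetrized rank-two estimate produced by coherence block $n$. Since the channels, the pilot noise, and the phase shifts of distinct coherence blocks are mutually independent, the matrices $\{\singleObservationR_{n}\}_{n=1}^{\pilotLengthForR}$ are i.i.d., and by the cross-correlation computation that precedes \eqref{rawR} each satisfies $\mathbb{E}\{\singleObservationR_{n}\}=\mathbf{R}_{jju}$. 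Expanding $\rawRestimate_{jju}\arbSqrMatrix\rawRestimate_{jju}=\pilotLengthForR^{-2}\sum_{n,m}\singleObservationR_{n}\arbSqrMatrix\singleObservationR_{m}$ and separating the $n\ne m$ terms (which factor by independence into $\mathbb{E}\{\singleObservationR_{n}\}\arbSqrMatrix\mathbb{E}\{\singleObservationR_{m}\}=\mathbf{R}_{jju}\arbSqrMatrix\mathbf{R}_{jju}$) from the $n=m$ terms gives $\mathbb{E}\{\rawRestimate_{jju}\arbSqrMatrix\rawRestimate_{jju}\}=(1-\pilotLengthForR^{-1})\mathbf{R}_{jju}\arbSqrMatrix\mathbf{R}_{jju}+\pilotLengthForR^{-1}\mathbb{E}\{\singleObservationR_{n}\arbSqrMatrix\singleObservationR_{n}\}$; the same splitting applied to $|\trace(\rawRestimate_{jju}\arbSqrMatrix)|^{2}$ yields $(1-\pilotLengthForR^{-1})|\trace(\mathbf{R}_{jju}\arbSqrMatrix)|^{2}+\pilotLengthForR^{-1}\mathbb{E}\{|\trace(\singleObservationR_{n}\arbSqrMatrix)|^{2}\}$. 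Hence the lemma reduces to evaluating the two single-block moments and checking that their leading pieces combine with the $n\ne m$ contribution to reproduce $\mathbf{R}_{jju}\arbSqrMatrix\mathbf{R}_{jju}$ and $|\trace(\mathbf{R}_{jju}\arbSqrMatrix)|^{2}$.

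For the single-block moments I would condition on the phase shifts $\{\randPhase_{ln}\}_{l}$. From \eqref{eq: h1}--\eqref{eq: h2}, conditionally on these phases the pair $(\lmmseChannel^{(1)}_{jju}[n],\lmmseChannel^{(2)}_{jju}[n])$ is jointly zero-mean circularly-symmetric (proper) complex Gaussian, with both conditional auto-covariances equal to $\mathbf{Q}_{ju}$ and conditional cross-covariance $\mathbf{R}_{\randPhase}\triangleq\mathbf{R}_{jju}+\sum_{l\ne j}\mathbf{R}_{jlu}e^{j(\randPhase_{jn}-\randPhase_{ln})}$. Stacking the two vectors and applying the complex-Gaussian fourth-moment identity (a two-vector version of Lemma~\ref{gaussianVectors}, equivalently Isserlis/Wick with only conjugated--unconjugated pairings surviving), each of the four products in the expansion of $\singleObservationR_{n}\arbSqrMatrix\singleObservationR_{n}$ reduces, conditionally, to a product term drawn from $\{\mathbf{R}_{\randPhase}\arbSqrMatrix\mathbf{R}_{\randPhase},\ \mathbf{R}_{\randPhase}\arbSqrMatrix\mathbf{R}_{\randPhase}^{H},\ \mathbf{R}_{\randPhase}^{H}\arbSqrMatrix\mathbf{R}_{\randPhase},\ \mathbf{R}_{\randPhase}^{H}\arbSqrMatrix\mathbf{R}_{\randPhase}^{H}\}$ plus a trace term drawn from $\{\mathbf{R}_{\randPhase}\trace(\arbSqrMatrix\mathbf{R}_{\randPhase}),\ \mathbf{Q}_{ju}\trace(\arbSqrMatrix\mathbf{Q}_{ju})\}$ and their conjugates; the identical calculation with scalar traces gives the conditional value of $|\trace(\singleObservationR_{n}\arbSqrMatrix)|^{2}$ in terms of $|\trace(\arbSqrMatrix\mathbf{R}_{\randPhase})|^{2}$, $\trace(\arbSqrMatrix\mathbf{Q}_{ju}\arbSqrMatrix^{H}\mathbf{Q}_{ju})$, and $\trace(\arbSqrMatrix\mathbf{R}_{\randPhase}\arbSqrMatrix^{H}\mathbf{R}_{\randPhase})$.

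Finally I would average over the phases. Since the phase sequences are i.i.d.\ across cells with $\mathbb{E}\{e^{j\randPhase_{ln}}\}=0$, one has $\mathbb{E}\{e^{\pm 2j\randPhase_{ln}}\}=0$ and $\mathbb{E}\{e^{j(\randPhase_{ln}-\randPhase_{l'n})}\}=\delta_{ll'}$ for $l,l'\ne j$, so every term linear in the off-diagonal part of $\mathbf{R}_{\randPhase}$ averages to zero, $\mathbb{E}_{\randPhase}\{\mathbf{R}_{\randPhase}\arbSqrMatrix\mathbf{R}_{\randPhase}\}=\mathbf{R}_{jju}\arbSqrMatrix\mathbf{R}_{jju}$, $\mathbb{E}_{\randPhase}\{\mathbf{R}_{\randPhase}\trace(\arbSqrMatrix\mathbf{R}_{\randPhase})\}=\mathbf{R}_{jju}\trace(\arbSqrMatrix\mathbf{R}_{jju})$ (and likewise for the conjugated and trace analogues), and the $\mathbf{Q}_{ju}$-terms are phase-free and pass through unchanged. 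Substituting back and collecting with the prefactors from the symmetrization and the $\pilotLengthForR^{-1}$ normalization should produce \eqref{RAR} and \eqref{absTraceRA}. I expect the main obstacle to be precisely this phase-averaging step, and within it the mixed cross-terms $\mathbb{E}_{\randPhase}\{\mathbf{R}_{\randPhase}\arbSqrMatrix\mathbf{R}_{\randPhase}^{H}\}$ and $\mathbb{E}_{\randPhase}\{\mathbf{R}_{\randPhase}^{H}\arbSqrMatrix\mathbf{R}_{\randPhase}\}$, which -- unlike $\mathbb{E}_{\randPhase}\{\mathbf{R}_{\randPhase}\arbSqrMatrix\mathbf{R}_{\randPhase}\}$ -- pick up a contamination contribution $\sum_{l\ne j}\mathbf{R}_{jlu}\arbSqrMatrix\mathbf{R}_{jlu}$ from the surviving $l=l'$ pairing of the two phase factors. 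The crux is therefore to combine the four expanded single-block products together with the Hermitian symmetrization built into $\singleObservationR_{n}$ so that these contamination contributions disappear and only $\mathbf{R}_{jju}$ and $\mathbf{Q}_{ju}$ survive on the right-hand side. The second identity follows from the same template, with scalar traces replacing the matrix products throughout.
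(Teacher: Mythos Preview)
Your route differs from the paper's. The paper decomposes $\lmmseChannel^{(i)}_{jju}[n]=\mathbf{h}_{jju}+\contamination^{(i)}_{jju}[n]$, declares $\contamination^{(1)}_{jju}[n]$ and $\contamination^{(2)}_{jju}[n]$ to be ``mutually independent'' with common covariance $\mathbf{Q}_{ju}-\mathbf{R}_{jju}$, and then appeals to Lemma~\ref{gaussianVectors}; under that independence the single-block identities \eqref{RAR_interm}--\eqref{absTraceRA_interm} and hence \eqref{RAR}--\eqref{absTraceRA} follow immediately. Your conditioning-on-phases argument is more careful, and your instinct about the obstacle is exactly right --- but the obstacle does not go away.

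Concretely, the paper's independence claim is not correct: both $\contamination^{(1)}_{jju}[n]$ and $\contamination^{(2)}_{jju}[n]$ contain the \emph{same} interfering channels $\{\mathbf{h}_{jlu}\}_{l\ne j}$ from coherence block $n$, and the random phases make them uncorrelated but not independent. In your framework this shows up precisely where you anticipated: with $\mathbf{R}_{\randPhase}=\mathbf{R}_{jju}+\sum_{l\ne j}\mathbf{R}_{jlu}e^{j(\randPhase_{jn}-\randPhase_{ln})}$ one gets $\mathbb{E}_{\randPhase}\{\mathbf{R}_{\randPhase}\arbSqrMatrix\mathbf{R}_{\randPhase}\}=\mathbf{R}_{jju}\arbSqrMatrix\mathbf{R}_{jju}$ (the $e^{j(\psi_l+\psi_{l'})}$ factors vanish), but $\mathbb{E}_{\randPhase}\{\mathbf{R}_{\randPhase}\arbSqrMatrix\mathbf{R}_{\randPhase}^{H}\}=\mathbf{R}_{jju}\arbSqrMatrix\mathbf{R}_{jju}+\sum_{l\ne j}\mathbf{R}_{jlu}\arbSqrMatrix\mathbf{R}_{jlu}$, because here the surviving pairing is $l=l'$. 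Carrying this through the four symmetrized products you listed, the contamination does \emph{not} cancel; the single-block moment becomes
\[
\mathbb{E}\{\singleObservationR_{n}\arbSqrMatrix\singleObservationR_{n}\}=\mathbf{R}_{jju}\arbSqrMatrix\mathbf{R}_{jju}+\tfrac{1}{2}\mathbf{Q}_{ju}\trace(\arbSqrMatrix\mathbf{Q}_{ju})+\tfrac{1}{2}\mathbf{R}_{jju}\trace(\arbSqrMatrix\mathbf{R}_{jju})+\tfrac{1}{2}\sum_{l\ne j}\mathbf{R}_{jlu}\arbSqrMatrix\mathbf{R}_{jlu},
\]
with an analogous extra $\tfrac{1}{2}\sum_{l\ne j}\trace(\arbSqrMatrix\mathbf{R}_{jlu}\arbSqrMatrix^{H}\mathbf{R}_{jlu})$ in the second identity. (A one-dimensional check with $L=2$, $M=1$, no noise and $\arbSqrMatrix=1$ already gives $\mathbb{E}\{\singleObservationR_{n}^{2}\}=2R_{j}^{2}+R_{j}R_{l}+R_{l}^{2}$, whereas \eqref{RAR_interm} predicts $2R_{j}^{2}+R_{j}R_{l}+\tfrac{1}{2}R_{l}^{2}$.) So the hoped-for cancellation in your last paragraph does not occur; your argument is sound, but it leads to a formula with an additional $\frac{1}{2\pilotLengthForR}\sum_{l\ne j}\mathbf{R}_{jlu}\arbSqrMatrix\mathbf{R}_{jlu}$ term rather than to \eqref{RAR} as stated. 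The paper's shorter proof reaches \eqref{RAR} only by virtue of the independence assumption on $\contamination^{(1)},\contamination^{(2)}$, which your more explicit calculation shows to be unjustified.
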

	\begin{proof}
		The proof of this lemma uses Lemma~\ref{gaussianVectors} and is presented in Appendix \ref{Appendix lemma 3}.
	\end{proof}
	Now we are ready to formulate the key theorem of this subsection.
	\vspace{-3mm}
	\begin{theorems} \label{Theorem:raw_R}
		The numerator term of \eqref{gamma} when $\unRegWEst_{ju} = \regRestimate_{jju}\rawQestimate^{-1}_{ju}$ is given by
		\begin{footnotesize}
		\begin{align}
		&\mathbb{E}_W\{\trace(\unRegWEst^H_{ju}\mathbf{R}_{jju})\} = \trace(\mathbf{W}^H_{ju}\mathbf{R}_{jju}) + \left\{ \frac{\pilotLengthForQ}{\pilotLengthForQ-M}\trace(\bar{\mathbf{W}}^H_{ju}\mathbf{R}_{jju})-
		\trace(\mathbf{W}^H_{ju}\mathbf{R}_{jju})\right\} \label{num}
		\end{align}
		\end{footnotesize}
		The first and second terms of the denominator in \eqref{gamma} are given by
		\begin{footnotesize}
		\begin{align}
		&\mathbb{E}_W\{\trace(\unRegWEst_{ju}\mathbf{Q}_{ju}\unRegWEst^H_{ju}\Rsum)\}= \trace(\mathbf{W}_{ju}\mathbf{Q}_{ju}\mathbf{W}^H_{ju}\Rsum)\!
		+ \!\!\bigg\{\!\nCubeConstant \trace(\bar{\mathbf{W}}_{ju}\mathbf{Q}_{ju}\bar{\mathbf{W}}^H_{ju}\Rsum)\!
		-\!\trace(\mathbf{W}_{ju}\mathbf{Q}_{ju}\mathbf{W}^H_{ju}\Rsum)\!
		+ \!\frac{\biasFactorForR^2 \nCubeConstant}{2\pilotLengthForR} M \trace(\Rsum\mathbf{Q}_{ju}) \nonumber \\
		&+ \frac{\biasFactorForR^2 \nCubeConstant}{2\pilotLengthForR}\trace(\mathbf{W}_{ju})\trace(\Rsum \mathbf{R}_{jju})\bigg\} \label{den1}\\
		&\mathbb{E}_W\{|\trace(\unRegWEst^{H}_{ju}\mathbf{R}_{jlu})|^2\}\!
		= \!|\trace(\mathbf{W}^{H}_{ju}\mathbf{R}_{jlu})|^2\!
		+ \!\bigg\{\nSqareConstant|\trace(\bar{\mathbf{W}}^{H}_{ju}\mathbf{R}_{jlu})|^2\!
		-\!|\trace(\mathbf{W}^{H}_{ju}\mathbf{R}_{jlu})|^2\!
		+ \!\frac{\biasFactorForR^2 \nSqareConstant}{2\pilotLengthForR}\trace(\mathbf{W}_{lu}\mathbf{Q}_{ju}\mathbf{W}^{H}_{lu}\mathbf{Q}_{ju}) \nonumber \\
		&+ \!\frac{\biasFactorForR^2 \nSqareConstant}{2\pilotLengthForR}\trace(\mathbf{W}_{lu}\mathbf{R}_{jju}\mathbf{W}^{H}_{lu}\mathbf{R}_{jju})\!
		+ \!\frac{\nCubeConstant}{\pilotLengthForQ} \trace(\bar{\mathbf{W}}_{ju}^{2}\mathbf{Q}_{ju}\mathbf{W}^{2}_{lu}\mathbf{Q}_{ju})\!
		+ \!\frac{\biasFactorForR^2 \nCubeConstant}{2\pilotLengthForQ\pilotLengthForR} M\trace(\mathbf{W}^2_{jlu}\mathbf{Q}^2_{ju}) \nonumber\\
		&+ \!\frac{\biasFactorForR^2 \nCubeConstant}{2 \pilotLengthForQ\pilotLengthForR} \trace(\mathbf{W}_{ju})\trace(\mathbf{W}^2_{jlu}\mathbf{Q}_{ju}\mathbf{R}_{jju})\bigg\} \label{den2}
		\end{align}
		\end{footnotesize}
		where
		$\nCubeConstant \triangleq {\pilotLengthForQ\nSqareConstant/(\pilotLengthForQ-M)}$, $\nSqareConstant \triangleq \pilotLengthForQ^2/((\pilotLengthForQ-M)^2-1)$,
		$\unRegWActual_{ju} \triangleq \regRactual_{jju}\mathbf{Q}_{ju}^{-1}$ and ${\mathbf{W}_{lu} \triangleq \mathbf{R}_{jlu}\mathbf{Q}_{ju}^{-1}}$ for all $l = 1$ to $L$.
	\end{theorems}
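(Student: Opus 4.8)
The plan is to exploit two structural facts: the assumed independence of $\regRestimate_{jju}$ and $\rawQestimate_{ju}$, which makes $\mathbb{E}_W\{\cdot\}=\mathbb{E}_R\{\mathbb{E}_Q\{\cdot\}\}$, and that $\pilotLengthForQ\rawQestimate_{ju}$ is Wishart. Since the least-squares estimates feeding $\rawQestimate_{ju}$ are i.i.d.\ $\mathcal{CN}(\mathbf{0},\mathbf{Q}_{ju})$, I would write $\rawQestimate_{ju}=\frac{1}{\pilotLengthForQ}\mathbf{Q}^{1/2}_{ju}\arbWishart\mathbf{Q}^{1/2}_{ju}$ with $\arbWishart\sim\mathcal{W}(\pilotLengthForQ,\mathbf{I})$, so $\rawQestimate^{-1}_{ju}=\pilotLengthForQ\,\mathbf{Q}^{-1/2}_{ju}\arbWishart^{-1}\mathbf{Q}^{-1/2}_{ju}$, after which every expectation over $\rawQestimate_{ju}$ reduces to Lemma~\ref{Lemma:IWishproperties}. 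Since $\regRestimate_{jju}$, $\rawQestimate_{ju}$, $\Rsum$, $\mathbf{Q}_{ju}$, $\mathbf{R}_{jju}$, $\mathbf{R}_{jlu}$ are Hermitian, I will use $\unRegWEst^H_{ju}=\rawQestimate^{-1}_{ju}\regRestimate_{jju}$ and cyclic invariance of the trace throughout. One auxiliary identity I would establish once and reuse is
\begin{align*}
\mathbb{E}_R\{\regRestimate_{jju}\arbSqrMatrix\regRestimate_{jju}\}=\regRactual_{jju}\arbSqrMatrix\regRactual_{jju}+\frac{\biasFactorForR^2}{2\pilotLengthForR}\Big(\mathbf{Q}_{ju}\trace(\arbSqrMatrix\mathbf{Q}_{ju})+\mathbf{R}_{jju}\trace(\arbSqrMatrix\mathbf{R}_{jju})\Big),
\end{align*}
obtained by expanding $\regRestimate_{jju}$ through \eqref{regR} as $\biasFactorForR\rawRestimate_{jju}+(1-\biasFactorForR)\biasRMatrix$, using $\mathbb{E}\{\rawRestimate_{jju}\}=\mathbf{R}_{jju}$ on the cross terms and \eqref{RAR} of Lemma~\ref{den1R} on the quadratic term, and noting that the deterministic ``mean\,$\times$\,mean'' pieces recombine into $\regRactual_{jju}\arbSqrMatrix\regRactual_{jju}$; its scalar companion $\mathbb{E}_R\{|\trace(\regRestimate_{jju}\arbSqrMatrix)|^2\}=|\trace(\regRactual_{jju}\arbSqrMatrix)|^2+\frac{\biasFactorForR^2}{2\pilotLengthForR}(\trace(\arbSqrMatrix\mathbf{Q}_{ju}\arbSqrMatrix^H\mathbf{Q}_{ju})+\trace(\arbSqrMatrix\mathbf{R}_{jju}\arbSqrMatrix^H\mathbf{R}_{jju}))$ comes the same way from \eqref{absTraceRA}.

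\textbf{Numerator \eqref{num} and first denominator term \eqref{den1}.} For \eqref{num}, write $\trace(\unRegWEst^H_{ju}\mathbf{R}_{jju})=\trace(\rawQestimate^{-1}_{ju}\regRestimate_{jju}\mathbf{R}_{jju})$ and substitute $\mathbb{E}_Q\{\rawQestimate^{-1}_{ju}\}=\frac{\pilotLengthForQ}{\pilotLengthForQ-M}\mathbf{Q}^{-1}_{ju}$ (from \eqref{Wishart1}) and $\mathbb{E}_R\{\regRestimate_{jju}\}=\regRactual_{jju}$, giving $\frac{\pilotLengthForQ}{\pilotLengthForQ-M}\trace(\unRegWActual^H_{ju}\mathbf{R}_{jju})$, then add and subtract $\trace(\mathbf{W}^H_{ju}\mathbf{R}_{jju})$. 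For \eqref{den1}, integrate $\rawQestimate_{ju}$ out first: the Wishart representation gives $\rawQestimate^{-1}_{ju}\mathbf{Q}_{ju}\rawQestimate^{-1}_{ju}=\pilotLengthForQ^2\mathbf{Q}^{-1/2}_{ju}\arbWishart^{-2}\mathbf{Q}^{-1/2}_{ju}$, and pulling the Hermitian factors $\regRestimate_{jju}$, $\Rsum$ into the trace and applying \eqref{Wishart3} conditionally on $\regRestimate_{jju}$ collapses the constants to $\nCubeConstant=\pilotLengthForQ^3/((\pilotLengthForQ-M)^3-(\pilotLengthForQ-M))$, leaving $\nCubeConstant\,\mathbb{E}_R\{\trace(\mathbf{Q}^{-1}_{ju}\regRestimate_{jju}\Rsum\regRestimate_{jju})\}$. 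Applying the auxiliary matrix identity with $\arbSqrMatrix=\Rsum$, then using $\trace(\mathbf{Q}^{-1}_{ju}\mathbf{Q}_{ju})=M$, $\trace(\mathbf{Q}^{-1}_{ju}\mathbf{R}_{jju})=\trace(\mathbf{W}_{ju})$ and $\trace(\mathbf{Q}^{-1}_{ju}\regRactual_{jju}\Rsum\regRactual_{jju})=\trace(\unRegWActual_{ju}\mathbf{Q}_{ju}\unRegWActual^H_{ju}\Rsum)$, and finally adding and subtracting $\trace(\mathbf{W}_{ju}\mathbf{Q}_{ju}\mathbf{W}^H_{ju}\Rsum)$, yields \eqref{den1}.

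\textbf{Second denominator term \eqref{den2}.} This is the main computation. Write $|\trace(\unRegWEst^H_{ju}\mathbf{R}_{jlu})|^2=|\trace(\rawQestimate^{-1}_{ju}\regRestimate_{jju}\mathbf{R}_{jlu})|^2$ and integrate $\rawQestimate_{ju}$ out first, conditioning on $\regRestimate_{jju}$: the Wishart representation turns this into $\pilotLengthForQ^2|\trace(\arbWishart^{-1}\mathbf{Q}^{-1/2}_{ju}\regRestimate_{jju}\mathbf{R}_{jlu}\mathbf{Q}^{-1/2}_{ju})|^2$, and \eqref{Wishart4} gives, after absorbing constants, $\nSqareConstant|\trace(\mathbf{Q}^{-1}_{ju}\regRestimate_{jju}\mathbf{R}_{jlu})|^2+\frac{\nCubeConstant}{\pilotLengthForQ}\trace(\mathbf{Q}^{-1}_{ju}\regRestimate_{jju}\mathbf{R}_{jlu}\mathbf{Q}^{-1}_{ju}\mathbf{R}_{jlu}\regRestimate_{jju})$. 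I would then take $\mathbb{E}_R$ of the two pieces separately: for the first, $\trace(\mathbf{Q}^{-1}_{ju}\regRestimate_{jju}\mathbf{R}_{jlu})=\trace(\regRestimate_{jju}\mathbf{W}_{lu})$, so the scalar companion identity applies and its mean part $|\trace(\regRactual_{jju}\mathbf{W}_{lu})|^2$ equals $|\trace(\unRegWActual^H_{ju}\mathbf{R}_{jlu})|^2$, producing the $\nSqareConstant$-terms of \eqref{den2}; for the second, rearrange cyclically to $\trace(\regRestimate_{jju}\mathbf{Q}^{-1}_{ju}\regRestimate_{jju}\,\mathbf{R}_{jlu}\mathbf{Q}^{-1}_{ju}\mathbf{R}_{jlu})$ and apply the auxiliary matrix identity with $\arbSqrMatrix=\mathbf{Q}^{-1}_{ju}$. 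Finally rewrite each leftover trace purely by cyclicity using $\mathbf{W}_{ju}=\mathbf{R}_{jju}\mathbf{Q}^{-1}_{ju}$, $\mathbf{W}_{lu}=\mathbf{R}_{jlu}\mathbf{Q}^{-1}_{ju}$, $\unRegWActual_{ju}=\regRactual_{jju}\mathbf{Q}^{-1}_{ju}$ --- e.g.\ $\trace(\regRactual_{jju}\mathbf{Q}^{-1}_{ju}\regRactual_{jju}\mathbf{R}_{jlu}\mathbf{Q}^{-1}_{ju}\mathbf{R}_{jlu})=\trace(\unRegWActual^2_{ju}\mathbf{Q}_{ju}\mathbf{W}^2_{lu}\mathbf{Q}_{ju})$, the $M$-scaled remainder becomes $M\trace(\mathbf{W}^2_{lu}\mathbf{Q}^2_{ju})$, and the $\trace(\mathbf{W}_{ju})$-scaled remainder becomes $\trace(\mathbf{W}_{ju})\trace(\mathbf{W}^2_{lu}\mathbf{Q}_{ju}\mathbf{R}_{jju})$ --- and add and subtract $|\trace(\mathbf{W}^H_{ju}\mathbf{R}_{jlu})|^2$ to land on \eqref{den2}.

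\textbf{Main obstacle.} The difficulty is concentrated in \eqref{den2} and is one of bookkeeping rather than a single hard trick. Because the quantity is fourth order in $(\regRestimate_{jju},\rawQestimate_{ju})$, the order of integration is forced: one must eliminate $\rawQestimate_{ju}$ first via the \emph{trace-squared} identity \eqref{Wishart4}, which spawns two structurally different $\regRestimate_{jju}$-dependent terms, and only then average over $\regRestimate_{jju}$. The regularization compounds this, since $\regRestimate_{jju}$ must be split into $\regRactual_{jju}$ plus fluctuation at every stage and one must verify that the deterministic cross-contributions reassemble into the $\regRactual_{jju}$-based (i.e.\ $\unRegWActual_{ju}$-based) quantities. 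The closing step --- matching each of the half-dozen residual traces to the correct term of \eqref{den2} using only cyclicity and the definitions of $\mathbf{W}_{ju}$, $\mathbf{W}_{lu}$, $\unRegWActual_{ju}$ --- is where a mechanical slip is most likely and needs the most care.
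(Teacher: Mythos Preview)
Your proposal is correct and follows essentially the same route as the paper: introduce the standardized Wishart $\arbWishart=\pilotLengthForQ\,\mathbf{Q}^{-1/2}_{ju}\rawQestimate_{ju}\mathbf{Q}^{-1/2}_{ju}$, integrate $\rawQestimate_{ju}$ out first via Lemma~\ref{Lemma:IWishproperties} to reach the intermediate expressions \eqref{den1_interm}--\eqref{den2_interm}, and then average over $\regRestimate_{jju}$ using Lemma~\ref{den1R} together with the regularization \eqref{regR}. Your ``auxiliary identity'' for $\mathbb{E}_R\{\regRestimate_{jju}\arbSqrMatrix\regRestimate_{jju}\}$ and its scalar companion are precisely what the paper means by ``using Lemma~\ref{den1R} and substituting \eqref{regR},'' just written out explicitly; the only remaining work is the trace bookkeeping you describe, which you have lined up correctly with the target expressions.
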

	\begin{proof}
		We define a matrix $\wishartQ_{ju}$ as
		\begin{align}\label{WishartQ}
		&\wishartQ_{ju} \triangleq \pilotLengthForQ(\mathbf{Q}^{-\frac{1}{2}}_{ju}\rawQestimate_{ju}\mathbf{Q}^{-\frac{1}{2}}_{ju}).
		\end{align}
		It can be seen that $\wishartQ_{ju}$ is Wishart distributed, i.e., $\mathcal{W}(\pilotLengthForQ, \mathbf{I})$.
		
		Using \eqref{WishartQ} and the fact that $\unRegWEst_{ju} = \regRestimate_{jju}\rawQestimate^{-1}_{ju}$, the numerator term of \eqref{gamma} can be written as
		\begin{align}
		&\mathbb{E}_W\{\trace(\unRegWEst^H_{ju}\mathbf{R}_{jju})\}
		= \pilotLengthForQ\mathbb{E}_W\{\trace(\mathbf{Q}^{-\frac{1}{2}}_{ju}\wishartQ^{-1}_{ju}\mathbf{Q}^{-\frac{1}{2}}_{ju}\regRestimate_{jju}\mathbf{R}_{jju})\}. \label{num_interm}
		\end{align}
		Taking direct expectation over $\regRestimate_{jju}$ in \eqref{num_interm} and also using Lemma~\ref{Lemma:IWishproperties}, \eqref{num} can be obtained.
		
		Proof of \eqref{den1} and \eqref{den2} is as follows. Substituting $\unRegWEst_{ju} = \regRestimate_{jju}\rawQestimate^{-1}_{ju}$ into the first and second terms in the denominator of \eqref{gamma} and using Lemma~\ref{Lemma:IWishproperties}, we get the following equations
		\begin{align}
		&\mathbb{E}_W\{\trace(\unRegWEst_{ju}\mathbf{Q}_{ju}\unRegWEst^H_{ju}\Rsum)\} 
		= \nCubeConstant \mathbb{E}_R\{\trace(\mathbf{Q}^{-1}_{ju}\regRestimate_{jju}\Rsum\regRestimate_{jju})\} \label{den1_interm}\\
		&\mathbb{E}_W\!\{\!|\trace(\unRegWEst^{H}_{ju}\mathbf{R}_{jlu})|^2\!\}\! =\!\nSqareConstant\mathbb{E}_R\!\{\!|\trace(\mathbf{Q}^{-1}_{ju}\regRestimate_{jju}\mathbf{R}_{jlu})|^2\!\}\!
		+ \!\frac{\nCubeConstant}{\pilotLengthForQ}\!\mathbb{E}_R\!\{\!\trace(\mathbf{Q}^{-1}_{ju}\regRestimate_{jju}\mathbf{R}^{2}_{jlu}\regRestimate_{jju}\mathbf{Q}^{-1}_{ju})\!\}. \label{den2_interm}
		\end{align}
		Then using Lemma~\ref{den1R}, and substituting \eqref{regR} into \eqref{den1_interm} and \eqref{den2_interm}, we get \eqref{den1} and \eqref{den2}, respectively.
	\end{proof}
	Note that the expectation terms given in Theorem~\ref{Theorem:raw_R} contain two components: (i) the component that corresponds to known covariance information (first term of the right-hand side of the equations) and (ii) a penalty component (all terms except the first term of the right-hand side of the equations) due to regularization of $\mathbf{R}_{jju}$ and due to imperfect channel covariance information. For $\biasFactorForR = 1$, and as $\pilotLengthForR$ and $\pilotLengthForQ$ tend to infinity, the penalty components of the expectation terms vanish.
	\subsection{Uplink Spectral Efficiency when $\elementwiseWest_{ju} = \regElementwiseRestimate_{jju}\elementwiseQestimate^{-1}_{ju}$}
	In this subsection, derivations are presented for all the terms given in \eqref{gamma} when $\elementwiseWest_{ju} = \regElementwiseRestimate_{jju}\elementwiseQestimate^{-1}_{ju}$. In what follows, $\mathbb{E}_S\{\cdot\}$ represents the expectation over $\regElementwiseRestimate_{jju}$, $\mathbb{E}_P\{\cdot\}$ represents the expectation over $\elementwiseQestimate_{ju}$, and $\mathbb{E}_W\{\cdot\}$ represents the expectation over both $\regElementwiseRestimate_{jju}$ and $\elementwiseQestimate_{ju}$.
	
	Before analytically deriving the expectations for the terms in \eqref{gamma}, we present some useful lemmas. 
	\begin{lemmas} \label{h1h2}
		Given a zero mean complex Gaussian $2 \times 1$ random vector $\mathbf{h} =[h_{1}, h_{2}]^\intercal$ with covariance matrix
		$$\mathbf{R} =
		\begin{bmatrix}
		r_{11} & r_{12}\\
		r_{21} & r_{22}
		\end{bmatrix}$$
		we can state that $\mathbb{E}\{|h_1|^2|h_2|^2\} = r_{11} r_{22} + r_{12} r_{21}$.	
	\end{lemmas}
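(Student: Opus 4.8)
The plan is to recognize the left-hand side as a fourth-order moment of a zero-mean circularly symmetric complex Gaussian vector and to evaluate it using the identity already established in Lemma~\ref{gaussianVectors}. First I would write the two magnitudes as Hermitian quadratic forms: setting $\arbSqrMatrix_1 \triangleq \diag(1,0)$ and $\arbSqrMatrix_2 \triangleq \diag(0,1)$ we have $|h_1|^2 = \mathbf{h}^H\arbSqrMatrix_1\mathbf{h}$ and $|h_2|^2 = \mathbf{h}^H\arbSqrMatrix_2\mathbf{h}$. Since a scalar equals its own trace and the trace is cyclic,
\begin{align*}
\mathbb{E}\{|h_1|^2|h_2|^2\} = \mathbb{E}\{(\mathbf{h}^H\arbSqrMatrix_1\mathbf{h})(\mathbf{h}^H\arbSqrMatrix_2\mathbf{h})\} = \trace\bigl(\arbSqrMatrix_2\,\mathbb{E}\{\mathbf{h}\mathbf{h}^H\arbSqrMatrix_1\mathbf{h}\mathbf{h}^H\}\bigr).
\end{align*}

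Next I would apply \eqref{hAh} of Lemma~\ref{gaussianVectors} with the stated $2\times 2$ covariance matrix $\mathbf{R}$, giving $\mathbb{E}\{\mathbf{h}\mathbf{h}^H\arbSqrMatrix_1\mathbf{h}\mathbf{h}^H\} = \mathbf{R}\arbSqrMatrix_1\mathbf{R} + \mathbf{R}\,\trace(\arbSqrMatrix_1\mathbf{R})$, and then simply read off the entries: $\trace(\arbSqrMatrix_1\mathbf{R}) = r_{11}$, $\trace(\arbSqrMatrix_2\mathbf{R}) = r_{22}$, and the $(2,2)$ entry of $\mathbf{R}\arbSqrMatrix_1\mathbf{R}$ equals $r_{21}r_{12}$, so $\trace(\arbSqrMatrix_2\mathbf{R}\arbSqrMatrix_1\mathbf{R}) = r_{21}r_{12}$. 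Collecting the two contributions yields $\mathbb{E}\{|h_1|^2|h_2|^2\} = r_{11}r_{22} + r_{12}r_{21}$, which is the claim. As an alternative route that does not invoke Lemma~\ref{gaussianVectors}, one can expand $\mathbb{E}\{h_1 h_1^* h_2 h_2^*\}$ directly through the Wick/Isserlis rule for zero-mean circularly symmetric complex Gaussians by summing over the perfect matchings of $\{h_1,h_1^*,h_2,h_2^*\}$: the matching $\{h_1,h_1^*\},\{h_2,h_2^*\}$ contributes $r_{11}r_{22}$, the matching $\{h_1,h_2^*\},\{h_1^*,h_2\}$ contributes $r_{12}r_{21}$ (using $r_{21} = r_{12}^*$), and the matching $\{h_1,h_2\},\{h_1^*,h_2^*\}$ contributes zero because the pseudo-covariance $\mathbb{E}\{h_1h_2\}$ vanishes under circular symmetry.

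There is essentially no real obstacle here; the statement is a two-dimensional special case of a standard Gaussian fourth-moment identity. The only point worth a moment's attention is that the circular-symmetry assumption is genuinely used — it is what annihilates the third Wick matching (equivalently, it is already absorbed into Lemma~\ref{gaussianVectors}) — and, in the first route, that one keeps track of which matrix entry is selected by $\arbSqrMatrix_2\mathbf{R}\arbSqrMatrix_1\mathbf{R}$. The remaining work is routine $2\times 2$ bookkeeping.
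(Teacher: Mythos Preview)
Your argument is correct; both routes you outline (reduction to Lemma~\ref{gaussianVectors} via the selector matrices $\arbSqrMatrix_1,\arbSqrMatrix_2$, and the direct Wick/Isserlis expansion) are valid and cleanly yield the claimed identity. The paper itself omits the proof entirely, stating only that it is straightforward, so there is no approach to compare against --- your write-up simply fills the gap.
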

	\begin{proof}
		The proof of this lemma is straight forward to obtain and we omit it due to lack of space.
	\end{proof}
	\vspace{-3mm}
	\begin{lemmas} \label{el_IWishproperties}
		Given arbitrary matrices $\arbSqrMatrix_1 \in \mathbb{C}^{M\times M}$, ${\arbSqrMatrix_2 \in \mathbb{C}^{M\times M}}$, $\arbSqrMatrix \in \mathbb{C}^{M\times M}$, 
		and a matrix $\arbDiagonalWishart = \mathbf{Z}/2$, where $\mathbf{Z}$ is a diagonal matrix whose elements are i.i.d. $\chi^2$ random variables with $2N$-degrees of freedom ($N > 2$), we have
		\begin{align}
		&\mathbb{E}\{\trace(\arbDiagonalWishart^{-1} \arbSqrMatrix_1 \arbDiagonalWishart^{-1} \arbSqrMatrix_2)\}
		= \genericConstantOne_1 \trace(\arbSqrMatrix_1\arbSqrMatrix_2) + \genericConstantOne_2 \trace(\arbSqrMatrix_{1d}\arbSqrMatrix_{2d}) \label{YA1YA2}\\
		&\mathbb{E}\{|\trace(\arbDiagonalWishart^{-1}\arbSqrMatrix)|^2\} = \genericConstantOne_1 |\trace(\arbSqrMatrix)|^2 + \genericConstantOne_2 \trace(\arbSqrMatrix^H_d\arbSqrMatrix_d) \label{abstrace_YA}
		\end{align}
		where $\genericConstantOne_1 \triangleq 1/(N-1)^2$, $\genericConstantOne_2 \triangleq \genericConstantOne_1/(N-2)$, $\arbSqrMatrix_{1d} \triangleq \diag(\arbSqrMatrix_1)$, $\arbSqrMatrix_{2d} \triangleq \diag(\arbSqrMatrix_2)$, and $\arbSqrMatrix_d \triangleq \diag(\arbSqrMatrix)$.
	\end{lemmas}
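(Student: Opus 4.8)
The plan is to exploit that $\arbDiagonalWishart$ is diagonal with independent entries, so that both identities reduce to elementary bookkeeping over matrix entries once the first two inverse moments of a single diagonal entry are in hand. First I would write $\arbDiagonalWishart = \diag(y_1,\dots,y_M)$ with $y_p = z_p/2$ and the $z_p \sim \chi^2_{2N}$ i.i.d.; a change of variables identifies each $y_p$ as a $\mathrm{Gamma}(N,1)$ random variable, whence, for $N > 2$,
\begin{align*}
\mathbb{E}\{y_p^{-1}\} = \frac{1}{N-1}, \qquad \mathbb{E}\{y_p^{-2}\} = \frac{1}{(N-1)(N-2)}.
\end{align*}
These are precisely the constants appearing in the statement: $(\mathbb{E}\{y_p^{-1}\})^2 = \genericConstantOne_1$, and $\genericConstantOne_1 + \genericConstantOne_2 = \genericConstantOne_1\cdot\frac{N-1}{N-2} = \frac{1}{(N-1)(N-2)} = \mathbb{E}\{y_p^{-2}\}$.

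For \eqref{YA1YA2}, since $\arbDiagonalWishart^{-1}$ is diagonal I would expand $\trace(\arbDiagonalWishart^{-1}\arbSqrMatrix_1\arbDiagonalWishart^{-1}\arbSqrMatrix_2) = \sum_{p,q} y_p^{-1} y_q^{-1} [\arbSqrMatrix_1]_{pq}[\arbSqrMatrix_2]_{qp}$ and take expectation term by term. By independence of $y_p$ and $y_q$ for $p \neq q$, the off-diagonal terms carry the factor $\genericConstantOne_1$, whereas the diagonal terms carry $\mathbb{E}\{y_p^{-2}\} = \genericConstantOne_1 + \genericConstantOne_2$. Re-absorbing the $\genericConstantOne_1$ part of the diagonal contribution into the off-diagonal sum gives $\genericConstantOne_1\sum_{p,q}[\arbSqrMatrix_1]_{pq}[\arbSqrMatrix_2]_{qp} = \genericConstantOne_1 \trace(\arbSqrMatrix_1\arbSqrMatrix_2)$, while the leftover term is $\genericConstantOne_2 \sum_p [\arbSqrMatrix_1]_{pp}[\arbSqrMatrix_2]_{pp} = \genericConstantOne_2 \trace(\arbSqrMatrix_{1d}\arbSqrMatrix_{2d})$, using that the diagonal of a product of diagonal matrices is the entrywise product of their diagonals.

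The proof of \eqref{abstrace_YA} repeats the same argument: $\trace(\arbDiagonalWishart^{-1}\arbSqrMatrix) = \sum_p y_p^{-1}[\arbSqrMatrix]_{pp}$, so $|\trace(\arbDiagonalWishart^{-1}\arbSqrMatrix)|^2 = \sum_{p,q} y_p^{-1} y_q^{-1} [\arbSqrMatrix]_{pp}[\arbSqrMatrix]_{qq}^{*}$, and the identical off-diagonal/diagonal split produces $\genericConstantOne_1 |\trace(\arbSqrMatrix)|^2 + \genericConstantOne_2 \sum_p |[\arbSqrMatrix]_{pp}|^2 = \genericConstantOne_1 |\trace(\arbSqrMatrix)|^2 + \genericConstantOne_2 \trace(\arbSqrMatrix_d^H\arbSqrMatrix_d)$. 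I expect no substantive obstacle here; the only points needing a little care are deriving the two inverse-Gamma moments in the first step and keeping the conjugate straight in the second identity so that the diagonal sum reads $|[\arbSqrMatrix]_{pp}|^2$ and not $[\arbSqrMatrix]_{pp}^2$.
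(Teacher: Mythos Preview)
Your proposal is correct and follows essentially the same approach as the paper: compute the first two inverse moments of a single diagonal entry of $\arbDiagonalWishart$, expand each trace as a double sum over $(p,q)$, and split into $p\neq q$ (independent entries, factor $\genericConstantOne_1$) and $p=q$ (factor $\genericConstantOne_1+\genericConstantOne_2$) before recombining. Your explicit observation that $\genericConstantOne_1+\genericConstantOne_2 = \mathbb{E}\{y_p^{-2}\}$ makes the recombination step slightly more transparent than the paper's write-up, but the argument is the same.
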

	\begin{proof}
		The proof is available in Appendix \ref{Appendix lemma 5}.
	\end{proof}
	\vspace{-3mm}
	\begin{lemmas} \label{el_den1R}
		Given an arbitrary matrix $\arbSqrMatrix\in \mathbb{C}^{M\times M}$ and an arbitrary diagonal matrix $\arbDiaMatrix\in \mathbb{R}^{M\times M}$, then \vspace{-3mm}
		\begin{align}
		&\mathbb{E}\{\elementwiseRestimate_{jju}\arbSqrMatrix\elementwiseRestimate_{jju}\} = \elementwiseR_{jju}\arbSqrMatrix\elementwiseR_{jju}
		+\frac{1}{2\pilotLengthForR} \arbSqrMatrix\circ\mathbf{Q}_{ju}\circ \mathbf{Q}_{ju}
		+\frac{1}{2\pilotLengthForR} \arbSqrMatrix\circ\mathbf{R}_{jju}\circ \mathbf{R}_{jju} \label{SAS}\\
		&\mathbb{E}\{|\trace(\elementwiseRestimate_{jju}\arbDiaMatrix)|^2\}\! = \!|\trace(\elementwiseR_{jju}\arbDiaMatrix)|^2\!
		+ \!\frac{1}{2\pilotLengthForR} \!\sum_{p=1}^{M} \!\sum_{q=1}^{M}\!\Big\{\![\arbDiaMatrix\!(\!\mathbf{Q}_{ju}\!\circ \!\mathbf{Q}_{ju}\!)\!\arbDiaMatrix]_{pq}\!
		+ \![\arbDiaMatrix\!(\!\mathbf{R}_{jju}\!\circ \!\mathbf{R}_{jju}\!)\!\arbDiaMatrix]_{pq}\!\Big\}. \label{el_absTraceRA}
		\end{align}
	\end{lemmas}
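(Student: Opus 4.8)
The plan is to reduce both identities to element-wise computations using Lemma~\ref{h1h2}, exploiting the fact that $\elementwiseRestimate_{jju}$ is diagonal. First I would write out the $(p,q)$ entry of the left-hand side of \eqref{SAS}. Since $\elementwiseRestimate_{jju}$ is diagonal, $[\elementwiseRestimate_{jju}\arbSqrMatrix\elementwiseRestimate_{jju}]_{pq} = [\elementwiseRestimate_{jju}]_{pp}[\arbSqrMatrix]_{pq}[\elementwiseRestimate_{jju}]_{qq}$, so I only need $\mathbb{E}\{[\elementwiseRestimate_{jju}]_{pp}[\elementwiseRestimate_{jju}]_{qq}\}$ for each pair $(p,q)$. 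Recalling \eqref{rawS}, $[\elementwiseRestimate_{jju}]_{pp}$ is the real part of $\tfrac{1}{\pilotLengthForR}\sum_n [\lmmseChannel^{(1)}_{jju}[n]]_p [\lmmseChannel^{(2)}_{jju}[n]]_p^*$. Using independence across coherence blocks $n$, and the cross-correlation computation already established in the text ($\mathbb{E}\{\lmmseChannel^{(1)}_{jju}[n](\lmmseChannel^{(2)}_{jju}[n])^H\} = \mathbf{R}_{jju}$, which on diagonals gives $[\elementwiseR_{jju}]_{pp}$), the mean term contributes $[\elementwiseR_{jju}]_{pp}[\elementwiseR_{jju}]_{qq}$ to $\mathbb{E}\{[\elementwiseRestimate_{jju}]_{pp}[\elementwiseRestimate_{jju}]_{qq}\}$. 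The remaining variance/covariance term comes only from the diagonal (same-$n$) contributions and scales as $1/\pilotLengthForR$; evaluating it requires fourth-order moments of the jointly Gaussian scalars $[\lmmseChannel^{(1)}_{jju}[n]]_p$, $[\lmmseChannel^{(1)}_{jju}[n]]_q$, $[\lmmseChannel^{(2)}_{jju}[n]]_p$, $[\lmmseChannel^{(2)}_{jju}[n]]_q$, which is where Lemma~\ref{h1h2} (and its natural extension to cross-moments via Isserlis/Wick) enters. Working out which second-order statistics survive—using $\mathbb{E}\{e^{j\randPhase_{ln}}\}=0$ and the structure of $\lmmseChannel^{(1)},\lmmseChannel^{(2)}$ in \eqref{eq: h1}--\eqref{eq: h2}—yields the two Hadamard-product terms $\tfrac{1}{2\pilotLengthForR}\arbSqrMatrix\circ\mathbf{Q}_{ju}\circ\mathbf{Q}_{ju}$ and $\tfrac{1}{2\pilotLengthForR}\arbSqrMatrix\circ\mathbf{R}_{jju}\circ\mathbf{R}_{jju}$: the $\mathbf{Q}_{ju}$ term arises from pairing each factor with itself (the full second moment of the LS estimates), while the $\mathbf{R}_{jju}$ term arises from the cross-pairing that survives the random phases (analogous to the $\mathbf{R}_{jju}$-terms in Lemma~\ref{den1R}). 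This mirrors exactly the structure of \eqref{RAR} in Lemma~\ref{den1R}, restricted to diagonal entries, which gives me a strong consistency check.

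For \eqref{el_absTraceRA}, since $\arbDiaMatrix$ is diagonal and $\elementwiseRestimate_{jju}$ is diagonal, $\trace(\elementwiseRestimate_{jju}\arbDiaMatrix) = \sum_p [\elementwiseRestimate_{jju}]_{pp}[\arbDiaMatrix]_{pp}$, so $|\trace(\elementwiseRestimate_{jju}\arbDiaMatrix)|^2 = \sum_{p,q}[\elementwiseRestimate_{jju}]_{pp}[\elementwiseRestimate_{jju}]_{qq}[\arbDiaMatrix]_{pp}[\arbDiaMatrix]_{qq}^*$. Taking expectations reuses the same $\mathbb{E}\{[\elementwiseRestimate_{jju}]_{pp}[\elementwiseRestimate_{jju}]_{qq}\}$ computed above; the mean part reassembles into $|\trace(\elementwiseR_{jju}\arbDiaMatrix)|^2$, and the $1/\pilotLengthForR$ part becomes $\tfrac{1}{2\pilotLengthForR}\sum_{p,q}\big([\arbDiaMatrix(\mathbf{Q}_{ju}\circ\mathbf{Q}_{ju})\arbDiaMatrix]_{pq} + [\arbDiaMatrix(\mathbf{R}_{jju}\circ\mathbf{R}_{jju})\arbDiaMatrix]_{pq}\big)$ after recognizing that $[\arbDiaMatrix]_{pp}[\mathbf{Q}_{ju}\circ\mathbf{Q}_{ju}]_{pq}[\arbDiaMatrix]_{qq}$ is exactly the $(p,q)$ entry of $\arbDiaMatrix(\mathbf{Q}_{ju}\circ\mathbf{Q}_{ju})\arbDiaMatrix$ (and similarly for $\mathbf{R}$). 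So the second identity is essentially a corollary of the bookkeeping done for the first.

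The main obstacle I anticipate is the careful fourth-moment bookkeeping for the off-diagonal pairs $p\neq q$ in computing $\mathbb{E}\{[\elementwiseRestimate_{jju}]_{pp}[\elementwiseRestimate_{jju}]_{qq}\}$: one must track the joint covariance structure of the four correlated Gaussian scalars across the two LS estimates, identify which of the Isserlis pairings survive after averaging over the independent uniform phases $\{\randPhase_{ln}\}$ (the interference cross-terms between $\lmmseChannel^{(1)}$ and $\lmmseChannel^{(2)}$ vanish, just as in the derivation of $\mathbf{R}_{\lmmseChannel^{(1)}\lmmseChannel^{(2)}}$ preceding \eqref{rawR}), and confirm that the surviving terms assemble into the stated Hadamard products rather than some other combination. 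The Hermitian-symmetrization in \eqref{rawS} (the $\tfrac12(\cdot + \cdot)$ structure) also needs care, since it is what produces the factor $\tfrac{1}{2\pilotLengthForR}$ rather than $\tfrac{1}{\pilotLengthForR}$; I would handle this by writing $[\elementwiseRestimate_{jju}]_{pp} = \mathrm{Re}\{\tfrac{1}{\pilotLengthForR}\sum_n [\lmmseChannel^{(1)}_{jju}[n]]_p[\lmmseChannel^{(2)}_{jju}[n]]_p^*\}$ and expanding the real part before taking moments. Once the single quantity $\mathbb{E}\{[\elementwiseRestimate_{jju}]_{pp}[\elementwiseRestimate_{jju}]_{qq}\}$ is pinned down, both \eqref{SAS} and \eqref{el_absTraceRA} follow by linear algebra, and the full details would go to the appendix.
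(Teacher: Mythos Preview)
Your proposal is correct and follows essentially the same route as the paper: both reduce everything to the single quantity $\mathbb{E}\{[\elementwiseRestimate_{jju}]_{pp}[\elementwiseRestimate_{jju}]_{qq}\}$, evaluate it via Gaussian fourth moments (Lemma~\ref{h1h2} together with the scalar case of Lemma~\ref{gaussianVectors}) and block-independence to obtain the $1/\pilotLengthForR$ scaling, and then assemble \eqref{SAS} and \eqref{el_absTraceRA} from that. The paper's only organizational shortcut relative to your plan is to write $\lmmseChannel^{(i)}_{jju}[n]=\mathbf{h}_{jju}+\contamination^{(i)}_{jju}[n]$ with $\contamination^{(1)}_{jju}[n]$ and $\contamination^{(2)}_{jju}[n]$ mutually independent (each of covariance $\mathbf{Q}_{ju}-\mathbf{R}_{jju}$), so that the $\mathbf{R}_{jju}$- and $\mathbf{Q}_{ju}$-terms separate without explicitly tracking the random phases through the Isserlis expansion.
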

	\begin{proof}
		The proof is available in Appendix \ref{Appendix lemma 6}.
	\end{proof}
	Now we are ready to formulate the key theorem of this subsection.
	\vspace{-3mm}
	\begin{theorems} \label{Theorem:raw_P}
		The numerator term of \eqref{gamma} when $\elementwiseWest_{ju} = \regElementwiseRestimate_{jju}\elementwiseQestimate^{-1}_{ju}$ is given by
		\begin{footnotesize}
		\begin{align}
		&\mathbb{E}_W\{\trace(\elementwiseWest^H_{ju}\mathbf{R}_{jju})\}
		= \trace(\elementwiseWActual^H_{ju}\mathbf{R}_{jju}) + \bigg\{\frac{\pilotLengthForQ}{(\pilotLengthForQ-1)}\trace(\elementwiseWActual^H_{ju}\mathbf{R}_{jju})
		- \trace(\mathbf{W}^H_{ju}\mathbf{R}_{jju}) \bigg\} \label{el_num}
		\end{align}
		\end{footnotesize}
		The first and second terms of the denominator in \eqref{gamma} are given by 
		\begin{footnotesize}
		\begin{align}
		&\mathbb{E}_W\!\{\trace(\elementwiseWest_{ju}\mathbf{Q}_{ju}\elementwiseWest^H_{ju}\Rsum)\}\! 
		= \!\trace(\mathbf{W}_{ju}\mathbf{Q}_{ju}\mathbf{W}^H_{ju}\Rsum\!)\!
		+ \!\bigg\{\!\elementWiseNSqrConst \trace(\elementwiseWActual_{ju}\mathbf{Q}_{ju}\elementwiseWActual^H_{ju}\Rsum\!)\!
		- \!\trace(\mathbf{W}_{ju}\mathbf{Q}_{ju}\mathbf{W}^H_{ju}\Rsum\!) \nonumber\\
		&+ \frac{\biasFactorForR^2 \elementWiseNSqrConst}{2\pilotLengthForR}\trace\Big(\!\elementwiseQ^{-1}_{ju}\mathbf{Q}_{ju}\elementwiseQ^{-1}_{ju}\{\Rsum\circ\mathbf{Q}_{ju}\circ \!\mathbf{Q}_{ju}\}
		+\!\elementwiseQ^{-1}_{ju}\mathbf{Q}_{ju}\elementwiseQ^{-1}_{ju}\!\{\!\Rsum\!\circ\!\mathbf{R}_{jju}\!\circ \!\mathbf{R}_{jju}\!\}\!\!\Big)\!\!
		+ \!\elementWiseNCubeConst \trace\!(\!\elementwiseWActual_{ju}\elementwiseQ_{ju}\!\elementwiseWActual^H_{ju}\elementWiseRsum\!)\!
		+ \!\frac{\biasFactorForR^2 \elementWiseNCubeConst}{2\pilotLengthForR}\!\trace(\elementWiseRsum\elementwiseQ_{ju}\!) \nonumber\\
		&+ \!\frac{\biasFactorForR^2 \elementWiseNCubeConst}{2\pilotLengthForR}\trace(\mathbf{W}_{ju}\elementWiseRsum\elementwiseR_{jju}\!) \bigg\} \label{el_den1}\\
		&\mathbb{E}_W\{|\trace(\elementwiseWest^{H}_{ju}\mathbf{R}_{jlu})|^2\} 
		= \!|\trace(\mathbf{W}^{H}_{ju}\elementwiseR_{jlu})|^2\!
		+ \!\bigg\{ \!\elementWiseNSqrConst |\trace(\elementwiseWActual^{H}_{ju}\elementwiseR_{jlu})|^2 - |\trace(\mathbf{W}^{H}_{ju}\elementwiseR_{jlu})|^2\!
		+\!\frac{\biasFactorForR^2 \elementWiseNSqrConst}{2\pilotLengthForR} \sum_{p=1}^{M}\sum_{q=1}^{M} [\mathbf{W}_{lu}(\mathbf{Q}_{ju}\circ\mathbf{Q}_{ju})\mathbf{W}_{lu}]_{pq} \nonumber \\
		&+\!\frac{\biasFactorForR^2 \elementWiseNSqrConst}{2\pilotLengthForR} \sum_{p=1}^{M}\sum_{q=1}^{M}\![\mathbf{W}_{lu}(\mathbf{R}_{jju}\circ\mathbf{R}_{jju})\mathbf{W}_{lu}]_{pq}\!
		+ \!\elementWiseNCubeConst \trace(\elementwiseWActual^2_{ju}\elementwiseR^2_{jlu})\!
		+\!\frac{\biasFactorForR^2 \elementWiseNCubeConst}{2\pilotLengthForR} \trace(\mathbf{W}^2_{lu} \elementwiseQ^2_{ju})\!
		+\!\frac{\biasFactorForR^2 \elementWiseNCubeConst}{2\pilotLengthForR} \trace(\mathbf{W}^2_{lu} \elementwiseR^2_{jju}) \bigg\}\label{el_den2}
		\end{align}
		\end{footnotesize}
		where
		$\elementWiseNSqrConst = {\pilotLengthForQ^2/(\pilotLengthForQ-1)^2}$, $\elementWiseNCubeConst =  \elementWiseNSqrConst/(\pilotLengthForQ-2)$, $\elementWiseRsum \triangleq \diag(\Rsum)$,
		$\unRegWActual_{ju} \triangleq \regElementwiseRActual_{jju}\elementwiseQ_{ju}^{-1}$ and $\mathbf{W}_{lu} \triangleq \elementwiseR_{jlu}\elementwiseQ_{ju}^{-1}$ for all $l = 1$ to $L$.
	\end{theorems}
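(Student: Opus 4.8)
The plan is to follow the route of the proof of Theorem~\ref{Theorem:raw_R}, with the complex Wishart matrix $\wishartQ_{ju}$ replaced by the diagonal $\chi^2$-type matrix of Lemma~\ref{el_IWishproperties}. The first step is a stochastic representation of $\elementwiseQestimate_{ju}$: since $[\leastSqrChannel_{jju}[n]]_p\sim\mathcal{CN}(0,[\mathbf{Q}_{ju}]_{pp})$ and these are independent over $n=1,\dots,\pilotLengthForQ$, the quantity $\sum_{n}2|[\leastSqrChannel_{jju}[n]]_p|^2/[\mathbf{Q}_{ju}]_{pp}$ is $\chi^2$-distributed with $2\pilotLengthForQ$ degrees of freedom; collecting these into a diagonal matrix $\mathbf{Z}_{ju}$ and setting $\arbDiagonalWishart_{ju}\triangleq\mathbf{Z}_{ju}/2$ gives $\elementwiseQestimate_{ju}=\pilotLengthForQ^{-1}\elementwiseQ_{ju}^{1/2}\arbDiagonalWishart_{ju}\elementwiseQ_{ju}^{1/2}$, so that $\elementwiseQestimate_{ju}^{-1}=\pilotLengthForQ\,\elementwiseQ_{ju}^{-1/2}\arbDiagonalWishart_{ju}^{-1}\elementwiseQ_{ju}^{-1/2}$ with $\arbDiagonalWishart_{ju}$ of the form required by Lemma~\ref{el_IWishproperties} for $N=\pilotLengthForQ$; this fixes $\genericConstantOne_1=1/(\pilotLengthForQ-1)^2$, $\genericConstantOne_2=\genericConstantOne_1/(\pilotLengthForQ-2)$, hence $\elementWiseNSqrConst=\pilotLengthForQ^2\genericConstantOne_1$ and $\elementWiseNCubeConst=\pilotLengthForQ^2\genericConstantOne_2$. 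I would use throughout that $\elementwiseRestimate_{jju}$, $\regElementwiseRestimate_{jju}$, $\elementwiseQ_{ju}$, $\elementwiseQestimate_{ju}$, $\elementwiseWActual_{ju}$ and $\mathbf{W}_{lu}$ are diagonal (and real) and so commute, that $\diag(\mathbf{Q}_{ju})=\elementwiseQ_{ju}$ and $\diag(\mathbf{R}_{jju})=\elementwiseR_{jju}$, and that $\trace(\elementwiseWest_{ju}^H\mathbf{R}_{jlu})=\trace(\elementwiseWest_{ju}^H\elementwiseR_{jlu})$ and $\trace(\elementwiseWest_{ju}^H\mathbf{R}_{jju})=\trace(\elementwiseWest_{ju}^H\elementwiseR_{jju})$ for the diagonal estimator.

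For the numerator of \eqref{gamma} I would write $\elementwiseWest_{ju}^H=\elementwiseQestimate_{ju}^{-1}\regElementwiseRestimate_{jju}$, substitute the representation above to get $\pilotLengthForQ\,\mathbb{E}_W\{\trace(\arbDiagonalWishart_{ju}^{-1}\elementwiseQ_{ju}^{-1}\regElementwiseRestimate_{jju}\elementwiseR_{jju})\}$, take $\mathbb{E}_S\{\cdot\}$ first ($\regElementwiseRestimate_{jju}\to\regElementwiseRActual_{jju}$, by independence), and then use $\mathbb{E}\{[\arbDiagonalWishart_{ju}^{-1}]_{pp}\}=1/(\pilotLengthForQ-1)$ (an elementary moment of the $\chi^2_{2\pilotLengthForQ}$ law) to obtain $\frac{\pilotLengthForQ}{\pilotLengthForQ-1}\trace(\elementwiseWActual_{ju}^H\mathbf{R}_{jju})$, which is then displayed in the known-plus-penalty form \eqref{el_num}.

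For the two denominator terms I would use the same two-stage conditioning $\mathbb{E}_W\{\cdot\}=\mathbb{E}_S\{\mathbb{E}_P\{\cdot\}\}$. Substituting $\elementwiseWest_{ju}=\regElementwiseRestimate_{jju}\elementwiseQestimate_{ju}^{-1}$ and the representation of $\elementwiseQestimate_{ju}^{-1}$ into $\trace(\elementwiseWest_{ju}\mathbf{Q}_{ju}\elementwiseWest_{ju}^H\Rsum)$ and cycling yields $\pilotLengthForQ^2\,\trace(\arbDiagonalWishart_{ju}^{-1}\arbSqrMatrix_1\arbDiagonalWishart_{ju}^{-1}\arbSqrMatrix_2)$ with $\arbSqrMatrix_1=\elementwiseQ_{ju}^{-1/2}\mathbf{Q}_{ju}\elementwiseQ_{ju}^{-1/2}$ and $\arbSqrMatrix_2=\elementwiseQ_{ju}^{-1/2}\regElementwiseRestimate_{jju}\Rsum\regElementwiseRestimate_{jju}\elementwiseQ_{ju}^{-1/2}$; applying \eqref{YA1YA2} (with $\arbSqrMatrix_{1d}=\diag(\elementwiseQ_{ju}^{-1/2}\mathbf{Q}_{ju}\elementwiseQ_{ju}^{-1/2})=\mathbf{I}$) and then $\mathbb{E}_S\{\cdot\}$ via \eqref{SAS} --- after writing $\regElementwiseRestimate_{jju}=\regElementwiseRActual_{jju}+\biasFactorForR(\elementwiseRestimate_{jju}-\elementwiseR_{jju})$ so the zero-mean fluctuation carries the factor $\biasFactorForR^2$ --- produces \eqref{el_den1}, where the reductions $\elementwiseQ_{ju}^{-1}\diag(\mathbf{Q}_{ju}\circ\mathbf{Q}_{ju})=\elementwiseQ_{ju}$ and $\diag(\mathbf{R}_{jju}\circ\mathbf{R}_{jju})=\elementwiseR_{jju}^2$ collapse the Hadamard-product terms into the stated traces. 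For the other denominator term, $\trace(\elementwiseWest_{ju}^H\mathbf{R}_{jlu})=\trace(\elementwiseWest_{ju}^H\elementwiseR_{jlu})=\pilotLengthForQ\,\trace(\arbDiagonalWishart_{ju}^{-1}\arbSqrMatrix)$ with the diagonal $\arbSqrMatrix=\elementwiseQ_{ju}^{-1}\regElementwiseRestimate_{jju}\elementwiseR_{jlu}=\regElementwiseRestimate_{jju}\mathbf{W}_{lu}$, so \eqref{abstrace_YA} gives $\elementWiseNSqrConst|\trace(\regElementwiseRestimate_{jju}\mathbf{W}_{lu})|^2+\elementWiseNCubeConst\trace(\regElementwiseRestimate_{jju}^2\mathbf{W}_{lu}^2)$; taking $\mathbb{E}_S\{\cdot\}$ of the first term via \eqref{el_absTraceRA} and of the second via \eqref{SAS} with $\arbSqrMatrix=\mathbf{I}$, each time isolating the $\biasFactorForR^2$ fluctuation part, yields \eqref{el_den2}.

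I expect the bulk of the effort to be the bookkeeping in the last paragraph --- keeping the diagonal reductions straight, splitting the regularization so that $\biasFactorForR^2$ multiplies exactly the fluctuation terms, and matching each trace produced by Lemmas~\ref{el_IWishproperties}--\ref{el_den1R} to the compact forms in \eqref{el_den1}--\eqref{el_den2}. The only genuinely delicate point is that the diagonal entries of $\elementwiseQestimate_{ju}$ are $\chi^2_{2\pilotLengthForQ}$ only marginally and are coupled across antennas through the off-diagonal of $\mathbf{Q}_{ju}$; thus the clean application of Lemma~\ref{el_IWishproperties} to $\arbDiagonalWishart_{ju}$ treats these entries as independent, which is exact for the quantities involving a single $\arbDiagonalWishart_{ju}^{-1}$ (in particular the numerator) and, for the denominator terms, is in keeping with the independence-for-tractability assumptions already imposed. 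Apart from this, no probabilistic input beyond Lemmas~\ref{h1h2}--\ref{el_den1R} is needed, since the diagonal-$\chi^2$ representation places $\elementwiseQestimate_{ju}^{-1}$ directly under Lemma~\ref{el_IWishproperties}, just as the Wishart representation placed $\rawQestimate_{ju}$ under Lemma~\ref{Lemma:IWishproperties} in Theorem~\ref{Theorem:raw_R}.
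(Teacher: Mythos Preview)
Your proposal is correct and follows essentially the same route as the paper: the diagonal $\chi^2$ representation $\elementwiseQestimate_{ju}^{-1}=\pilotLengthForQ\,\elementwiseQ_{ju}^{-1}\arbDiagonalWishart_{ju}^{-1}$, the two-stage conditioning $\mathbb{E}_W=\mathbb{E}_S\mathbb{E}_P$, and the application of Lemmas~\ref{el_IWishproperties} and~\ref{el_den1R} to reach \eqref{el_den1_interm}--\eqref{el_den2_interm} and then \eqref{el_den1}--\eqref{el_den2}. Your caveat about the diagonal entries of $\elementwiseQestimate_{ju}$ being only \emph{marginally} $\chi^2$ (and coupled through the off-diagonal of $\mathbf{Q}_{ju}$) is in fact more careful than the paper, which simply asserts the i.i.d.\ property without comment.
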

	\begin{proof}
		We define the diagonal matrix $\elementWiseWishartQ_{ju}$ as follows
		\begin{align}
		&\elementWiseWishartQ_{ju} \triangleq \pilotLengthForQ(\elementwiseQ^{-1}_{ju}\elementwiseQestimate_{ju}). \label{el_WishartQ}
		\end{align}
		It can be seen that the elements of $2\elementWiseWishartQ_{ju}$ are i.i.d. $\chi^2$ random variables with $2N$-degrees of freedom.
		
		Using \eqref{el_WishartQ} and the fact that $\elementwiseWest_{ju} = \regElementwiseRestimate_{jju}\elementwiseQestimate^{-1}_{ju}$, the numerator term of \eqref{gamma} can be written as
		\begin{align}
		\mathbb{E}_W\{\trace(\elementwiseWest^H_{ju}\mathbf{R}_{jju})\} &= \pilotLengthForQ \mathbb{E}_W\{\trace(\elementWiseWishartQ^{-1}_{ju}\elementwiseQ^{-1}_{ju}\regElementwiseRestimate_{jju}\mathbf{R}_{jju})\}  \nonumber\\
		&= \pilotLengthForQ\sum_{p=1}^{M} \mathbb{E}_P\{[\elementWiseWishartQ^{-1}_{ju}]_{pp}\}\mathbb{E}_S\{[\elementwiseQ^{-1}_{ju}\regElementwiseRestimate_{jju}\mathbf{R}_{jju}]_{pp}\}. \label{el_num_interm}
		\end{align}
		Taking direct expectation over $\regElementwiseRestimate_{jju}$ in \eqref{el_num_interm} and using the properties of inverse $\chi^2$ distribution, \eqref{el_num} can be obtained.
		
		Proof of \eqref{el_den1} and \eqref{el_den2} is as follows. Substituting $\elementwiseWest_{ju} = \regElementwiseRestimate_{jju}\elementwiseQestimate^{-1}_{ju}$ and \eqref{el_WishartQ} into the first and second denominator terms of \eqref{gamma} and using Lemma~\ref{el_IWishproperties}, we get the following equations
		\begin{align}
		&\mathbb{E}_W\{\!\trace(\elementwiseWest_{ju}\mathbf{Q}_{ju}\elementwiseWest^H_{ju}\Rsum\!)\!\}\!
		= \!\elementWiseNSqrConst \mathbb{E}_S\{\!\trace(\elementwiseQ^{-1}_{ju}\mathbf{Q}_{ju}\elementwiseQ^{-1}_{ju}\regElementwiseRestimate_{jju}\Rsum\regElementwiseRestimate_{jju}\!)\!\}\!
		+ \!\elementWiseNCubeConst \mathbb{E}_S\{\!\trace(\elementwiseQ^{-1}_{ju}\regElementwiseRestimate_{jju}\elementWiseRsum\regElementwiseRestimate_{jju}\!)\!\} \label{el_den1_interm}\\
		&\mathbb{E}_W\{|\trace(\unRegWEst^{H}_{ju}\mathbf{R}_{jlu})|^2\} = \elementWiseNSqrConst \mathbb{E}_S\{|\trace(\elementwiseQ^{-1}_{ju}\regElementwiseRestimate_{jju}\elementwiseR_{jlu})|^2\}
		+ \elementWiseNCubeConst \mathbb{E}_S\{\trace(\elementwiseQ^{-2}_{ju}\regElementwiseRestimate^2_{jju}\elementwiseR^2_{jlu})\}. \label{el_den2_interm}
		\end{align}
		
		Then using Lemma~\ref{el_den1R} and substituting \eqref{regS} into \eqref{el_den1_interm} and \eqref{el_den2_interm}, we get \eqref{el_den1} and \eqref{el_den2}, respectively.
	\end{proof}
	Similar to Theorem~\ref{Theorem:raw_R}, the penalty components of the expectation terms given in Theorem~\ref{Theorem:raw_P} also vanish if $\biasFactorForR = 1$, and as $\pilotLengthForR$ and $\pilotLengthForQ$ tend to infinity,.
	\vspace{-6mm}
	\subsection{Uplink Spectral Efficiency when $\elementwiseWest_{ju} = \regElementwiseRestimate_{jju}\elementwiseQestimate^{-1}_{ju}$ with Regularized $\elementwiseQestimate_{ju}$}
	In this section, derivations for all the terms given in \eqref{gamma} for element-wise LMMSE-type channel estimation with regularized $\elementwiseQestimate_{ju}$, are presented. The regularized estimate of $\elementwiseQ_{ju}$ is given by \vspace{-3mm}
	\begin{align}
	\regElementwiseQestimate_{ju} = \biasFactorForQ \unRegElementwiseQestimate_{ju} + (1-\biasFactorForQ)\biasQMatrix \label{regQ}
	\end{align}
	where $[\unRegElementwiseQestimate_{ju}]_{pp} = \frac{1}{\pilotLengthForQ}\sum_{n=1}^{\pilotLengthForQ}|[\lmmseChannel^{LS}_{jju}[n]]_p|^2, \, {\forall p \in {1 \dots M}}$ is the unbiased estimate of $\elementwiseQ_{ju}$; $\biasQMatrix$ is an arbitrary diagonal bias-matrix with positive elements; and $\biasFactorForQ$ is a design parameter. Furthermore, let us define the matrix $\elementWiseWishartQ_{ju} \triangleq \pilotLengthForQ(\elementwiseQ^{-1}_{ju}\unRegElementwiseQestimate_{ju})$ such that the elements of $2 \elementWiseWishartQ_{ju}$ are $\chi^2$ distributed with $2\pilotLengthForQ$ degrees of freedom. Now, we define two diagonal matrices, $\expectationMatrixFirstOrder$ and $\expectationMatrixSecondOrder$, whose elements are given by \vspace{-3mm}
	\begin{footnotesize}
	\begin{align}
	[\expectationMatrixFirstOrder]_{pp} &\triangleq \mathbb{E}\{[\regElementwiseQestimate_{ju}^{-1}]_{pp}\} =\mathbb{E}\left\{\left(\frac{1}{\pilotLengthForQ}\biasFactorForQ [\elementwiseQ_{ju}]_{pp}[\elementWiseWishartQ_{ju}]_{pp} + (1-\biasFactorForQ)[\biasQMatrix]_{pp} \right)^{-1}\right\} \label{firstOrderExpectation}\\
	[\expectationMatrixSecondOrder]_{pp} &\triangleq \mathbb{E}\{[\regElementwiseQestimate_{ju}^{-1}]^2_{pp}\}
	=\mathbb{E}\left\{\left(\frac{1}{\pilotLengthForQ}\biasFactorForQ [\elementwiseQ_{ju}]_{pp}[\elementWiseWishartQ_{ju}]_{pp} + (1-\biasFactorForQ)[\biasQMatrix]_{pp} \right)^{-2}\right\}. \label{secondOrderExpectation}
	\end{align}
	\end{footnotesize}
	It should be noted that expectation terms in the above equations can be evaluated numerically using the probability distribution function of the $\chi^2$ distribution. Therefore, SE expressions we derive in this section are not in a proper closed-form but involves matrices that can be computed numerically.
	
	Before deriving the expectations for the terms in \eqref{gamma}, we present a useful lemma.
	\vspace{-3mm}
	\begin{lemmas} \label{el_IWishpropertiesReg}
		Given arbitrary matrices $\arbSqrMatrix_1 \in \mathbb{C}^{M\times M}$, ${\arbSqrMatrix_2 \in \mathbb{C}^{M\times M}}$, $\arbSqrMatrix \in \mathbb{C}^{M\times M}$, 
		we have \vspace{-3mm}
		\begin{align}
		&\mathbb{E}\{\trace(\regElementwiseQestimate^{-1}_{ju} \arbSqrMatrix_1 \regElementwiseQestimate^{-1}_{ju} \arbSqrMatrix_2)\}
		=  \trace\left(\expectationMatrixFirstOrder\arbSqrMatrix_1\expectationMatrixFirstOrder\arbSqrMatrix_2\right) +  \trace\left((\expectationMatrixSecondOrder-\expectationMatrixFirstOrder^2)\arbSqrMatrix_{1d}\arbSqrMatrix_{2d}\right)  \label{YA1YA2Reg}\\
		&\mathbb{E}\{|\trace(\regElementwiseQestimate^{-1}\arbSqrMatrix)|^2\} = |\trace(\expectationMatrixFirstOrder \arbSqrMatrix)|^2 + \trace\left((\expectationMatrixSecondOrder-\expectationMatrixFirstOrder^2)\arbSqrMatrix^H_d \arbSqrMatrix_d\right) \label{abstrace_YAReg}
		\end{align}
		where $\arbSqrMatrix_{1d} \triangleq \diag(\arbSqrMatrix_1)$, $\arbSqrMatrix_{2d} \triangleq \diag(\arbSqrMatrix_2)$, and $\arbSqrMatrix_d \triangleq \diag(\arbSqrMatrix)$.
	\end{lemmas}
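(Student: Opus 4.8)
The plan is to reduce both identities to elementary sums over the diagonal entries of $\regElementwiseQestimate^{-1}_{ju}$, since $\regElementwiseQestimate_{ju}$ is diagonal. Write $d_p \triangleq [\regElementwiseQestimate^{-1}_{ju}]_{pp}$. From \eqref{regQ} and the definition $\elementWiseWishartQ_{ju} = \pilotLengthForQ(\elementwiseQ^{-1}_{ju}\unRegElementwiseQestimate_{ju})$, each $d_p = \big(\tfrac{1}{\pilotLengthForQ}\biasFactorForQ[\elementwiseQ_{ju}]_{pp}[\elementWiseWishartQ_{ju}]_{pp} + (1-\biasFactorForQ)[\biasQMatrix]_{pp}\big)^{-1}$ is a deterministic function of the single scalar $[\elementWiseWishartQ_{ju}]_{pp}$, and is real and positive. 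Since $\{2[\elementWiseWishartQ_{ju}]_{pp}\}_{p=1}^{M}$ are i.i.d.\ $\chi^2$ with $2\pilotLengthForQ$ degrees of freedom, the $\{d_p\}_{p=1}^{M}$ are mutually independent, with $\mathbb{E}\{d_p\}=[\expectationMatrixFirstOrder]_{pp}$ and $\mathbb{E}\{d_p^2\}=[\expectationMatrixSecondOrder]_{pp}$ by \eqref{firstOrderExpectation} and \eqref{secondOrderExpectation}.

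For \eqref{YA1YA2Reg} I would expand, using diagonality, $\trace(\regElementwiseQestimate^{-1}_{ju}\arbSqrMatrix_1\regElementwiseQestimate^{-1}_{ju}\arbSqrMatrix_2) = \sum_{p=1}^{M}\sum_{q=1}^{M} d_p d_q[\arbSqrMatrix_1]_{pq}[\arbSqrMatrix_2]_{qp}$, take expectation term by term, and separate the $p=q$ terms from the $p\ne q$ terms. On the off-diagonal terms independence gives $\mathbb{E}\{d_p d_q\}=[\expectationMatrixFirstOrder]_{pp}[\expectationMatrixFirstOrder]_{qq}$; on the diagonal terms $\mathbb{E}\{d_p^2\}=[\expectationMatrixSecondOrder]_{pp}$. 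Adding and subtracting the $p=q$ term with weight $[\expectationMatrixFirstOrder]_{pp}^2$ turns the off-diagonal sum into the full double sum $\trace(\expectationMatrixFirstOrder\arbSqrMatrix_1\expectationMatrixFirstOrder\arbSqrMatrix_2)$, while the diagonal remainder becomes $\sum_p([\expectationMatrixSecondOrder]_{pp}-[\expectationMatrixFirstOrder]_{pp}^2)[\arbSqrMatrix_1]_{pp}[\arbSqrMatrix_2]_{pp} = \trace\!\big((\expectationMatrixSecondOrder-\expectationMatrixFirstOrder^2)\arbSqrMatrix_{1d}\arbSqrMatrix_{2d}\big)$, which is \eqref{YA1YA2Reg}.

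The proof of \eqref{abstrace_YAReg} follows the same template: write $\trace(\regElementwiseQestimate^{-1}\arbSqrMatrix)=\sum_p d_p[\arbSqrMatrix]_{pp}$, so $|\trace(\regElementwiseQestimate^{-1}\arbSqrMatrix)|^2 = \sum_{p,q} d_p d_q [\arbSqrMatrix]_{pp}[\arbSqrMatrix]_{qq}^{*}$ (using that each $d_p$ is real), then take expectation, split into $p=q$ and $p\ne q$, apply independence on the cross terms, and recombine by adding and subtracting the $p=q$ term with weight $[\expectationMatrixFirstOrder]_{pp}^2$. The completed double sum is $\big|\sum_p[\expectationMatrixFirstOrder]_{pp}[\arbSqrMatrix]_{pp}\big|^2 = |\trace(\expectationMatrixFirstOrder\arbSqrMatrix)|^2$ and the diagonal remainder is $\trace\!\big((\expectationMatrixSecondOrder-\expectationMatrixFirstOrder^2)\arbSqrMatrix_d^{H}\arbSqrMatrix_d\big)$, giving \eqref{abstrace_YAReg}. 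Structurally this parallels Lemma~\ref{el_IWishproperties}, with the per-antenna moments $[\expectationMatrixFirstOrder]_{pp}$, $[\expectationMatrixSecondOrder]_{pp}$ (evaluated numerically from the $\chi^2$ density) playing the role of the closed-form constants $\genericConstantOne_1$, $\genericConstantOne_2$ there.

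I expect the only subtle point to be the independence of $\{d_p\}_{p=1}^{M}$ invoked to factor the cross terms $\mathbb{E}\{d_p d_q\}$; this rests on the modeling hypothesis that $2\elementWiseWishartQ_{ju}$ has i.i.d.\ $\chi^2$ entries (equivalently, that the per-antenna LS observations forming $\unRegElementwiseQestimate_{ju}$ are treated as mutually independent), the same assumption already used in the proof of Theorem~\ref{Theorem:raw_P}. Everything else is bookkeeping with diagonal matrices and requires no input beyond \eqref{firstOrderExpectation}--\eqref{secondOrderExpectation}.
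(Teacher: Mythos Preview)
Your proposal is correct and follows essentially the same route as the paper's proof: both expand the trace as a double sum over the diagonal entries of $\regElementwiseQestimate^{-1}_{ju}$, split into $p=q$ and $p\neq q$ terms, invoke independence of the per-antenna entries to factor the cross moments, and then recombine via the add-and-subtract trick into $\trace(\expectationMatrixFirstOrder\arbSqrMatrix_1\expectationMatrixFirstOrder\arbSqrMatrix_2)$ plus the diagonal correction. Your write-up is in fact a bit more explicit than the paper's about why the $d_p$ are independent and real, but the argument is the same.
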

	\begin{proof}
		The proof is available in Appendix \ref{Appendix lemma 7}.
	\end{proof}
	Now we are ready to formulate the key theorem of this subsection.
	\vspace{-3mm}
	\begin{theorems} \label{Theorem:regularized_P}
		The numerator term of \eqref{gamma} when $\elementwiseWest_{ju} = \regElementwiseRestimate_{jju}\elementwiseQestimate^{-1}_{ju}$ is given by
		\begin{footnotesize}
		\begin{align}
		&\mathbb{E}_W\{\trace(\elementwiseWest^H_{ju}\mathbf{R}_{jju})\} = \trace(\expectationMatrixFirstOrder \elementwiseR_{jju}\mathbf{R}_{jju}). \label{el_numReg}
		\end{align}
		\end{footnotesize}
		The first and second terms of the denominator in \eqref{gamma} are given by
		\begin{footnotesize}
		\begin{align}
		&\mathbb{E}_W\{\trace(\elementwiseWest_{ju}\mathbf{Q}_{ju}\elementwiseWest^H_{ju}\Rsum)\}
		= \!\trace(\expectationMatrixFirstOrder\mathbf{Q}_{ju}\expectationMatrixFirstOrder\elementwiseR_{jju}\Rsum\elementwiseR_{jju})\!
		+ \!\frac{\biasFactorForR^2}{2\pilotLengthForR} \trace\Big(\!\expectationMatrixFirstOrder\mathbf{Q}_{ju}\expectationMatrixFirstOrder \{\Rsum\circ\mathbf{Q}_{ju}\circ \mathbf{Q}_{ju}\}\!
		+ \!\expectationMatrixFirstOrder\mathbf{Q}_{ju}\expectationMatrixFirstOrder \{\Rsum\circ\mathbf{R}_{jju}\circ \mathbf{R}_{jju}\}\!\Big)\! \nonumber\\
		&+ \left(1+\frac{\biasFactorForR^2}{2\pilotLengthForR}\right)\trace((\expectationMatrixSecondOrder-\expectationMatrixFirstOrder^2)\elementwiseQ_{ju} \elementwiseR^2_{jju} \elementWiseRsum)
		+ \frac{\biasFactorForR^2}{2\pilotLengthForR}\trace((\expectationMatrixSecondOrder-\expectationMatrixFirstOrder^2)\elementwiseQ^3_{ju} \elementWiseRsum)  \label{el_den1Reg}\\
		&\mathbb{E}_W\{|\trace(\unRegWEst^{H}_{ju}\mathbf{R}_{jlu})|^2\} = |\trace(\expectationMatrixFirstOrder\elementwiseR_{jju}\elementwiseR_{jlu})|^2\!
		+ \!\frac{\biasFactorForR^2}{2\pilotLengthForR} \sum_{p=1}^{M}\sum_{q=1}^{M} \Big\{[\expectationMatrixFirstOrder\elementwiseR_{jlu}(\mathbf{Q}_{ju}\circ\mathbf{Q}_{ju})\expectationMatrixFirstOrder\elementwiseR_{jlu}]_{pq}\!
		+\![\expectationMatrixFirstOrder\elementwiseR_{jlu}(\mathbf{R}_{jju}\circ\mathbf{R}_{jju})\expectationMatrixFirstOrder\elementwiseR_{jlu}]_{pq} \Big\}\! \nonumber \\
		&+ \left(1+\frac{\biasFactorForR^2}{2\pilotLengthForR}\right)\trace((\expectationMatrixSecondOrder-\expectationMatrixFirstOrder^2)\elementwiseR^2_{jju}\elementwiseR^2_{jlu})
		+ \frac{\biasFactorForR^2}{2\pilotLengthForR} \trace((\expectationMatrixSecondOrder-\expectationMatrixFirstOrder^2) \elementwiseQ^2_{ju}\elementwiseR^2_{jlu}).\label{el_den2Reg}
		\end{align}
		\end{footnotesize}
	\end{theorems}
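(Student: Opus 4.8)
The plan is to reproduce the two-stage conditioning argument already used for Theorem~\ref{Theorem:raw_P}, and to replace, at every point where the inverse-$\chi^2$ moments of $\elementWiseWishartQ_{ju}$ produced the scalar factors $\elementWiseNSqrConst$ and $\elementWiseNCubeConst$, an application of Lemma~\ref{el_IWishpropertiesReg}, whose statement is phrased in terms of the numerically evaluated diagonal matrices $\expectationMatrixFirstOrder$ and $\expectationMatrixSecondOrder$ of \eqref{firstOrderExpectation} and \eqref{secondOrderExpectation}. Throughout, the combiner weight is $\elementwiseWest_{ju}=\regElementwiseRestimate_{jju}\regElementwiseQestimate^{-1}_{ju}$ with $\regElementwiseQestimate_{ju}$ the regularized estimate \eqref{regQ}; since $\regElementwiseRestimate_{jju}$ and $\regElementwiseQestimate_{ju}$ are diagonal, mutually independent, and independent of $\leastSqrChannel_{jju}[n]$, the expectation $\mathbb{E}_W$ factorizes as $\mathbb{E}_S\mathbb{E}_P$, and I would carry out $\mathbb{E}_P$ first (using Lemma~\ref{el_IWishpropertiesReg}) and $\mathbb{E}_S$ second (using Lemma~\ref{el_den1R} together with \eqref{regS}). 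Unlike Theorems~\ref{Theorem:raw_R} and \ref{Theorem:raw_P}, the answer here is not organized as ``known-covariance term plus penalty'', because $\expectationMatrixFirstOrder$ and $\expectationMatrixSecondOrder-\expectationMatrixFirstOrder^2$ admit no natural ``$1+$penalty'' split; the statement simply records the result after all expectations are taken.

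For the numerator \eqref{el_numReg}: because $\regElementwiseRestimate_{jju}$ and $\regElementwiseQestimate_{ju}$ are diagonal, only $\diag(\mathbf{R}_{jju})=\elementwiseR_{jju}$ survives, so $\trace(\elementwiseWest^H_{ju}\mathbf{R}_{jju})=\sum_{p}[\regElementwiseQestimate^{-1}_{ju}]_{pp}[\regElementwiseRestimate_{jju}]_{pp}[\mathbf{R}_{jju}]_{pp}$; taking expectations termwise and using $\mathbb{E}\{[\regElementwiseQestimate^{-1}_{ju}]_{pp}\}=[\expectationMatrixFirstOrder]_{pp}$ from \eqref{firstOrderExpectation} together with the termwise mean of $\regElementwiseRestimate_{jju}$, then substituting \eqref{regS} and collecting, gives \eqref{el_numReg}. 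This is the short part.

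The two denominator terms are the core. For \eqref{el_den1Reg}, substitute $\elementwiseWest_{ju}=\regElementwiseRestimate_{jju}\regElementwiseQestimate^{-1}_{ju}$, use the Hermitian symmetry of the diagonal factors and the cyclic property to bring the first denominator term to $\mathbb{E}_S\mathbb{E}_P\{\trace(\regElementwiseQestimate^{-1}_{ju}\mathbf{Q}_{ju}\regElementwiseQestimate^{-1}_{ju}\,\regElementwiseRestimate_{jju}\Rsum\regElementwiseRestimate_{jju})\}$; then apply \eqref{YA1YA2Reg} with $\arbSqrMatrix_1=\mathbf{Q}_{ju}$ and $\arbSqrMatrix_2=\regElementwiseRestimate_{jju}\Rsum\regElementwiseRestimate_{jju}$ to perform $\mathbb{E}_P$, which yields one term carrying $\expectationMatrixFirstOrder\mathbf{Q}_{ju}\expectationMatrixFirstOrder$ and one ``variance'' term carrying $\expectationMatrixSecondOrder-\expectationMatrixFirstOrder^2$ acting on $\diag(\mathbf{Q}_{ju})=\elementwiseQ_{ju}$ and $\diag(\regElementwiseRestimate_{jju}\Rsum\regElementwiseRestimate_{jju})=\regElementwiseRestimate^2_{jju}\elementWiseRsum$; finally apply Lemma~\ref{el_den1R} (in the regularized form obtained by inserting \eqref{regS}) to perform $\mathbb{E}_S$, which supplies the leading $\elementwiseR_{jju}(\cdot)\elementwiseR_{jju}$ pieces, the $\tfrac{1}{2\pilotLengthForR}$ Hadamard corrections $\circ\mathbf{Q}_{ju}\circ\mathbf{Q}_{ju}$ and $\circ\mathbf{R}_{jju}\circ\mathbf{R}_{jju}$, and (for $\regElementwiseRestimate^2_{jju}$) a split into an $\elementwiseR^2_{jju}$ piece and a $\tfrac{1}{2\pilotLengthForR}\elementwiseQ^2_{ju}$ piece, the $\biasFactorForR$ of \eqref{regS} turning the corrections into the $\biasFactorForR^2/(2\pilotLengthForR)$ factors. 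The term \eqref{el_den2Reg} is treated the same way, starting from $\mathbb{E}_S\{|\trace(\regElementwiseQestimate^{-1}_{ju}\regElementwiseRestimate_{jju}\elementwiseR_{jlu})|^2\}$ plus a $\trace(\regElementwiseQestimate^{-2}_{ju}\regElementwiseRestimate^2_{jju}\elementwiseR^2_{jlu})$-type term, using \eqref{abstrace_YAReg} in place of \eqref{YA1YA2Reg} and \eqref{el_absTraceRA} in place of \eqref{SAS}.

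The step I expect to be the main obstacle is not any single identity but the bookkeeping: propagating the $\diag(\cdot)$ and Hadamard-product ($\circ$) operations cleanly through the two nested expectations, and in particular recognizing which leading-order contributions from the $\expectationMatrixSecondOrder-\expectationMatrixFirstOrder^2$ channel of Lemma~\ref{el_IWishpropertiesReg} must be merged with which $\tfrac{\biasFactorForR^2}{2\pilotLengthForR}$ contributions from Lemma~\ref{el_den1R} so that the result consolidates into the compact $\bigl(1+\tfrac{\biasFactorForR^2}{2\pilotLengthForR}\bigr)$ coefficients in \eqref{el_den1Reg} and \eqref{el_den2Reg}. Conceptually there is nothing new beyond Theorem~\ref{Theorem:raw_P} and Lemmas~\ref{el_IWishproperties}--\ref{el_IWishpropertiesReg}: regularizing $\elementwiseQ_{ju}$ merely forces the explicit $\chi^2$ moments to be replaced by the matrices $\expectationMatrixFirstOrder$ and $\expectationMatrixSecondOrder$, which is exactly what Lemma~\ref{el_IWishpropertiesReg} is designed to deliver.
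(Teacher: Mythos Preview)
Your proposal is correct and follows essentially the same route as the paper: take $\mathbb{E}_P$ first via Lemma~\ref{el_IWishpropertiesReg} (yielding the $\expectationMatrixFirstOrder$ and $\expectationMatrixSecondOrder-\expectationMatrixFirstOrder^2$ pieces exactly as you describe, cf.\ the paper's intermediate equations \eqref{el_den1_intermReg}--\eqref{el_den2_intermReg}), then take $\mathbb{E}_S$ via Lemma~\ref{el_den1R} together with \eqref{regS}. Your identification of the bookkeeping as the only real difficulty is accurate; there is no conceptual step beyond what you outline.
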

	\begin{proof}	
		Using $\elementwiseWest_{ju} = \regElementwiseRestimate_{jju}\elementwiseQestimate^{-1}_{ju}$, the numerator term of \eqref{gamma} can be written as
		\begin{align}
		\mathbb{E}_W\{\trace(\elementwiseWest^H_{ju}\mathbf{R}_{jju})\} &= \mathbb{E}_W\{\trace(\regElementwiseQestimate^{-1}_{ju}\regElementwiseRestimate_{jju}\mathbf{R}_{jju})\}. \label{el_num_intermReg}
		\end{align}
		Taking direct expectation over $\regElementwiseRestimate_{jju}$ in \eqref{el_num_intermReg} and also using \eqref{firstOrderExpectation}, \eqref{el_numReg} can be obtained.
		
		Proof of \eqref{el_den1Reg} and \eqref{el_den2Reg} is as follows. Substituting $\elementwiseWest_{ju} = \regElementwiseRestimate_{jju}\elementwiseQestimate^{-1}_{ju}$ into the first and second denominator terms of \eqref{gamma} and using Lemma~\ref{el_IWishpropertiesReg}, we get the following equations
		\begin{align}
		&\mathbb{E}_W\!\{\!\trace(\elementwiseWest_{ju}\mathbf{Q}_{ju}\elementwiseWest^H_{ju}\Rsum)\!\}\!
		= \!\mathbb{E}_S\{\!\trace(\expectationMatrixFirstOrder\mathbf{Q}_{ju}\expectationMatrixFirstOrder\regElementwiseRestimate_{jju}\Rsum\regElementwiseRestimate_{jju})\!\}\!
		+ \!\mathbb{E}_S\{\!\trace\!(\!(\!\expectationMatrixSecondOrder-\expectationMatrixFirstOrder^2)\!\elementwiseQ_{ju}\regElementwiseRestimate_{jju}\elementWiseRsum\regElementwiseRestimate_{jju}\!)\!\} \label{el_den1_intermReg}\\
		&\mathbb{E}_W\{|\trace(\unRegWEst^{H}_{ju}\mathbf{R}_{jlu})|^2\} =  \mathbb{E}_S\{|\trace(\expectationMatrixFirstOrder\regElementwiseRestimate_{jju}\elementwiseR_{jlu})|^2\}\!
		+ \!\mathbb{E}_S\{\trace((\expectationMatrixSecondOrder\!-\!\expectationMatrixFirstOrder^2)\regElementwiseRestimate^2_{jju}\elementwiseR^2_{jlu})\}. \label{el_den2_intermReg}
		\end{align}
		
		Then using Lemma~\ref{el_den1R} and substituting \eqref{regS} into \eqref{el_den1_intermReg} and \eqref{el_den2_intermReg}, we get \eqref{el_den1Reg} and \eqref{el_den2Reg}, respectively.	
	\end{proof}
	\vspace{-6mm}
	\subsection{Downlink Spectral Efficiency}
	The DL spectral efficiency for user $(j,u)$ is given in this section for a matched filter precoder, i.e.,  $\mathbf{b}_{ju} = \lmmseChannel_{jju}[n]/\sqrt{\mathbb{E}\{\norm{\lmmseChannel_{jju}[n]}^2\}} = \genericWEst_{ju}\leastSqrChannel_{jju}/\sqrt{\mathbb{E}\{\norm{\genericWEst_{ju}\leastSqrChannel_{jju}[n]}^2\}}$. Therefore, the received signal at user $(j,u)$ can be written as
	\begin{align}
	z_{ju} &= \sqrt{\dlSnr}\mathbb{E}\{\mathbf{b}_{ju}^H\mathbf{h}_{jju}\} d_{ju} +\sqrt{\dlSnr}(\mathbf{b}_{ju}^H\mathbf{h}_{jju} - \mathbb{E}\{\mathbf{b}_{ju}^H\mathbf{h}_{jju}\}) d_{ju} + \sum_{k\ne u} \sqrt{\dlSnr}(\mathbf{b}_{ju}^H\mathbf{h}_{jjk}) d_{jk} \nonumber \\
	&+ \sum_{l \ne j}\sum_{k=1}^{K} \sqrt{\dlSnr}(\mathbf{b}_{ju}^H\mathbf{h}_{jlk}) d_{lk} + e_{ju}.
	\label{eqn:dlReceived}
	\end{align}
	Here, we assume that the scalar in the denominator of the precoding vector, $\sqrt{\mathbb{E}\{\norm{\lmmseChannel_{jju}[n]}^2\}}$, is a known constant at the BS. The first term in \eqref{eqn:dlReceived} corresponds to the desired signal component, the second term corresponds to the uncertainty in the DL transmit array gain, the third term corresponds to the non-coherent intra-cell interference, the coherent interference from pilot contamination given by the fourth term, and the last term represents the additive noise component. The second term in \eqref{eqn:dlReceived} corresponds to the uncertainty in the DL transmit array gain. Then, due to the similarity between the UL received combined signal in \eqref{eqn:combinerOut} to the DL received signal, a lower bound on DL channel SE of the user $(j,u)$ can be easily shown to be
	\begin{align*}
	\mathrm{SE}_{ju}^{(dl)} = \log_2\left(1+\gamma_{ju}^{(dl)}\right)&&[bits/s/Hz],
	\end{align*}
	where $\gamma^{(dl)}_{ju}$ is given by
	\begin{align}
	&\gamma_{ju}^{(dl)}\! = \!\frac{|\mathbb{E}_W\{\trace(\genericWEst^H_{ju}\mathbf{R}_{jju})\}|^2}
	{\mathbb{E}_W\{\trace(\genericWEst_{ju}\mathbf{Q}_{ju}\genericWEst^H_{ju}\dlRsum)\}\! +\!\sum\limits_{l=1}^{L}\!\mathbb{E}_W\{|\trace(\genericWEst^{H}_{ju}\mathbf{R}_{jlu})|^2\}\!-\!|\mathbb{E}_W\{\trace(\genericWEst^H_{ju}\mathbf{R}_{jju})\}|^2\! + \!\frac{1}{\dlSnr}} \label{dlGamma}
	\end{align}
	and $ \dlRsum \triangleq \sum\limits_{l=1}^{L}\sum\limits_{k=1}^{K}\mathbf{R}_{jlk}$. We utilize channel hardening and avoid the use of DL pilots. Consequently, there is no pre-log factor for the SE expression. The expectation taken in all the terms of \eqref{dlGamma} is over the random matrix $\genericWEst_{ju}$. However, $\genericWEst_{ju} = \regRestimate_{jju}\rawQestimate_{ju}^{-1}$ for the LMMSE-type channel estimation and $\genericWEst_{ju} = \regElementwiseRestimate_{jju}\elementwiseQestimate_{ju}^{-1}$ for the element-wise LMMSE-type channel estimation. These expectation terms are already presented in Theorems \ref{Theorem:raw_R}, \ref{Theorem:raw_P}, and \ref{Theorem:regularized_P} for the LMMSE-type, element-wise LMMSE-type, and element-wise LMMSE-type with regularized  $\elementwiseQestimate_{ju}$ channel estimates, respectively. Furthermore, $\Rsum$ should be replaced by $\dlRsum$ in computing the expectation terms.
	
	\section{Discussion}
	\label{sec:discussion}
	The question of practical significane is the following. Based on the above obtained closed-form relations between the average SE and the parameters $\pilotLengthForR$ and $\pilotLengthForQ$, how to choose these parameters to provide a required SE? Thus, we discuss here the impact of these parameters on SE corresponding to the LMMSE-type and element-wise LMMSE-type channel estimation. We also compare the theoretical SE expressions for the LMMSE-type and element-wise LMMSE-type channel estimations. Since the SE expression for the element-wise LMMSE-type channel estimation with regularized $\elementwiseQestimate_{ju}$ is not in closed-form, we omit the analytic discussion for this case here and study it numerically in the next section.
	
	It can be noted from the expectation terms in Theorems~\ref{Theorem:raw_R} and \ref{Theorem:raw_P} that the penalty components due to imperfect covariance information gradually vanish with an increase in $\pilotLengthForR$ and $\pilotLengthForQ$, but the penalty due to the regularization remains finite. Furthermore, if $||\mathbf{W}_{ju} - \regWActual_{ju}||/||\mathbf{W}_{ju}||<<1$ (i.e., if $\biasFactorForR$ is close to $1$), one can state that these expectation terms converge to the values that correspond to the known covariance case. However, despite leading to an improvement in $\gamma_{ju}^{(ul)}$ (due to convergence of the expected values), an increase in $\pilotLengthForR$ also causes a degradation in the pre-log factor of the derived UL SE expression. Therefore, the choice of $\pilotLengthForR$ impacts UL SE in two ways: (i) smaller the value of $\pilotLengthForR$, higher the error in covariance estimation and hence lower the value of UL SE and (ii) larger the value of $\pilotLengthForR$, higher the consumption of UL resources and hence lower the value of UL SE. Whereas, due to the absence of DL pilots, the DL SE does not degrade with an increase in $\pilotLengthForR$; it gradually rises to the DL SE value that corresponds to the known covariance case. Larger $\pilotLengthForQ$ makes both the UL and DL SE better due to the improved estimates of $\mathbf{Q}_{ju}$ (or $\elementwiseQ_{ju}$). Therefore, given an SE requirement, the aim here is to choose minimum $\pilotLengthForR$ and $\pilotLengthForQ$ values that are sufficient to provide the desired SE.
	
	Since estimating $\mathbf{Q}_{ju}$ (or $\elementwiseQ_{ju}$) does not involve additional pilot transmission, choosing $\pilotLengthForQ$ is not as critical as choosing $\pilotLengthForR$. Therefore, if we consider $\pilotLengthForQ$ as known, it is also important to derive $\pilotLengthForR$ values that make the LMMSE-type channel estimation preferable to the element-wise LMMSE-type one, and vice-versa. By comparing the UL/DL SINR values (in \eqref{gamma} or \eqref{dlGamma}) for the two channel estimation techniques, we can compute a threshold value for $\pilotLengthForR$ ($\pilotLengthThreshold$), such that the element-wise LMMSE-type estimator is preferable if $\pilotLengthForR < \pilotLengthThreshold$, and the LMMSE-type estimator is preferable otherwise. Note that $\pilotLengthThreshold$ is different for UL and DL covariance estimation. It can be obtained by equating the SINR expressions for the LMMSE-type and element-wise LMMSE-type channel estimation techniques (for UL and DL) and solving the corresponding equation for $\pilotLengthForR$. After some straight forward algebra, $\pilotLengthThreshold$ can be obtained in the form:
	\vspace{-2mm}
	\begin{align}
		\pilotLengthThreshold = \frac{fc-ah}{ag-fb} \label{nThreshold}
	\end{align}
	where
	\begin{footnotesize}
	\begin{align*}
		a &= \left(\frac{\pilotLengthForQ}{\pilotLengthForQ-M}\trace(\bar{\mathbf{W}}^H_{ju}\mathbf{R}_{jju})\right)^2;
		\quad f = \left(\frac{\pilotLengthForQ}{(\pilotLengthForQ-1)}\trace(\elementwiseWActual^H_{ju}\mathbf{R}_{jju}) \right)^2\\
		b &= \nCubeConstant \trace(\bar{\mathbf{W}}_{ju}\mathbf{Q}_{ju}\bar{\mathbf{W}}^H_{ju}\Rsumgen)
		+ \sum_{l=1}^{L} \left\{\nSqareConstant|\trace(\bar{\mathbf{W}}^{H}_{ju}\mathbf{R}_{jlu})|^2
		+ \frac{\nCubeConstant}{\pilotLengthForQ} \trace(\bar{\mathbf{W}}_{ju}^{2}\mathbf{Q}_{ju}\mathbf{W}^{2}_{lu}\mathbf{Q}_{ju})\right\}
		- a + d\\
		c &= \frac{\biasFactorForR^2 \nCubeConstant}{2} \left\{M \trace(\Rsumgen\mathbf{Q}_{ju}) + \trace(\mathbf{W}_{ju})\trace(\Rsumgen \mathbf{R}_{jju})\right\}
		+ \frac{\biasFactorForR^2 \nSqareConstant}{2}\sum_{l=1}^{L}\big\{\trace(\mathbf{W}_{lu}\mathbf{Q}_{ju}\mathbf{W}^{H}_{lu}\mathbf{Q}_{ju})
		+ \trace(\mathbf{W}_{lu}\mathbf{R}_{jju}\mathbf{W}^{H}_{lu}\mathbf{R}_{jju})\big\}\\
		&+\frac{\biasFactorForR^2 \nCubeConstant}{2\pilotLengthForQ} \sum_{l=1}^{L}\left\{M\trace(\mathbf{W}^2_{jlu}\mathbf{Q}^2_{ju})
		+ \trace(\mathbf{W}_{ju})\trace(\mathbf{W}^2_{jlu}\mathbf{Q}_{ju}\mathbf{R}_{jju})\right\}\\
		g &= \!\elementWiseNSqrConst \!\!\left\{\!\!\trace\!(\!\elementwiseWActual_{ju}\mathbf{Q}_{ju}\elementwiseWActual^H_{ju}\Rsumgen\!)\! + \!\sum_{l=1}^{L}\!|\trace\!(\elementwiseWActual^{H}_{ju}\elementwiseR_{jlu})\!|^2\!\!\right\}\!\!
		+ \!\elementWiseNCubeConst \!\!\left\{\!\!\trace\!(\elementwiseWActual_{ju}\elementwiseQ_{ju}\elementwiseWActual^H_{ju}\Ssumgen)\! + \!\sum_{l=1}^{L}\!\trace\!(\elementwiseWActual^2_{ju}\elementwiseR^2_{jlu})\!\!\right\}\!\!
		-\!f\! + \!d\\
		h &= \frac{\biasFactorForR^2 \elementWiseNSqrConst}{2}\trace\Big(\!\elementwiseQ^{-1}_{ju}\mathbf{Q}_{ju}\elementwiseQ^{-1}_{ju}\{\Rsumgen\circ\mathbf{Q}_{ju}\circ \!\mathbf{Q}_{ju}\}
		+\!\elementwiseQ^{-1}_{ju}\mathbf{Q}_{ju}\elementwiseQ^{-1}_{ju}\!\{\!\Rsumgen\!\circ\!\mathbf{R}_{jju}\!\circ \!\mathbf{R}_{jju}\!\}\!\!\Big)\!\!
		+ \!\frac{\biasFactorForR^2 \elementWiseNCubeConst}{2}\Big\{\!\trace(\Ssumgen\elementwiseQ_{ju}\!)
		+ \!\trace(\mathbf{W}_{ju}\Ssumgen\elementwiseR_{jju}\!)\! \\
		&+ \!\trace(\mathbf{W}^2_{lu} \elementwiseQ^2_{ju}) + \trace(\mathbf{W}^2_{lu} \elementwiseR^2_{jju})\Big\}
		+ \frac{\biasFactorForR^2 \elementWiseNSqrConst}{2} \sum_{p=1}^{M}\sum_{q=1}^{M} \Big\{[\mathbf{W}_{lu}(\mathbf{Q}_{ju}\circ\mathbf{Q}_{ju})\mathbf{W}_{lu}]_{pq}
		+ \![\mathbf{W}_{lu}(\mathbf{R}_{jju}\circ\mathbf{R}_{jju})\mathbf{W}_{lu}]_{pq}\Big\}\!\\
		\Ssumgen &= \diag(\Rsumgen);
		\quad \quad \Rsumgen = \begin{cases}
		\Rsum, \; &\text{for UL} \\
		\dlRsum,  \qquad &\text{for DL}
		\end{cases};
		\quad \quad d = \begin{cases}
		0, \; &\text{for UL} \\
		\frac{1}{\dlSnr},  \qquad &\text{for DL.}
		\end{cases}
	\end{align*}
	\end{footnotesize}
	Note that $\pilotLengthThreshold$ is a function of $\pilotLengthForQ$ which can take any real value. Thus, if $\pilotLengthThreshold$ is negative for some value of $\pilotLengthForQ$, it means, for that particular choice of $\pilotLengthForQ$, there is no valid $\pilotLengthForR$ that makes the LMMSE-type channel estimation preferable. Consequently, using \eqref{nThreshold}, we can also compute a threshold for $\pilotLengthForQ$ below which element-wise LMMSE-type channel estimation is always preferred. However, deriving a theoretical expression for such a threshold is extremely difficult. It can be easily computed numerically.
	
	Therefore, the closed-form expressions for average UL and DL SE, for the LMMSE-type and element-wise LMMSE-type channel estimation methods serve as tools for choosing different design parameters, and also as a tool for choosing a preferred channel estimation technique. In practice, with approximate models of the covariance matrix of an individual user in a massive MIMO system, the derived expressions for average SE enables us to choose these parameters for the desired UL and DL SE values.
	
	In what follows, we provide a comparison of the derived theoretical SE expressions with simulated SE obtained by averaging over multiple realizations of random covariance estimation matrices. We also compare the theoretical SE expressions with the SE expressions that correspond to known covariance case. Finally, we also depict the impact of $\pilotLengthForR$ on the SE. 
	
	\section{Simulation Results}
	We consider a massive MIMO system with $L = 7$ cells, each comprising a BS with $M = 100$ antennas and $K = 10$ users. The BSs are at a distance of $300m$ apart from each other, and the users are uniformly spaced at a distance of $120m$ from the BS in their cells. The users that reuse the same pilot in different cells are at the same position relative to the corresponding BSs. Angular spread of the channel cluster is assumed to be $20^\circ$ within which the received paths from a user are assumed to be uniformly distributed. We consider the path loss model in \cite{7414036}, where the mean path loss is given as $ PL(f,d) = 20 \log_{10}\left( 4\pi f/c\right) + 10 n \log_{10}(d)$,	where $n$ is the path loss exponent, and $f$ is the operating frequency, and $c$ is the speed of light in $m/s$. Therefore, the mean received SNR, in dB, is given by $SNR = P_T - PL - 10\log_{10}(kT_o B) - NF$, where $P_T$ is the transmit power, $k$ is the Boltzmann constant, $T_0 = 290K$ is the nominal temperature, $B$ is the signal bandwidth, and $ NF $ is the noise figure in dB. In this setup, we consider $n= 3.76$, $f = 3.4$ GHz, $P_T = -3$ dBm, $B = 40 $ MHz, and $NF = 10$ dB, which results in the mean SNR of the received signal from a user that is at a distance $d$ from the BS to be given by $71.89-37.6 \log_{10} d$.
	
	The number of symbols that are dedicated for UL transmission within each coherence block is chosen to be $\ulCoherenceSymbols = 100$ symbols. We choose the number of symbols used for ChEst (and also CovEst) pilot to be $P=10$. Second-order statistics of the channel are assumed to be constant for $\tau_s = 25000$ coherence blocks, and the UL transmit power is $\ulSnr = 1$ and the DL transmit power is $\dlSnr = 10$. Additionally, we choose $\biasFactorForR = 0.95$, and $\biasRMatrix = \mathbf{I}$. For generating the regularized element-wise LMMSE-type channel estimation based SE, we choose $\biasFactorForQ = 0.95$. Sample averaging for all the expectation terms is computed using $500$ trials for different values of $\pilotLengthForR = \{125, 250, 500, 1000, 2000, 4000, 8000\}$.
	\subsection{Uplink Spectral Efficiency}
	\label{subsec: uplink simulations}
	For this simulation example, we consider the UL SE expressions that correspond to the three channel estimation techniques: LMMSE-type channel estimation, the element-wise LMMSE-type channel estimation, and the element-wise LMMSE-type channel estimation with regularized $\elementwiseQestimate_{ju}$. In Fig.~\ref{fig: UL SE}, we plot the SE as a function of $\pilotLengthForR$ for the three aforementioned channel estimation techniques. Fig.~\ref{fig: UL SE}(\subref{fig: UL SE 125}) depicts the SE values for $\pilotLengthForQ = 125$ and Fig.~\ref{fig: UL SE}(\subref{fig: UL SE 4000}) shows SE values for $\pilotLengthForQ = 4000$. In both the subplots, we present SE values corresponding to known covariance matrices (with no additional pilot overhead) and theoretical SE values as well as simulated SE values corresponding to the three channel estimation techniques that use the estimated covariance matrices.
	Note that the theoretical SE values for element-wise LMMSE-type channel estimation with regularized $\elementwiseQestimate_{ju}$ are computed numerically.
	
	\begin{figure}
		\begin{subfigure}{.5\textwidth}
			\centering
			\includegraphics[width=0.8\textwidth,trim={3cm 7cm 3cm 8cm}]{./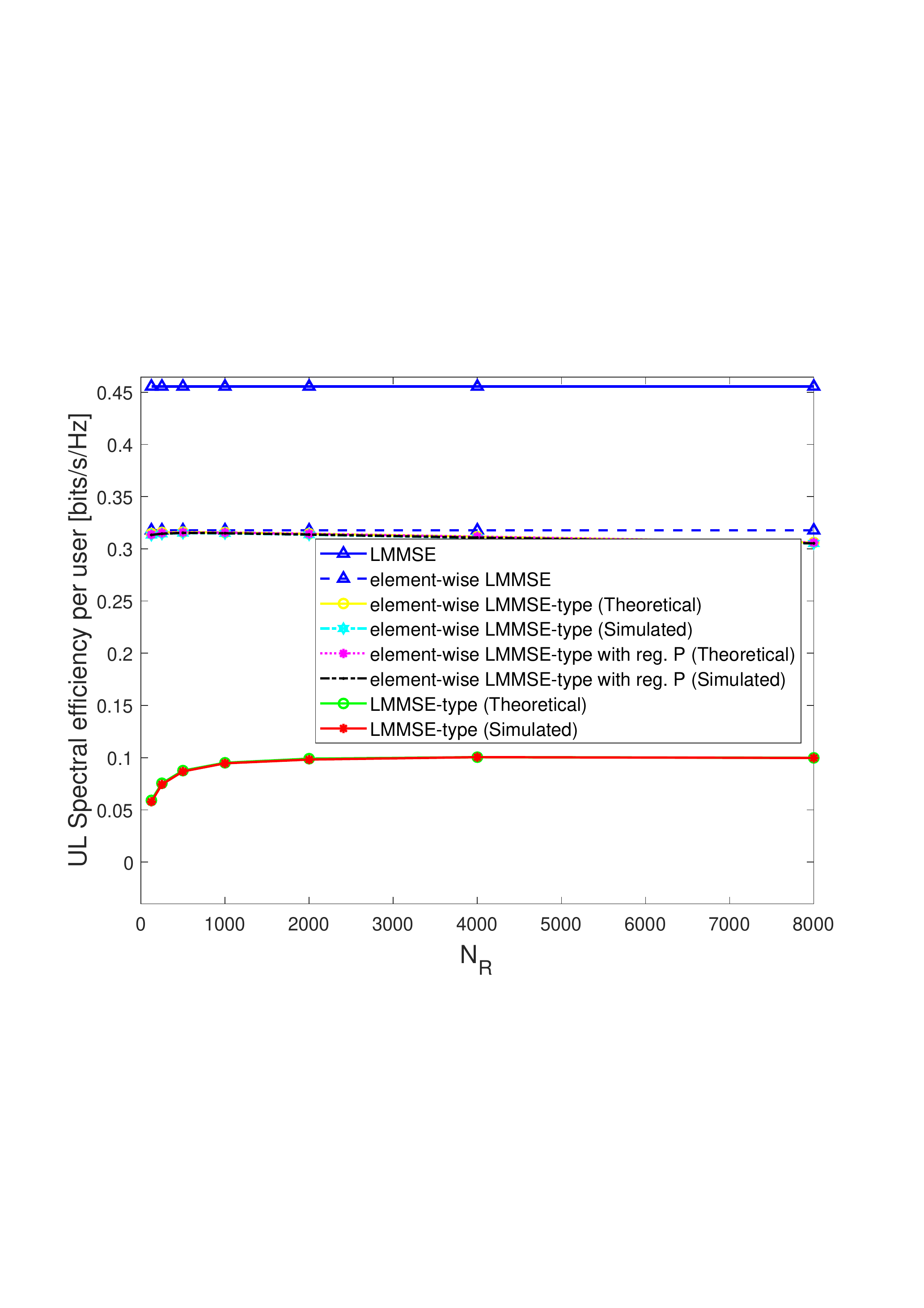}
			\caption{SE vs $N_R$ with $N_Q = 125$.}
			\label{fig: UL SE 125}
		\end{subfigure}
		\hspace{0.3cm}
		\begin{subfigure}{.5\textwidth}
			\centering
			\includegraphics[width=0.8\textwidth,trim={3cm 7cm 3cm 8cm}]{./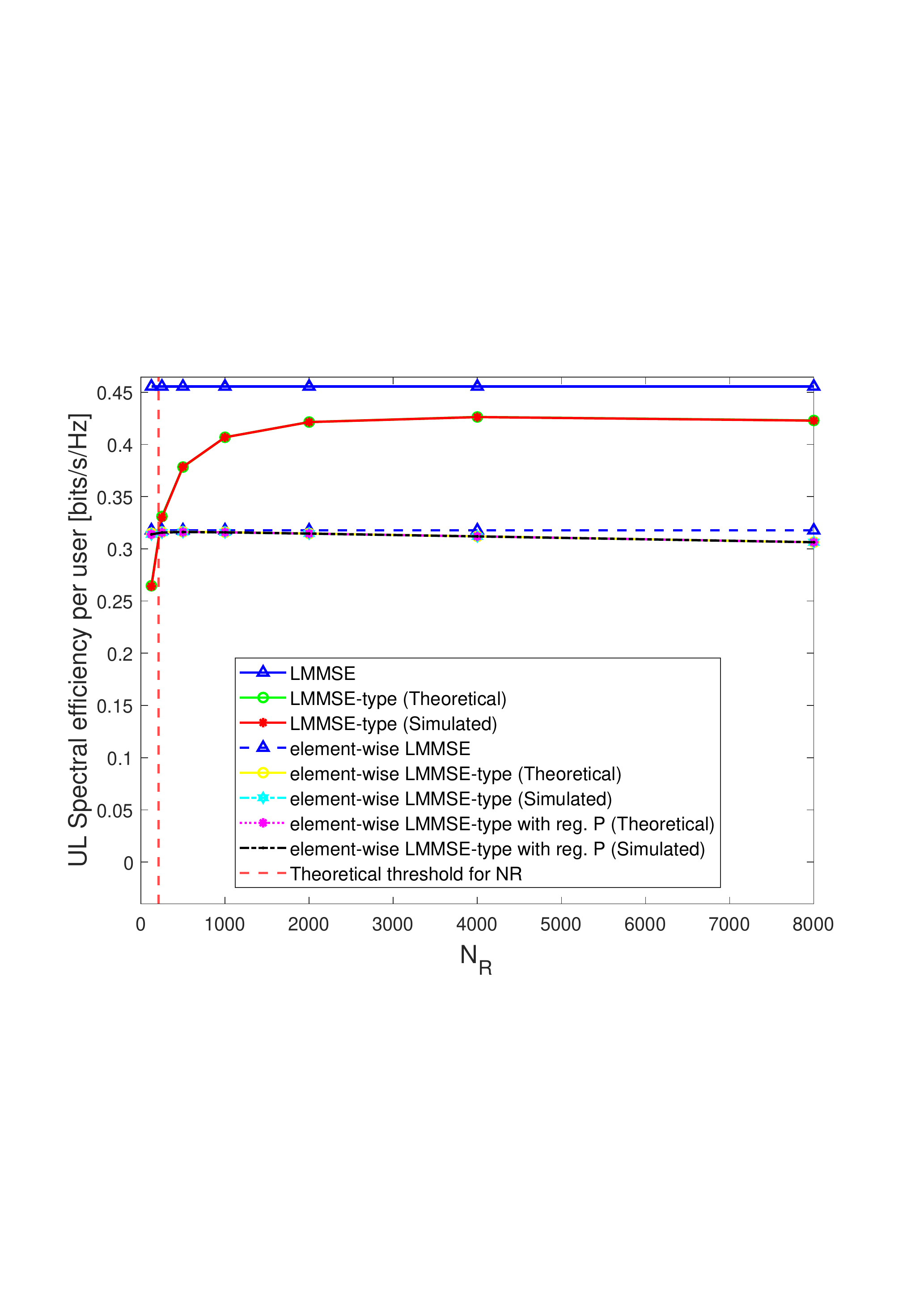}
			\caption{SE vs $N_R$ with $N_Q = 4000$.}
			\label{fig: UL SE 4000}
		\end{subfigure}
		\caption{UL SE for different channel estimation estimation techniques. In both the subplots, reg.~P stands for regularized $\elementwiseQestimate_{ju}$. \vspace{-6mm}}
		\label{fig: UL SE}
	\end{figure}
	
	In Fig. \ref{fig: UL SE}, it can be noticed that the theoretical SE, corresponding to LMMSE-type channel estimation, initially rises with $\pilotLengthForR$ to approach the SE that corresponds to LMMSE channel estimation, followed by a drop in the theoretical SE at $\pilotLengthForR= 8000$. In contrast, the theoretical SE, corresponding to element-wise LMMSE-type channel estimation (with and without regularized $\elementwiseQestimate_{ju}$), approaches the SE that corresponds to element-wise LMMSE channel estimation for $\pilotLengthForR$ value as low as $125$ and reaches its maximum at $\pilotLengthForR = 500$. Then, the theoretical SE reduces linearly with further increase in $\pilotLengthForR$ as the pilot overhead increases. Moreover, the simulated SEs match the theoretical values for all the three channel estimation techniques tested, thereby validating the derivations presented in the paper. Additionally, it can be observed that the regularization in estimating $\elementwiseQestimate_{ju}$ does not improve the SE significantly.
	
	The initial raise of the theoretical SEs is due to the improvement in the covariance estimates caused by the increase in the number of samples for estimation. However, a further increase in $\pilotLengthForR$ results in a drop of UL SEs due to the pre-log factor. Despite the improvement in estimation quality of the covariance matrices, the SEs drops because of the consumption of UL resources for the additional CovEst pilots. This validates the theoretical analysis done in Section \ref{sec:discussion}. 
	
	It can be seen from Fig.~\ref{fig: UL SE}(\subref{fig: UL SE 125}) and Fig.~\ref{fig: UL SE}(\subref{fig: UL SE 4000}) that, using element-wise LMMSE channel estimation instead of LMMSE channel estimation leads to a drop in SE. However, it is evident that the element-wise LMMSE-type channel estimation completely outperforms the LMMSE-type channel estimation for all the $\pilotLengthForR$ values, and for $\pilotLengthForQ = 125$. It can also be noted that even for $\pilotLengthForQ = 4000$, the element-wise LMMSE-type channel estimation outperforms the LMMSE-type channel estimation for $\pilotLengthForR = 125$. Moreover, for $\pilotLengthForQ = 4000$,  $\pilotLengthThreshold$ given in Section~\ref{sec:discussion} matches exactly with the $\pilotLengthForR$ value for which the LMMSE-type and element-wise LMMSE-type channel estimations have same performance. Therefore, the minimum SE guaranteed for a massive MIMO system with imperfect covariance information is the SE provided by the element-wise LMMSE channel estimator \footnote{Note that the objective is to have  $\pilotLengthForR$ and $\pilotLengthForQ$ as low as possible for guaranteeing a desired SE}. This SE can be achieved by using element-wise LMMSE-type channel estimation with very low values of $\pilotLengthForR$ and $\pilotLengthForQ$, and with low computational complexity. Finally, from simulations we compute the threshold value for $\pilotLengthForQ$ to be $299$, such that for $\pilotLengthForQ < 299$, element-wise LMMSE-type channel estimation always outperforms LMMSE-type channel estimation.
	\vspace{-3mm}
	\subsection{Downlink Spectral Efficiency}
	Similar to the UL simulation, in this simulation example, we consider the DL SE expressions that correspond to the three channel estimation techniques: LMMSE-type channel estimation, the element-wise LMMSE-type channel estimation, and the element-wise LMMSE-type channel estimation with regularized $\elementwiseQestimate_{ju}$. In Fig.~\ref{fig: DL SE}, we plot the SE as a function of $\pilotLengthForR$ for the three aforementioned channel estimation techniques. Fig.~\ref{fig: DL SE}(\subref{fig: DL SE 125}) depicts the SE values for $\pilotLengthForQ = 125$ and Fig.~\ref{fig: DL SE}(\subref{fig: DL SE 4000}) shows SE values for $\pilotLengthForQ = 4000$. We perform a study on these plots similar to the study done in \ref{subsec: uplink simulations}.
	
	\begin{figure}
		\begin{subfigure}{.5\textwidth}
			\centering
			\includegraphics[width=0.8\textwidth,trim={3cm 7cm 3cm 8cm}]{./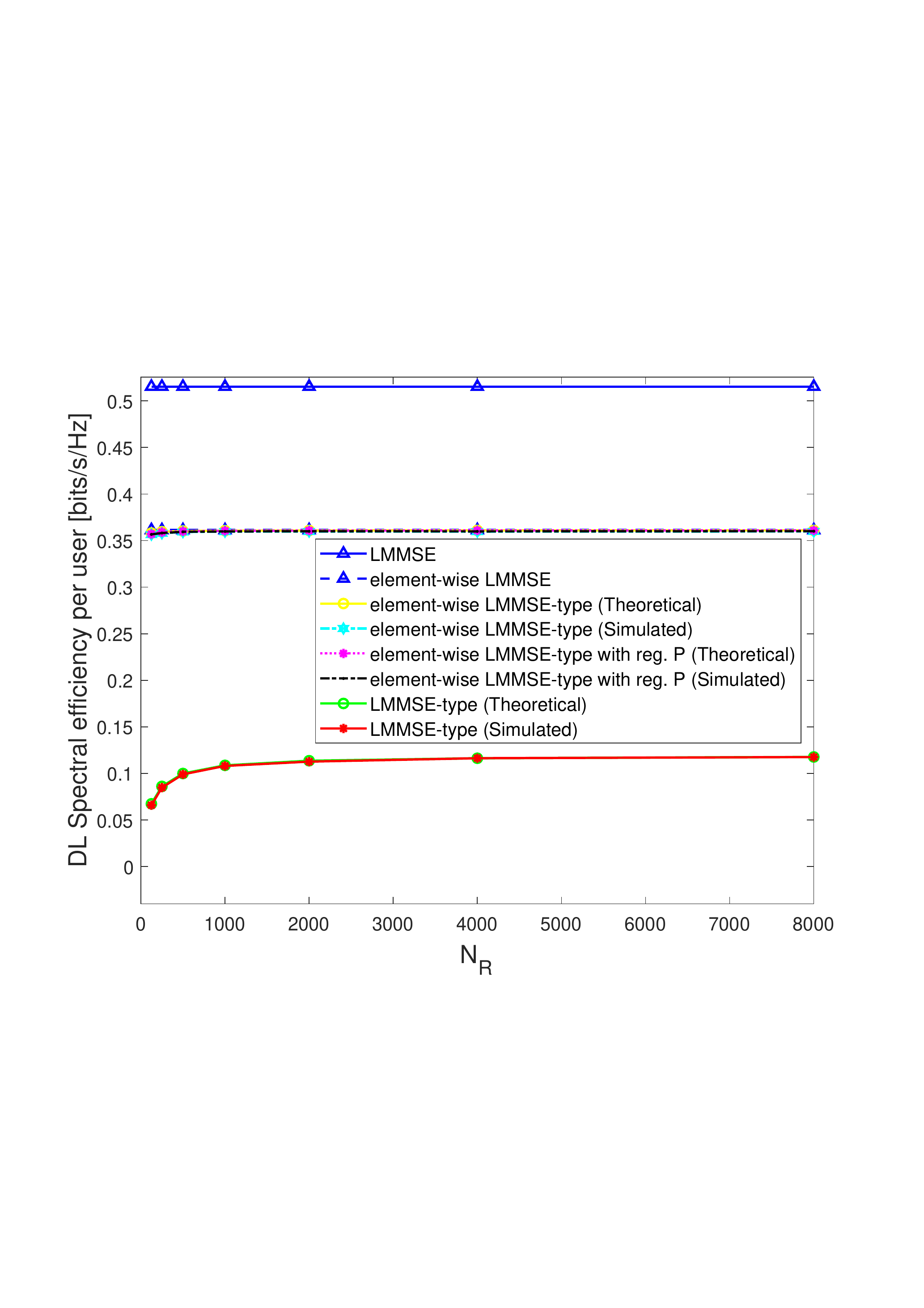}
			\caption{SE vs $N_R$ with $N_Q = 125$.}
			\label{fig: DL SE 125}
		\end{subfigure}
		\hspace{0.3cm}
		\begin{subfigure}{.5\textwidth}
			\centering
			\includegraphics[width=0.8\textwidth,trim={3cm 7cm 3cm 8cm}]{./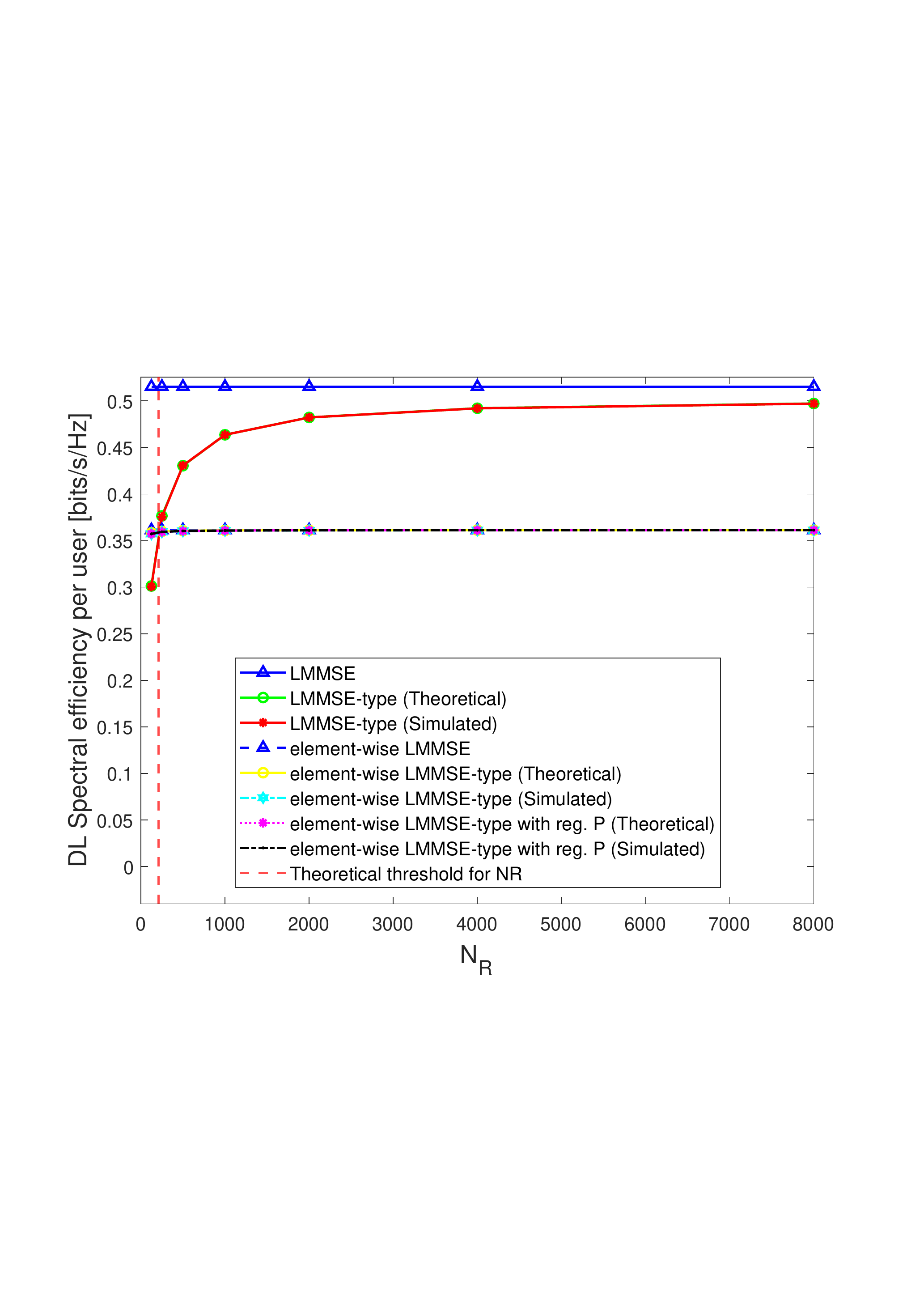}
			\caption{SE vs $N_R$ with $N_Q = 4000$.}
			\label{fig: DL SE 4000}
		\end{subfigure}
		\caption{DL SE for different channel estimation estimation techniques. In both the subplots, reg.~P stands for regularized $\elementwiseQestimate_{ju}$. \vspace{-6mm}}
		\label{fig: DL SE}
	\end{figure}
	
	In Fig. \ref{fig: DL SE}, it can be observed that the DL SE plots are similar to the plots in \ref{subsec: uplink simulations}. However, unlike in UL SE, an increase in $\pilotLengthForR$ does not result in a drop in SE as there is no pilot overhead in DL. The simulated SEs match the theoretical values for all the three channel estimation techniques, thereby validating the derivations presented in the paper. Moreover, for $\pilotLengthForQ = 4000$, $\pilotLengthThreshold$ given in Section~\ref{sec:discussion} matches exactly with the $\pilotLengthForR$ value for which the LMMSE-type and element-wise LMMSE-type channel estimations have same performance. From Fig.~\ref{fig: DL SE}(\subref{fig: DL SE 125}) and Fig.~\ref{fig: DL SE}(\subref{fig: DL SE 4000}), the minimum DL SE guaranteed for a massive MIMO system with imperfect covariance information is the SE provided by the element-wise LMMSE channel estimator. This SE can be achieved by using element-wise LMMSE-type channel estimation with very low values of $\pilotLengthForR$ and $\pilotLengthForQ$, and with low computational complexity. Finally, from simulations we compute the threshold value for $\pilotLengthForQ$ to be $300$, such that for $\pilotLengthForQ < 300$, element-wise LMMSE-type channel estimation always outperforms LMMSE-type channel estimation.
	
	\section{Conclusion}
	We have derived closed-form expressions for UL and DL SEs of a massive MIMO system which implements matched filter receiver and transmit combiners, respectively, as a function of $\pilotLengthForR$ and $\pilotLengthForQ$ which represent the UL pilot overhead. These combiners use channel estimates that utilize estimated covariance matrices in addition to channel observations. We have derived closed-form SE expressions for the case of LMMSE-type and element-wise LMMSE-type channel estimates (with and without regularization for $\elementwiseQestimate_{ju}$). Using theoretical analysis of these closed-form expressions and the using simulation results, we have demonstrated the impact of different values of $\pilotLengthForR$ and $\pilotLengthForQ$ on SEs of a user in a massive MIMO system, thereby presenting the closed-form expressions as the tools for solving the problem of choosing these parameters optimally. The derived theoretical SE expressions have been compared with the simulated SE values and an accurate agreement between them has also been demonstrated. Finally, using simulation results, we have shown that element-wise LMMSE-type channel estimation with very low values of $\pilotLengthForR$ and $\pilotLengthForQ$ provides the minimum guarantee SE, with low computational complexity.

	\appendices
	\section{Proof of Lemma~\ref{gaussianVectors}}
	\label{Appendix lemma 1}
	Let us start with a proof of \eqref{hAh}. Let the rank of the covariance matrix of $\mathbf{h}$, $\mathbf{R}$, be $K$. Then, we denote $\mathbf{\Lambda} \in \mathbb{R}^{K \times K}$ is a diagonal matrix containing positive eigenvalues of $\mathbf{R}$ and $\mathbf{U}\in \mathbb{R}^{M \times K}$ is a matrix containing $K$ eigenvectors corresponding to eigenvalues. Now, let us also define $\mathbf{B} \triangleq \mathbf{U}\mathbf{\Lambda}^{1/2} \in \mathbb{C}^{M \times K}$. Then, there exists a unique $\mathbf{g} \in \mathbb{C}^{K}$ such that $\mathbf{h} = \mathbf{B}\mathbf{g}$ and $\mathbb{E}\{\mathbf{g}\mathbf{g}^H\} = \mathbf{I}$. Therefore, we have $\mathbb{E}\{\mathbf{h}\mathbf{h}^H\arbSqrMatrix\mathbf{h}\mathbf{h}^H\} = \mathbf{B}\mathbb{E}\{\mathbf{g}\mathbf{g}^H\tilde{\arbSqrMatrix}\mathbf{g}\mathbf{g}^H\}\mathbf{B}^{H}$ where
	$\tilde{\arbSqrMatrix} \triangleq \mathbf{B}^{H}\arbSqrMatrix\mathbf{B}$.
	However, since $\mathbf{g}$ is distributed as $\mathcal{CN}(\mathbf{0},\mathbf{I})$, the term $\mathbb{E}\{\mathbf{g}\mathbf{g}^H\tilde{\arbSqrMatrix} \mathbf{g}\mathbf{g}^H\}$ can be evaluated as follows
	\begin{align*}
	\mathbb{E}\{[\mathbf{g}\mathbf{g}^H\tilde{\arbSqrMatrix} \mathbf{g}\mathbf{g}^H]_{ij}\}
	&= \sum_{p = 1}^{K}\sum_{q = 1}^{K} \mathbb{E}\{[\mathbf{g}]_i [\mathbf{g}]_p^{*} [\mathbf{g}]_q [\mathbf{g}]_j^{*}\} [\tilde{\arbSqrMatrix}]_{pq}
	= \begin{cases}
	[\tilde{\arbSqrMatrix}]_{ij}												&\text{if $i \neq j$} \\
	[\tilde{\arbSqrMatrix}]_{ii} + \trace(\tilde{\arbSqrMatrix})   					&\text{otherwise}
	\end{cases}
	\end{align*}
	and $\mathbb{E}\{\mathbf{g}\mathbf{g}^H\tilde{\arbSqrMatrix} \mathbf{g}\mathbf{g}^H\} = \tilde{\arbSqrMatrix} + \mathbf{I}\trace(\tilde{\arbSqrMatrix}).$
	Therefore, $\mathbb{E}\{\mathbf{h}\mathbf{h}^H\arbSqrMatrix\mathbf{h}\mathbf{h}^H\} = \mathbf{R}\arbSqrMatrix\mathbf{R} + \mathbf{R}\trace(\arbSqrMatrix\mathbf{R})$.
	
	Proof of \eqref{abshAh} is as follows. We first compute that
	\begin{align*}
	&\mathbb{E}\{|\mathbf{h}^H\arbSqrMatrix\mathbf{h}|^2\} = \mathbb{E}\{\mathbf{h}^H\arbSqrMatrix\mathbf{h}\mathbf{h}^H\arbSqrMatrix^H\mathbf{h}\}
	= \mathbb{E}\{\trace(\arbSqrMatrix\mathbf{h}\mathbf{h}^H\arbSqrMatrix^H\mathbf{h}\mathbf{h}^H)\}.
	\end{align*}
	Using \eqref{hAh}, we have $\mathbb{E}\{|\mathbf{h}^H\arbSqrMatrix\mathbf{h}|^2\} = |\trace(\arbSqrMatrix^H\mathbf{R})|^2 +\trace(\arbSqrMatrix\mathbf{R}\arbSqrMatrix^H\mathbf{R})$.
	\section{Proof of Lemma~\ref{Lemma:IWishproperties}}
	\label{Appendix lemma 2}
	Proof of \eqref{Wishart1} and \eqref{Wishart2} is given in \cite{604712}.
	
	Using the eigenvalue decomposition of $\arbSymMatrix = \mathbf{U} \mathbf{\Lambda} \mathbf{U}^{H}$, let us define $\tilde{\arbWishart} \triangleq \mathbf{U}^H\arbWishart\mathbf{U}$.
	It should be noted that $\tilde{\arbWishart}$ is distributed as $\mathcal{W}(N,\mathbf{I})$.
	Then, \eqref{Wishart3} can be proved as follows. First, we compute the following expectation term.
	\begin{align*}
	&\mathbb{E}\{\trace(\arbWishart^{-2}\arbSymMatrix)\} = \mathbb{E}\{\trace(\tilde{\arbWishart}^{-2}\mathbf{\Lambda})\}
	= \sum_{i=1}^{M} [\mathbb{E}\{\tilde{\arbWishart}^{-2}\}]_{ii}[\mathbf{\Lambda}]_{ii}
	\end{align*}
	But from \eqref{Wishart2}, we have
	\begin{align*}
	\mathbb{E}\{\trace(\arbWishart^{-2}\arbSymMatrix)\}
	&= \sum_{i=1}^{M} \frac{N}{(N-M)^3-(N-M)}[\mathbf{\Lambda}]_{ii} =\frac{N}{(N-M)^3-(N-M)}\trace(\arbSymMatrix)
	\end{align*}
	
	For \eqref{Wishart3}, we expand $\mathbb{E}\{|\trace(\arbWishart^{-1}\arbSqrMatrix)|^2\}$ using \eqref{Wishart2} as follows.
	\begin{align*}
	&\mathbb{E}\{|\trace(\arbWishart^{-1}\arbSqrMatrix)|^2\}
	= \sum_{p=1}^{M}\sum_{q=1}^{M}\sum_{r=1}^{M}\sum_{s=1}^{M} \mathbb{E}\{[\arbWishart^{-1}]_{pp}[\arbWishart^{-1}]_{ss}\}[\arbSqrMatrix]_{pp}[\arbSqrMatrix^{H}]_{ss}\\
	&= \sum_{p=1}^{M} \mathbb{E}\{[\arbWishart^{-1}]_{pp}[\arbWishart^{-1}]_{pp}\}[\arbSqrMatrix]_{pp}[\arbSqrMatrix^{H}]_{pp}
	+ \sum_{p=1}^{M}\sum_{s=1,s \neq p}^{M} \mathbb{E}\{[\arbWishart^{-1}]_{pp}[\arbWishart^{-1}]_{ss}\}[\arbSqrMatrix]_{pp}[\arbSqrMatrix^{H}]_{ss}\\
	&+ \sum_{p=1}^{M}\sum_{s=1,s \neq p}^{M} \mathbb{E}\{[\arbWishart^{-1}]_{ps}[\arbWishart^{-1}]_{sp}\}[\arbSqrMatrix]_{sp}[\arbSqrMatrix^{H}]_{ps}
	\end{align*}
	Using \eqref{Wishart2}, the above equation can be further simplified to \eqref{Wishart4}.
	\section{Proof of Lemma~\ref{den1R}}
	\label{Appendix lemma 3}
	Let us define a pair of mutually independent random vectors as follows.
	\begin{align*}
	&\contamination^{(1)}_{jju}[n] \triangleq \lmmseChannel^{(1)}_{jju}[n] - \mathbf{h}_{jju}, \; \contamination^{(2)}_{jju}[n] \triangleq \lmmseChannel^{(2)}_{jju}[n] - \mathbf{h}_{jju}
	\end{align*}
	The covariance matrices for $\contamination^{(1)}[n]$ and $\contamination^{(2)}[n]$ are identically equal to $\mathbf{Q}_{ju} - \mathbf{R}_{jju}$.
	Additionally, we also define mutually independent set of matrices
	\begin{align*}
	\singleObservationR_{jju}[n] \triangleq \lmmseChannel^{(1)}_{jju}[n](\lmmseChannel^{(2)}_{jju}[n])^{H} + \lmmseChannel^{(2)}_{jju}[n](\lmmseChannel^{(1)}_{jju}[n])^{H}
	\end{align*}
	for all $n \in \{1 \dots \pilotLengthForR\}$	such that $\rawRestimate_{jju} = \frac{1}{\pilotLengthForR} \sum_{n=1}^{N} \singleObservationR_{jju}[n]$
	by definition (i.e., \eqref{rawR}).
	
	Using the definition of $\contamination^{(1)}_{jju}[n]$ and $\contamination^{(2)}_{jju}[n]$, and also using Lemma \ref{gaussianVectors}, it can be shown that, for all $n = \{1\dots \pilotLengthForR\}$, we have
	\begin{align}
	&\mathbb{E}\{\singleObservationR_{jju}[n]\arbSqrMatrix\singleObservationR_{jju}[n]\}
	= \mathbf{R}_{jju}\arbSqrMatrix\mathbf{R}_{jju} +\frac{1}{2} \mathbf{Q}_{ju}\trace(\arbSqrMatrix \mathbf{Q}_{ju})
	+ \frac{1}{2} \mathbf{R}_{jju}\trace(\arbSqrMatrix \mathbf{R}_{jju}) \label{RAR_interm}\\
	&\mathbb{E}\{|\trace(\singleObservationR_{jju}[n]\arbSqrMatrix)|^2\} = |\trace(\mathbf{R}_{jju}\arbSqrMatrix)|^2+ \frac{1}{2}\trace(\arbSqrMatrix\mathbf{Q}_{ju}\arbSqrMatrix^H\mathbf{Q}_{ju}) +\frac{1}{2}\trace(\arbSqrMatrix\mathbf{R}_{jju}\arbSqrMatrix^H\mathbf{R}_{jju}). \label{absTraceRA_interm}
	\end{align}
	Finally, along with the equation $\rawRestimate_{jju} = \frac{1}{\pilotLengthForR} \sum_{n=1}^{N} \singleObservationR_{jju}[n]$, \eqref{RAR_interm} and \eqref{absTraceRA_interm} will result in \eqref{RAR} and \eqref{absTraceRA}, respectively.
	\vspace{-6mm}
	\section{Proof of Lemma~\ref{el_IWishproperties}}
	\label{Appendix lemma 5}
	Since $\arbDiagonalWishart = \mathbf{Z}/2$, and the elements of the diagonal matrix $\mathbf{Z}$ are $\chi^2$ distributed with $2N$ degrees of freedom, we have $\mathbb{E}\{[\arbDiagonalWishart^{-1}]_{pp}\} = 2\mathbb{E}\{[\mathbf{Z}^{-1}]_{pp}\} = 1/(N-1)$ and $\mathbb{E}\{[\arbDiagonalWishart^{-1}]^2_{pp}\} = 4\mathbb{E}\{[\mathbf{Z}^{-1}]^2_{pp}\} = 1/(N-1)(N-2)$.
	
	Using the above results, \eqref{YA1YA2} can be derived as follows
	\begin{align*}
	\mathbb{E}\{\trace(\arbDiagonalWishart^{-1} \arbSqrMatrix_1 \arbDiagonalWishart^{-1} \arbSqrMatrix_2)\}
	&= \bigg(\frac{1}{N-1}\bigg)^2\sum_{p=1}^{M}\sum_{q\ne p} [\arbSqrMatrix_1]_{pq}[\arbSqrMatrix_2]_{qp} + \frac{1}{(N-1)(N-2)} \sum_{p=1}^{M}[\arbSqrMatrix_1]_{pp}[\arbSqrMatrix_2]_{pp}\\
	&= \genericConstantOne_1 \trace(\arbSqrMatrix_1\arbSqrMatrix_2) + \genericConstantOne_2 \trace(\arbSqrMatrix_{1d}\arbSqrMatrix_{2d})
	\end{align*}
	where $\genericConstantOne_1 \triangleq 1/(N-1)^2$, $\genericConstantOne_2 \triangleq 1/((N-1)^2(N-2))$, $\arbSqrMatrix_{1d} \triangleq \diag(\arbSqrMatrix_1)$, and $\arbSqrMatrix_{2d} \triangleq \diag(\arbSqrMatrix_2)$.
	
	In what follows, proof of \eqref{abstrace_YA} is presented
	\begin{align*}
	\mathbb{E}\{|\trace(\arbDiagonalWishart^{-1}\arbSqrMatrix)|^2\} &= \frac{1}{(N-1)^2}\sum_{p=1}^{M}\sum_{q\ne p} [\arbSqrMatrix]_{pp}[\arbSqrMatrix]^{*}_{qq} + \frac{1}{(N-1)(N-2)}\sum_{p=1}^{M}\abs{[\arbSqrMatrix]_{pp}}^2 \nonumber \\
	&= \genericConstantOne_1 |\trace(\arbSqrMatrix)|^2 + \genericConstantOne_2\trace(\arbSqrMatrix^H_d \arbSqrMatrix_d)
	\end{align*}
	where $\arbSqrMatrix_d \triangleq \diag(\arbSqrMatrix)$.
	\vspace{-6mm}
	\section{Proof of Lemma \ref{el_den1R}}
	\label{Appendix lemma 6}
	Let us define a pair of mutually independent random vectors as follows.
	\begin{align*}
	&\contamination^{(1)}_{jju}[n] \triangleq  \lmmseChannel^{(1)}_{jju}[n] - \mathbf{h}_{jju},\;
	\contamination^{(2)}_{jju}[n] \triangleq  \lmmseChannel^{(2)}_{jju}[n] - \mathbf{h}_{jju}
	\end{align*}
	The covariance matrices for $\contamination^{(1)}_{jju}[n]$ and $\contamination^{(2)}_{jju}[n]$ are identically equal to 
	$\mathbf{Q}_{ju} - \mathbf{R}_{jju}$. Additionally, we also define mutually independent set of matrices as
	\begin{align*}
	\elementWiseSingleObservationR_{jjk}[n] \triangleq \diag(\lmmseChannel^{(1)}_{jju}[n](\lmmseChannel^{(2)}_{jju}[n])^{H} + \lmmseChannel^{(2)}_{jju}[n](\lmmseChannel^{(1)}_{jju}[n])^{H})
	\end{align*}
	for all $n \in \{1\dots \pilotLengthForR\}$ such that $\elementwiseRestimate_{jju} = \frac{1}{N} \sum_{n=1}^{N} \elementWiseSingleObservationR_{jju}[n]$
	by definition (i.e., \eqref{rawS}).
	
	Using the definitions of $\contamination^{(1)}_{jju}[n]$ and $\contamination^{(1)}_{jju}[n]$ together with Lemma~\ref{gaussianVectors} (for scalar case), and Lemma~\ref{h1h2},
	it can be shown that
	\begin{align*}
	\mathbb{E}\{[\elementWiseSingleObservationR_{jju}]_{pp}[\elementWiseSingleObservationR_{jju}]_{qq}\}
	&= \mathbb{E}\{|[\mathbf{h}_{jju}]_p|^2|[\mathbf{h}_{jju}]_q|^2\}
	+ \frac{1}{2}[\mathbf{R}_{jju}]_{pq}[\mathbf{Q}_{ju}-\mathbf{R}_{jju}]_{qp}
	+ \frac{1}{2} [\mathbf{Q}_{ju}-\mathbf{R}_{jju}]_{pq}[\mathbf{R}_{jju}]_{qp}\nonumber \\
	&+ \frac{1}{2} [\mathbf{Q}_{ju}-\mathbf{R}_{jju}]_{pq}[\mathbf{Q}_{ju}-\mathbf{R}_{jju}]_{qp} \nonumber \\
	&= [\elementwiseR_{jju}]_{pp}[\elementwiseR_{jju}]_{qq} +\frac{1}{2} [\mathbf{Q}_{jju}]_{pq}[\mathbf{Q}_{jju}]_{pq}
	+\frac{1}{2} [\mathbf{R}_{jju}]_{pq}[\mathbf{R}_{jju}]_{pq}.
	\end{align*}
	Therefore, we have
	\begin{align}
	&\mathbb{E}\{[\elementWiseSingleObservationR_{jju}\arbSqrMatrix\elementWiseSingleObservationR_{jju}]_{pq}\}
	= [\arbSqrMatrix]_{pq}\big\{[\elementwiseR_{jju}]_{pp}[\elementwiseR_{jju}]_{qq}
	+ \frac{1}{2} [\mathbf{Q}_{jju}]_{pq}[\mathbf{Q}_{jju}]_{pq} + \frac{1}{2} [\mathbf{R}_{jju}]_{pq}[\mathbf{R}_{jju}]_{pq}\big\} \label{SAS_interm}\\
	&\mathbb{E}\{|\trace(\elementWiseSingleObservationR_{jju}\arbDiaMatrix)|^2\}
	= \sum_{p=1}^{M}\sum_{q=1}^{M} \bigg\{[\elementwiseR_{jju}]_{pp}[\elementwiseR_{jju}]_{qq}
	+ \frac{1}{2}[\mathbf{Q}_{ju}]_{pq}[\mathbf{Q}_{ju}]_{pq} + \frac{1}{2} [\mathbf{R}_{jju}]_{pq}[\mathbf{R}_{jju}]_{pq}\bigg\}[\arbDiaMatrix]_{pp}[\arbDiaMatrix]_{qq} \nonumber \\
	&= |\trace(\elementwiseR_{jju}\arbDiaMatrix)|^2 +\frac{1}{2} \sum_{p=1}^{M}\sum_{q=1}^{M} [\arbDiaMatrix(\mathbf{Q}_{ju}\circ \mathbf{Q}_{ju})\arbDiaMatrix]_{pq} +\frac{1}{2} \sum_{p=1}^{M}\sum_{q=1}^{M} [\arbDiaMatrix(\mathbf{R}_{jju}\circ \mathbf{R}_{jju})\arbDiaMatrix]_{pq}.\label{abstraceSA_interm}
	\end{align}
	
	Finally, along with the equation $\elementwiseRestimate_{jju} = \frac{1}{N} \sum_{n=1}^{N} \elementWiseSingleObservationR_{jju}[n]$, \eqref{SAS_interm} and \eqref{abstraceSA_interm} will result in
	\eqref{SAS} and \eqref{el_absTraceRA}, respectively.
	\vspace{-6mm}
	\section{Proof of Lemma~\ref{el_IWishpropertiesReg}}
	\label{Appendix lemma 7}
	Expression \eqref{YA1YA2Reg} is derived as follows:
	\begin{align*}
	\mathbb{E}\{\trace(\regElementwiseQestimate^{-1}_{ju} \arbSqrMatrix_1 \regElementwiseQestimate^{-1}_{ju} \arbSqrMatrix_2)\}\!
	&= \!\sum_{p=1}^{M}\!\sum_{q\ne p}\! \mathbb{E}\!\{\![\regElementwiseQestimate^{-1}]_{pp}\!\}\!\mathbb{E}\!\{\![\regElementwiseQestimate^{-1}]_{qq}\!\}\![\arbSqrMatrix_1]_{pq}[\arbSqrMatrix_2]_{qp}
	+ \sum_{p=1}^{M}\mathbb{E}\!\{\![\regElementwiseQestimate^{-1}]_{pp}^2\!\}\![\arbSqrMatrix_1]_{pp}[\arbSqrMatrix_2]_{pp}\\
	&=  \trace\left(\expectationMatrixFirstOrder\arbSqrMatrix_1\expectationMatrixFirstOrder\arbSqrMatrix_2\right) +  \trace\left((\expectationMatrixSecondOrder-\expectationMatrixFirstOrder^2)\arbSqrMatrix_{1d}\arbSqrMatrix_{2d}\right)
	\end{align*}
	where  $\arbSqrMatrix_{1d} \triangleq \diag(\arbSqrMatrix_1)$ and $\arbSqrMatrix_{2d} \triangleq \diag(\arbSqrMatrix_2)$. In what follows, proof of \eqref{abstrace_YAReg} is presented
	\begin{align*}
	\mathbb{E}\{|\trace(\regElementwiseQestimate^{-1}\arbSqrMatrix)|^2\} &= \sum_{p=1}^{M}\sum_{q\ne p} \mathbb{E}\{[\regElementwiseQestimate^{-1}]_{pp}\}\mathbb{E}\{[\regElementwiseQestimate^{-1}]_{qq}\}[\arbSqrMatrix]_{pp}[\arbSqrMatrix]^{*}_{qq} + \sum_{p=1}^{M} \mathbb{E}\{[\regElementwiseQestimate^{-1}]_{pp}^2\} \abs{[\arbSqrMatrix]_{pp}}^2\\
	&= |\trace(\expectationMatrixFirstOrder \arbSqrMatrix)|^2 + \trace\left((\expectationMatrixSecondOrder-\expectationMatrixFirstOrder^2)\arbSqrMatrix^H_d \arbSqrMatrix_d\right)
	\end{align*}
	where $\arbSqrMatrix_d \triangleq \diag(\arbSqrMatrix)$.

	\ifCLASSOPTIONcaptionsoff
	\newpage
	\fi
	
	\bibliographystyle{IEEEtran}
	\bibliography{./Template}


	
	%
	%
\end{document}